\numberwithin{equation}{section}
\numberwithin{figure}{section}
\tikzset{
  on each segment/.style={
    decorate,
    decoration={
      show path construction,
      moveto code={},
      lineto code={
        \path [#1]
        (\tikzinputsegmentfirst) -- (\tikzinputsegmentlast);
      },
      curveto code={
        \path [#1] (\tikzinputsegmentfirst)
        .. controls
        (\tikzinputsegmentsupporta) and (\tikzinputsegmentsupportb)
        ..
        (\tikzinputsegmentlast);
      },
      closepath code={
        \path [#1]
        (\tikzinputsegmentfirst) -- (\tikzinputsegmentlast);
      },
    },
  },
  mid arrow/.style={postaction={decorate,decoration={
        markings,
        mark=at position .7 with {\arrow[#1]{stealth}}
      }}},
}
\def\oversortoftilde#1{\mathop{\vbox{\m@th\ialign{##\crcr\noalign{\kern3\p@}%
				\sortoftildefill\crcr\noalign{\kern3\p@\nointerlineskip}%
				$\hfil\displaystyle{#1}\hfil$\crcr}}}\limits}
\def\sortoftildefill{$\m@th \setbox\z@\hbox{$-$}%
	\braceld\leaders\vrule \@height\ht\z@ \@depth\z@\hfill\braceru$}
\theoremstyle{plain}
\newtheorem*{thm*}{Theorem}
\newtheorem{thm}{Theorem}[section]
\newtheorem{lem}[thm]{Lemma}
\theoremstyle{definition}
\newtheorem{defn}[thm]{Definition}
\newtheorem*{defn*}{Definition}
\newtheorem*{thm:maintheoremCstar}{Theorem \ref{thm:maintheoremCstar}}
\crefname{lemma}{lemma}{lemmas}
\Crefname{lemma}{Lemma}{Lemmas}
\crefname{thm}{theorem}{theorems}
\Crefname{thm}{Theorem}{Theorems}
\crefname{defn}{definition}{definitions}
\Crefname{defn}{Definition}{Definitions}
\DeclarePairedDelimiterX{\abs}[1]{\lvert}{\rvert}{\ifblank{#1}{{}\cdot{}}{#1}}
\newcommand{\calo}{\mathcal{O}}
\newcommand{\calp}{\mathcal{P}}
\newcommand{\cals}{\mathcal{S}}
\newcommand{\calh}{\mathcal{H}}
\newcommand{\calb}{\mathcal{B}}
\begin{document}

\begin{titlepage}
\vspace*{-3cm} 
\begin{flushright}
{\tt CALT-TH-2020-020}\\
\end{flushright}
\begin{center}
\vspace{2.5cm}
{\Large\bfseries Thermal states are vital: Entanglement Wedge Reconstruction from Operator-Pushing \\  }
\vspace{2cm}
{\large
Elliott Gesteau$^{1,2}$ and Monica Jinwoo Kang$^3$\\}
\vspace{.6cm}
{ $^1$ Département de Mathématiques et Applications, Ecole Normale Supérieure}\par\vspace{-.3cm}
{Paris, 75005, France}\par
{ $^2$ Perimeter Institute for Theoretical Physics}\par\vspace{-.3cm}
{Waterloo, Ontario N2L 2Y5, Canada}\par
\vspace{.2cm}
{ $^3$ Walter Burke Institute for Theoretical Physics, California Institute of Technology}\par\vspace{-.3cm}
{  Pasadena, CA 91125, U.S.A.}\par
\vspace{.6cm}

\scalebox{.95}{\tt  egesteau@perimeterinstitute.ca, monica@caltech.edu}\par
\vspace{2cm}
{\bf{Abstract}}\\
\end{center}
{We give a general construction of a setup that verifies bulk reconstruction, conservation of relative entropies, and equality of modular flows between the bulk and the boundary, for infinite-dimensional systems with operator-pushing. In our setup, a bulk-to-boundary map is defined at the level of the $C^*$-algebras of state-independent observables. We then show that if the boundary dynamics allow for the existence of a KMS state, physically relevant Hilbert spaces and von Neumann algebras can be constructed directly from our framework. Our construction should be seen as a state-dependent construction of the other side of a wormhole and clarifies the meaning of black hole reconstruction claims such as the Papadodimas-Raju proposal. As an illustration, we apply our result to construct a wormhole based on the HaPPY code, which satisfies all properties of entanglement wedge reconstruction.}
\\
\vfill 
\end{titlepage}

\tableofcontents
\newpage
\section{Introduction}

The AdS/CFT correspondence \cite{Maldacena:1997re} relates type IIB superstring theory in the bulk of an asymptotically AdS spacetime to a conformal field theory on its boundary, and is probably our best understood theory of quantum gravity. In this context, the bulk and boundary theories are both expected to be described in terms of the algebras of local observables of the CFT. More precisely, local operators of the emerging $d$-dimensional bulk theory can be expressed as operators of a $(d-1)$-dimensional boundary CFT smeared over the entire spatial slice or compact spatial subregions \cite{Hamilton:2006az,Hamilton2005}. Much progress has been made in recent years on understanding the semiclassical limit of AdS/CFT with the framework of Quantum Error Correction \cite{Almheiri:2014lwa}. Indeed, local bulk operators on a fixed geometry can be represented on the boundary in many different ways, and this redundancy makes it natural to view the space of effective field theory states on a fixed geometry in the bulk as a code subspace of the physical Hilbert space of states.

Quantum Error Correction has shed light on deep connections between bulk reconstruction, (relative) entanglement entropy, which can be viewed as an information-theoretic quantity in the scope of quantum error correction, and the bulk geometry via the Ryu-Takayanagi formula \cite{RT2006,Jafferis:2015del,albion,Harlow:2018fse,DongHarlowWall}. For finite-dimensional Hilbert spaces, a rigorous synthetic statement was proven in \cite{Harlow:2016vwg}, and establishes the equivalence between bulk reconstruction, the Ryu-Takayanagi formula, and the equivalence between bulk and boundary relative entropies. This statement is very well-suited to describe some finite-dimensional toy models of AdS/CFT like finite tensor networks. In particular, the HaPPY code has proven to be a very successful tensor network model of holographic quantum field theories \cite{Pastawski:2015qua}.

However, in a more realistic setting, we expect the bulk and boundary Hilbert spaces to be infinite-dimensional.\footnote{This claim can be justified in several ways. First, the Reeh-Schlieder theorem, which is at the foundation of Quantum Field Theory, can only be satisfied in infinite dimensions. Furthermore, some important physical phenomena, like spontaneous symmetry breaking, can only be accounted for in infinite dimensions. Finally, a very recent breakthrough \cite{Ji:2019} shows that not all quantum correlations can be reproduced by finite-dimensional systems.} 
In \cite{Kang:2018xqy,Kang:2019dfi}, the connection between entanglement wedge reconstruction and relative entropy equivalence between the bulk and the boundary was extended to infinite-dimensional Hilbert spaces with (infinite-dimensional) von Neumann algebras as operator algebras acting on them. Under conditions that amount to a bulk equivalent of the Reeh--Schlieder theorem, \cite{Kang:2018xqy} showed the equivalence between bulk reconstruction and the equality of bulk and boundary relative entropies. The Ryu-Takayanagi formula stays out of reach, as it must rely on regulating schemes in the infinite-dimensional case. This result, which we will study in detail, works well for simple toy models such as the one studied in \cite{Kang:2019dfi}, but has the disadvantage of being formulated with a bulk reconstruction at the level of the \textit{states}, rather than the \textit{operators}. This formulation is not enough to study more realistic tensor network models relevant to holography (such as the HaPPY code), which are much better-suited to {\em operator pushing} than {\em state pushing}. {\em Operator pushing} means that operators acting on the bulk algebra are directly mapped to boundary operators, whereas {\em state pushing} would map bulk states to boundary states. If one wants to consider infinite-dimensional counterparts of such codes, it seems necessary to formulate an infinite-dimensional statement directly at the level of the operators.

Mathematically, this means that instead of looking at von Neumann algebras of operators acting on explicit Hilbert spaces, one will instead approach with a broader notion of $C^*$-algebras, which allows one to formulate bulk reconstruction directly without any reference to a Hilbert space.\footnote{It is possible to define von Neumann algebras without the specification of Hilbert spaces. In this case, where von Neumann algebras are referred to as $W^*$-algebras, they are defined as $C^*$-algebras which possess a \textit{predual}. However, this characterization is equivalent to asking that a $C^*$-algebra already defined on a Hilbert space is its own bicommutant, and therefore captures more of the Hilbert space structure.} 
In this paper, we prove that if bulk reconstruction is satisfied at the level of the operators while the dynamics of a physical system satisfy some conditions relevant to a physical setting, then it is possible to construct physically relevant Hilbert spaces with a bulk-to-boundary mapping that satisfies bulk reconstruction out of thermal vacua.\footnote{The notion of thermality will be encoded in the KMS condition throughout the paper. A definition of the KMS condition is given in Section \ref{sec:AQFT}.} This theorem will be referred to as Theorem \ref{thm:maintheoremCstar}.
\bigskip
\begin{thm}
\label{thm:maintheoremCstar}
Let $\mathcal{A}_{code}$ and $\mathcal{A}_{phys}$ be two $C^*$-algebras, and let $\iota:\mathcal{A}_{code}\longrightarrow \mathcal{A}_{phys}$ be an isometric $C^*$-homomorphism. Let $\sigma_t$ be a strongly continuous one-parameter group of isometries of $\mathcal{A}_{phys}$ such that $\sigma_t(\iota(\mathcal{A}_{code}))\subset\iota(\mathcal{A}_{code})$, and $\omega$ be a KMS state on $\mathcal{A}_{phys}$ with respect to $\sigma_t$ at inverse temperature $\beta$. Then there exist Hilbert space representations $(\pi_\omega^{phys}, \mathcal{H}_{phys})$ and $(\pi_\omega^{code}, \mathcal{H}_{code})$ of $\mathcal{A}_{phys}$ and $\mathcal{A}_{code}$ such that:
\begin{enumerate}
\item there exists a Hilbert space isometry $u:\mathcal{H}_{code}\longrightarrow\mathcal{H}_{phys}$ such that
$$\forall A\in\mathcal{A}_{code},\quad \pi_\omega^{phys}(\iota(A))u=u\pi_\omega^{code}(A).$$

\item there exists a vector $\ket{\Omega}_{code}\in\mathcal{H}_{code}$ and a vector $\ket{\Omega}_{phys}\in\mathcal{H}_{phys}$ such that 
\begin{align*}
\forall A\in\mathcal{A}_{phys},\quad\omega(A)&=\bra{\Omega_{phys}}\pi_\omega^{phys}(A)\ket{\Omega_{phys}},\\
\forall A\in\mathcal{A}_{code},\quad\omega(\iota(A))&=\bra{\Omega_{code}}\pi_\omega^{code}(A)\ket{\Omega_{code}}.
\end{align*}

\item if $M_{code}=\pi_{\omega}^{code}(\mathcal{A}_{code})''$ and $M_{phys}=\pi_{\omega}^{phys}(\mathcal{A}_{phys})''$\footnote{We denote a double commutant of $M$ by $M''$.}, then $\ket{\Omega_{code}}$ is cyclic and separating with respect to $M_{code}$ and $\ket{\Omega_{phys}}$ is cyclic and separating with respect to $M_{phys}$. Moreover,\begin{align} \nonumber
\begin{split}
\forall \calo \in M_{code}\ \forall \calo^\prime \in M_{code}^\prime, \quad 
\exists\tilde{\calo} \in M_{phys}\ \exists \tilde{\calo}^\prime \in M_{phys}^\prime\quad \text{such that}\quad\\
\forall \ket{\Theta} \in \calh_{code} \quad 
\begin{cases}
u \calo \ket{\Theta} =  \tilde{\calo} u \ket{\Theta}, \quad
&u \calo^\prime \ket{\Theta} =  \tilde{\calo}^\prime u \ket{\Theta}, \\
u \calo^\dagger \ket{\Theta} =  \tilde{\calo}^\dagger u \ket{\Theta}, \quad
&u \calo^{\prime \dagger} \ket{\Theta} = \tilde{\calo}^{\prime\dagger} u\ket{\Theta}.
\end{cases}\quad
\end{split}
\end{align}

\item if $\ket{\Phi}$ and $\ket{\Psi}$ are two vectors in $\mathcal{H}_{code}$ with $\ket{\Psi}$ cyclic and separating with respect to $M_{code}$, then $u\ket{\Psi}$ is cyclic and separating with respect to $M_{phys}$ and the equality of the relative entropy holds:
$$\mathcal{S}_{\Psi|\Phi}(M_{code})=\mathcal{S}_{u\Psi|u\Phi}(M_{phys}).$$

\item if $\Delta_{\Omega_{phys}}$ is the modular operator of $\ket{\Omega}_{phys}$ with respect to $M_{phys}$ and $\Delta_{\Omega_{code}}$ is the modular operator of $\ket{\Omega}_{code}$ with respect to $M_{code}$, then 
\begin{align*}
\forall A\in\mathcal{A}_{phys},\quad\pi_{\omega}^{phys}(\sigma_t(A))&=\Delta_{\Omega_{phys}}^{-\frac{it}{\beta}}\pi_{\omega}^{phys}(A)\Delta_{\Omega_{phys}}^\frac{it}{\beta},\\
\forall A\in\mathcal{A}_{code},\quad\pi_{\omega}^{phys}(\sigma_t(\iota(A)))u&=u\Delta_{\Omega_{code}}^{-\frac{it}{\beta}}\pi_{\omega}^{code}(A)\Delta_{\Omega_{code}}^{\frac{it}{\beta}}.
\end{align*}
\end{enumerate}
\end{thm}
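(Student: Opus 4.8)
The plan is to build everything from the Gelfand--Naimark--Segal (GNS) construction applied to the KMS state $\omega$ on $\mathcal{A}_{phys}$. First I would form $(\pi_\omega^{phys},\mathcal{H}_{phys},\ket{\Omega_{phys}})$ as the GNS triple of $\omega$; cyclicity of $\ket{\Omega_{phys}}$ is automatic, and the KMS condition guarantees that $\ket{\Omega_{phys}}$ is also separating for $M_{phys}=\pi_\omega^{phys}(\mathcal{A}_{phys})''$ (this is the standard Tomita--Takesaki input supplied by the KMS property). For the code side, I would pull back $\omega$ along the isometric $C^*$-homomorphism $\iota$ to get a state $\omega\circ\iota$ on $\mathcal{A}_{code}$, and take $(\pi_\omega^{code},\mathcal{H}_{code},\ket{\Omega_{code}})$ to be \emph{its} GNS triple. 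The isometry $u:\mathcal{H}_{code}\to\mathcal{H}_{phys}$ is then the canonical map sending the GNS vector $\pi_\omega^{code}(A)\ket{\Omega_{code}}$ to $\pi_\omega^{phys}(\iota(A))\ket{\Omega_{phys}}$; it is well-defined and isometric precisely because $\omega\circ\iota(A^*A)=\omega(\iota(A)^*\iota(A))$, and it manifestly intertwines the representations, giving item (1), while also sending $\ket{\Omega_{code}}$ to $\ket{\Omega_{phys}}$ and hence yielding item (2).

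Next, for item (3), the key point is that $u$ conjugates $M_{code}$ into $M_{phys}$ and $M_{code}'$ into $M_{phys}'$. I would argue this by noting that $uu^*$ is the orthogonal projection onto $u\mathcal{H}_{code}\subset\mathcal{H}_{phys}$, that this projection lies in $M_{phys}'$ (since $u\mathcal{H}_{code}$ is invariant under $\pi_\omega^{phys}(\iota(\mathcal{A}_{code}))$, hence under its bicommutant after intertwining), and then defining for $\calo\in M_{code}$ the operator $\tilde{\calo}$ by the compression/extension $u\calo u^*$ suitably interpreted --- more precisely, one shows $\calo\mapsto u\calo u^*$ is a normal $*$-homomorphism $M_{code}\to M_{phys}$ and similarly for the commutants, using that $u$ intertwines. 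The four displayed identities with $\calo,\calo^\dagger,\calo',\calo'^\dagger$ then follow from $u^*u=1$ together with $*$-preservation. Item (4) is the relative-entropy statement: since $u\ket{\Psi}$ is cyclic and separating for $M_{phys}$ once $\ket{\Psi}$ is cyclic and separating for $M_{code}$ (by the intertwining and the projection argument), I would invoke the monotonicity/invariance of Araki relative entropy under the unital normal inclusion induced by $u$ --- the intertwining makes $u$ implement an isomorphism onto a subalgebra with a compatible conditional-expectation-like structure, forcing equality rather than mere inequality. This is likely to lean on a lemma of the type ``relative entropy is preserved under an isometry intertwining the GNS data,'' which I expect is established earlier in the paper.

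Finally, item (5) is where the KMS hypothesis does its essential work. On the physics side, the fundamental theorem identifying the modular automorphism group with the KMS dynamics says exactly that $\sigma_t^\omega(A)=\Delta_{\Omega_{phys}}^{it}\pi_\omega^{phys}(A)\Delta_{\Omega_{phys}}^{-it}$ when $\omega$ is KMS at $\beta=1$; restoring $\beta$ and matching sign conventions gives $\pi_\omega^{phys}(\sigma_t(A))=\Delta_{\Omega_{phys}}^{-it/\beta}\pi_\omega^{phys}(A)\Delta_{\Omega_{phys}}^{it/\beta}$. For the code side, the hypothesis $\sigma_t(\iota(\mathcal{A}_{code}))\subset\iota(\mathcal{A}_{code})$ ensures $\sigma_t$ restricts to a one-parameter group on $\iota(\mathcal{A}_{code})$, and $\omega\circ\iota$ is then KMS for this restricted dynamics at the same $\beta$; so the same theorem gives $\pi_\omega^{code}(\sigma_t^{code}(A))=\Delta_{\Omega_{code}}^{-it/\beta}\pi_\omega^{code}(A)\Delta_{\Omega_{code}}^{it/\beta}$, and composing with the intertwining relation $\pi_\omega^{phys}(\iota(\cdot))u=u\pi_\omega^{code}(\cdot)$ (applied to $\sigma_t(\iota(A))=\iota(\sigma_t^{code}(A))$) yields the second formula.

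The main obstacle, I expect, is item (4): getting \emph{equality} of relative entropies rather than just the inequality that monotonicity alone provides. The subtlety is that $u$ need not come from a conditional expectation $M_{phys}\to uM_{code}u^*$ in general, so one must use the specific structure of the GNS construction --- in particular that $u$ intertwines modular data in a controlled way via the shared KMS state $\omega$ --- to run the inequality in both directions. Establishing that $u\ket{\Psi}$ is separating for $M_{phys}$ for \emph{every} cyclic-separating $\ket{\Psi}\in\mathcal{H}_{code}$, not just for $\ket{\Omega_{code}}$, is the technical crux here, and I anticipate it requires showing the projection $uu^*$ together with the intertwining forces $\pi_\omega^{phys}(\iota(\mathcal{A}_{code}))''=uM_{code}u^*$ (not merely $\subseteq$), which is the step most likely to need a separately proved lemma from earlier in the paper.
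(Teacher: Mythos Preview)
Your construction of the GNS triples, the isometry $u$, and the intertwining (items 1 and 2) matches the paper exactly, and your outline for item 5 --- identify $\sigma_t$ with the modular automorphism group via the KMS/Tomita--Takesaki correspondence, pull back to the code side, and compose with the intertwining --- is also the paper's route (the paper spends extra effort extending the KMS property from $\pi_\omega^{phys}(\mathcal{A}_{phys})$ to $M_{phys}$ via Kaplansky's density theorem and Hadamard's three-line theorem before invoking uniqueness in Tomita--Takesaki, but the architecture is the same).

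There is, however, a genuine gap in your argument for item 3. You claim that $uu^*\in M_{phys}'$ because $u\mathcal{H}_{code}$ is invariant under $\pi_\omega^{phys}(\iota(\mathcal{A}_{code}))$ and hence under its bicommutant. But the bicommutant of $\pi_\omega^{phys}(\iota(\mathcal{A}_{code}))$ is \emph{not} $M_{phys}=\pi_\omega^{phys}(\mathcal{A}_{phys})''$; it is a generally smaller von Neumann algebra. So your reasoning only places $uu^*$ in the commutant of that smaller algebra, not in $M_{phys}'$. Consequently the compression $u\calo u^*$ need not lie in $M_{phys}$: already for $\calo=\pi_\omega^{code}(A)$ one has $u\calo u^*=\pi_\omega^{phys}(\iota(A))\,uu^*$, which is a product of an element of $M_{phys}$ with a projection that belongs to neither $M_{phys}$ nor $M_{phys}'$ in general. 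The map $\calo\mapsto u\calo u^*$ therefore does not furnish the required $\tilde{\calo}\in M_{phys}$.

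The paper circumvents this by a different mechanism: given $\calo\in M_{code}$, Kaplansky's density theorem produces a uniformly bounded net $\pi_\omega^{code}(\calo_\alpha)\to\calo$ strongly with $\calo_\alpha\in\mathcal{A}_{code}$; then $\pi_\omega^{phys}(\iota(\calo_\alpha))$ is a bounded net in $M_{phys}$, and Banach--Alaoglu plus weak closedness of $M_{phys}$ yield a weak limit point $\tilde{\calo}\in M_{phys}$ to which the intertwining relation passes. For the commutants the paper does not use a compression argument either: it first establishes the lemma $J_{phys}u=uJ_{code}$ (from the modular-operator intertwining $u\Delta_{code}=\Delta_{phys}u$ obtained once bulk reconstruction for $M_{code}$ is in hand), and then sets $\tilde{\calo}'=J_{phys}\tilde{\calo}J_{phys}\in M_{phys}'$ for $\calo'=J_{code}\calo J_{code}$. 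Your projection/compression route does not recover this, and your last paragraph correctly senses that the missing ingredient is control of $uu^*$ --- but the resolution is not to prove $uu^*\in M_{phys}'$ (which fails) but to avoid $uu^*$ altogether via the Kaplansky/Banach--Alaoglu argument and the modular-conjugation lemma. Once item 3 is obtained this way, item 4 follows by invoking Theorem~\ref{thm:kangkolchloose} (the Kang--Kolchmeyer result), as you anticipated.
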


The idea of the proof of Theorem \ref{thm:maintheoremCstar} is to show that our conditions satisfy the hypotheses of the theorem in \cite{Kang:2018xqy}, thereby guaranteeing bulk reconstruction at the Hilbert space level, and allowing us to conclude that relative entropies within these Hilbert spaces will be conserved. We also prove that modular evolution will also be conserved. 

Our construction crucially depends on the existence of expectation value functionals on the boundary algebra satisfying the KMS condition, which encodes the notion of thermal equilibrium. We use these functionals as the building blocks of our Hilbert spaces, whose elements should be seen as excitations of a thermal bath. We utilize the Gelfand--Naimark--Segal (GNS) representation to build the Hilbert spaces; in the case of thermal states in AdS/CFT, the GNS representation reduces to the thermofield double (TFD) construction. Therefore, one could see our construction as a general machinery to construct the other side of an AdS/CFT wormhole for a system with operator pushing. 

As expected in the full quantum gravity regime, this construction is state-dependent, in the sense that it entirely relies on our choice of a thermal expectation value functional. In fact, it has clear links with the Papadodimas-Raju proposal in \cite{Papadodimas:2013wnh,Papadodimas:2013jku,Papadodimas:2012aq}: the commutant of the von Neumann algebra of boundary observables is calculated in the same way, and corresponds to the von Neumann algebra of observables on the other boundary.

Theorem \ref{thm:maintheoremCstar} is introduced with holographic tensor networks such as the HaPPY code in mind. Therefore, we apply our construction to the HaPPY code as an example and show that the infinite dimensional HaPPY code is compatible with entanglement wedge reconstruction and relative entropy equivalence between the bulk and the boundary. We summarize the dictionary between our abstract theorem and the HaPPY code in Table \ref{tb:OAandHaPPY}.
\begin{table}[H]
\begin{center}
\arraycolsep=3pt\def\arraystretch{1.4}
\begin{tabular}{|c|c|}
\hline
Operator Algebras & Tensor Networks (cf. HaPPY code)\\
\hline
\hline
bulk  $C^*$-algebra $(\mathcal{A}_{code})$ & local operators on the bulk nodes \\
\hline
boundary  $C^*$-algebra $(\mathcal{A}_{phys})$ & local operators on the boundary nodes \\
\hline
$C^*$ isometry $(\iota)$ & operator pushing map \\
\hline
\begin{tabular}{c}
a strongly-continuous 1-parameter group \\[-7pt]
of isometries of $\mathcal{A}_{phys}$
\end{tabular}$(\sigma_t)$ & trapeze Hamiltonian evolution \\
\hline
a KMS state $(\omega)$ & a thermal vacuum \\
\hline
commutant of the bulk vN algebra $(\mathcal{M}^\prime_{code})$ & the other side of the wormhole\\
\hline
commutant of the boundary vN algebra $(\mathcal{M}^\prime_{phys})$ & the other boundary \\
\hline
\end{tabular}
\end{center}
\caption{The dictionary between the operator algebra, used in the formalism constructed with Theorem \ref{thm:maintheoremCstar}, and holographic tensor networks such as the infinite-dimensional analog of the HaPPY code.}
\label{tb:OAandHaPPY}
\end{table}

An outline of our proof of Theorem \ref{thm:maintheoremCstar} is the following. 
\begin{itemize}
\item We construct the GNS representations of $\mathcal{A}_{code}$ and $\mathcal{A}_{phys}$ with respect to a KMS (thermal equilibrium) state, and prove that the map $\iota$ induces a Hilbert space isometry between them. This provides the first and second point of Theorem \ref{thm:maintheoremCstar}.
\item We carefully check that the hypotheses of the theorem in \cite{Kang:2018xqy} are verified by the induced von Neumann algebras and their commutants in order to prove the third and fourth points of Theorem \ref{thm:maintheoremCstar}. In particular, we use Kaplansky's density theorem \cite{Kaplansky} and the Banach-Alaoglu theorem \cite{Banach,Alaoglu} to show bulk reconstruction, and Tomita--Takesaki theory to extend it to the commutant von Neumann algebras. We also use the fact that KMS states are cyclic and separating for both $C^*$-algebras and von Neumann algebras in their GNS representations.
\item We use Kaplansky's density theorem and the uniqueness part of the Tomita--Takesaki theorem \cite{Tomita-Takesaki} to prove that the modular flow of the vector representative of our KMS state coincides with the extension of the initial $C^*$-algebra flow to the whole von Neumann algebra. This allows us to prove the fifth and last point of Theorem \ref{thm:maintheoremCstar}.
\end{itemize}

The rest of this paper is organized as follows. In Section \ref{sec:Hilbert}, we highlight the differences between $C^*$-algebras and von Neumann algebras and discuss what they mean in the physical context in terms of state-dependence. Then, in Section \ref{sec:AQFT}, we introduce necessary notions relevant to Theorem \ref{thm:maintheoremCstar} borrowed from algebraic quantum field theory, with a particular emphasis on state (in)dependence. In Section \ref{sec:EWreconstruction}, we summarize the proven results on infinite-dimensional entanglement wedge reconstruction in \cite{Kang:2018xqy} and discuss their applicability in our context. With this in mind, we then state and prove Theorem \ref{thm:maintheoremCstar} in Section \ref{sec:main} and discuss its physical implications. For the special cases when the Hilbert spaces are separable, we discuss its simplifications in Section \ref{sec:separableH} and the validity of using the Theorem in \cite{Kang:2018xqy} in our more general context. We exhibit the intricate link between our findings and both the thermofield double construction and the Papadodimas-Raju proposal in Section \ref{sec:TFD}, and highlight the importance of the state-dependence. We illustrate the physical significance of our construction by using our theorem to show that bulk relative entropy equals boundary relative entropy in an infinite-dimensional wormhole analogue of the HaPPY code in Section \ref{sec:HaPPY}. We discuss the physical implications of our result in Section \ref{sec:discussion} and give possible applications, including links with the firewall paradox, superselection theory and a potentially more state-independent way of defining relative entropy.

\section{Analysis in infinite dimensions} \label{sec:Hilbert}

Our setup will make a heavy use of infinite-dimensional operator theory. Here, we provide a lightning review of the essential operator-algebraic concepts underlying our constructions, such as topologies on functional spaces, $C^*$-algebras, and von Neumann algebras.

\subsection{Topolgies on $\mathcal{B}(\mathcal{H})$}
Let $\mathcal{B}(\mathcal{H})$ denote the algebra of bounded operators on $\mathcal{H}$. Then, a topology on $\mathcal{B}(\mathcal{H})$ is a family of subsets of $\calb(\calh)$, which are by definition open. This family must contain both the empty set $\emptyset$ and $\mathcal{B}(\mathcal{H})$ itself, and be stable under finite intersections and arbitrary unions. There exist various notions for a topology on $\mathcal{B}(\mathcal{H})$ and we list the ones necessary for the purpose of this paper in \Cref{def:NormTopo,def:SOT,def:WOT} following the notation of Jones \cite{Jones-vNalg} closely. In this section, $\calo$ is an operator in $\calb(\calh)$ and $\ket{\xi_i}, \ket{\eta_i}$ are states in $\calh$.

\bigskip
\begin{defn} \label{def:NormTopo}
The \emph{norm (or uniform) topology} is induced by the operator norm $||\calo||$. It is the smallest topology that contains the following basic neighborhoods
\begin{equation*} 
\mathcal{N}(\calo,\epsilon) = \{\calp \in \calb(\calh) : ||\calp - \calo|| < \epsilon  \} .
\end{equation*}
\end{defn}
\bigskip
\begin{defn} \label{def:SOT}
The \emph{strong operator topology} is the smallest topology that contains the following basic neighborhoods
\begin{equation*}
\mathcal{N}(\calo,\ket{\xi_1},\ket{\xi_2},\ldots,\ket{\xi_n},\epsilon) = \{\calp \in \calb(\calh) : ||(\calp - \calo)\ket{\xi_i}|| < \epsilon \quad \forall i \in \{ 1,2,\cdots,n\} \} .
\end{equation*}
\end{defn}
The strong operator topology is the weakest topology on $\calb(\calh)$ mapping $E_\psi : \calb(\calh) \rightarrow \calh$ where $E_\psi(\calo) = \calo\ket{\psi}$ is continuous for all $\ket{\psi} \in \calh$. A sequence of bounded operators $\{\calo_n\}$ converges strongly if $\lim_{n \rightarrow \infty} \calo_n \ket{\psi}$ converges for all $\ket{\psi} \in \calh$.
\bigskip
\begin{defn}  \label{def:WOT}
The \emph{weak operator topology} is the smallest topology that contains the following basic neighborhoods
\begin{equation*}
\mathcal{N}(\calo,\ket{\xi_1},\ldots,\ket{\xi_n},\ket{\eta_1},\ldots,\ket{\eta_n},\epsilon) = \{\calp \in \calb(\calh): |\braket{\eta_i|(\calp - \calo)|\xi_i}| < \epsilon \quad \forall i \in \{1,2,\cdots,n\} \} .
\label{def:weakoperatortopology}
\end{equation*}
\end{defn}
The weak operator topology is then the weakest topology with a map $E_{\psi,\ell} : \calb(\calh) \rightarrow \mathbb{C}$ given by $E_{\psi,\ell}(\calo) = \ell(\calo \ket{\psi})$ that is continuous for all $\ket{\psi} \in \calh$ where $\ell \in \calh^*$ and $\calh^*$ is the dual of $\calh$. A sequence of bounded operators $\{\calo_n\}$ converges weakly if $\lim_{n \rightarrow \infty} \braket{\chi|\calo_n|\psi}$ converges for all $\ket{\chi},\ket{\psi} \in \calh$.

Moreover, we cite the following theorem of Banach and Alaoglu, which will be useful in the proof of Theorem \ref{thm:maintheoremCstar}.
\bigskip
\begin{thm}[Banach-Alaoglu \cite{Banach,Alaoglu}]
The unit ball of $\mathcal{B}(\mathcal{H})$ is compact for the weak operator topology.
\end{thm}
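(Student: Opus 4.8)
The plan is to realize the closed unit ball $\calb_1 := \{\calo \in \calb(\calh) : ||\calo|| \le 1\}$ as a homeomorphic copy of a closed subset of a compact product space, and then to conclude by Tychonoff's theorem. To each $\calo \in \calb_1$ I attach the complete array of its matrix elements $\braket{\eta|\calo|\xi}$, indexed by all pairs $(\ket{\xi},\ket{\eta}) \in \calh \times \calh$. The Cauchy--Schwarz inequality gives $|\braket{\eta|\calo|\xi}| \le ||\calo||\,\lVert\ket{\xi}\rVert\,\lVert\ket{\eta}\rVert \le \lVert\ket{\xi}\rVert\,\lVert\ket{\eta}\rVert$, so every matrix element lands in the closed disk $D_{\xi,\eta} := \{z \in \mathbb{C} : |z| \le \lVert\ket{\xi}\rVert\,\lVert\ket{\eta}\rVert\}$. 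This produces a map
$$\Phi : \calb_1 \longrightarrow \prod_{(\ket{\xi},\ket{\eta})} D_{\xi,\eta}, \qquad \Phi(\calo) := \big(\braket{\eta|\calo|\xi}\big)_{(\ket{\xi},\ket{\eta})},$$
which is injective because an operator is completely determined by its matrix elements.

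The first (routine) step is to check that $\Phi$ is a homeomorphism onto its image when the product carries the product topology. This is immediate from matching definitions: a basic weak-operator neighborhood of $\calo$ is carved out by finitely many inequalities $|\braket{\eta_i|(\calp-\calo)|\xi_i}| < \epsilon$, which is precisely a basic product-topology neighborhood of $\Phi(\calo)$ constraining the finitely many coordinates indexed by $(\ket{\xi_i},\ket{\eta_i})$. Each disk $D_{\xi,\eta}$ is compact, so by Tychonoff's theorem the full product is compact. Since a closed subspace of a compact space is compact, and compactness is preserved under the homeomorphism $\Phi$, it now suffices to prove that the image $\Phi(\calb_1)$ is closed in the product.

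Establishing this closedness is the heart of the matter and the step I expect to be the main obstacle. I would take an arbitrary net $(\calo_\alpha)$ in $\calb_1$ with $\Phi(\calo_\alpha)$ converging in the product to a point $(c_{\xi,\eta})$; convergence in the product topology means precisely that $\braket{\eta|\calo_\alpha|\xi} \to c_{\xi,\eta}$ for every pair $(\ket{\xi},\ket{\eta})$. The goal is to exhibit $\calo \in \calb_1$ with $\Phi(\calo) = (c_{\xi,\eta})$. Define $B(\ket{\eta},\ket{\xi}) := c_{\xi,\eta}$. Each approximant $\braket{\eta|\calo_\alpha|\xi}$ is linear in $\ket{\xi}$ and antilinear in $\ket{\eta}$, and these linear identities are preserved in the limit, so $B$ is a sesquilinear form; moreover $c_{\xi,\eta} \in D_{\xi,\eta}$ forces $|B(\ket{\eta},\ket{\xi})| \le \lVert\ket{\xi}\rVert\,\lVert\ket{\eta}\rVert$, so $B$ is bounded with norm at most one.

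To finish, I would invoke the Riesz representation theorem for bounded sesquilinear forms: such a $B$ is represented by a unique bounded operator $\calo \in \calb(\calh)$ satisfying $\braket{\eta|\calo|\xi} = B(\ket{\eta},\ket{\xi})$ for all $\ket{\xi},\ket{\eta}$, with $||\calo|| \le 1$. Thus $\calo \in \calb_1$ and $\Phi(\calo) = (c_{\xi,\eta})$, proving that $\Phi(\calb_1)$ contains all its limit points and is therefore closed. Being a closed subset of the compact product, $\Phi(\calb_1)$ is compact, and transporting compactness back through the homeomorphism $\Phi$ shows that $\calb_1$ is compact in the weak operator topology, as claimed.
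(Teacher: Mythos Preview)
Your argument is correct and is the standard textbook proof of this result: embed the unit ball into a product of compact disks via matrix elements, check that the weak operator topology matches the product topology on the image, invoke Tychonoff, and verify closedness of the image by passing to a limit sesquilinear form and applying Riesz representation. Each step is sound as written.

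There is, however, nothing to compare against: the paper does not prove this theorem. It is merely cited as a classical result (attributed to Banach and Alaoglu) and then used as a black box in the proof of Theorem~\ref{thm:maintheoremCstar}, where weak compactness of the unit ball is invoked to extract a weakly convergent subnet from a norm-bounded net of operators. So your proposal supplies a proof where the paper deliberately omits one.
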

We will also use the following theorem, which will be helpful when we discuss separability issues:
\bigskip
\begin{thm}[Takesaki \cite{Takesaki}]
If $\mathcal{H}$ is separable (i.e. $\mathcal{H}$ has a dense countable part), bounded parts of $\mathcal{B}(\mathcal{H})$ have a countable basis of open sets for the strong and weak operator topologies.
\end{thm}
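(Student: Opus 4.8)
The plan is to prove second countability of the bounded parts by exhibiting, on each norm-ball, a \emph{countable} family of generating maps for the relevant topology and then invoking the standard fact that the initial topology induced by countably many maps into second-countable target spaces is itself second countable (a countable subbasis is obtained by pulling back countable bases of the targets, and its finite intersections form a countable basis). First I would fix a countable orthonormal basis $\{\ket{e_n}\}_{n\geq 1}$ of $\calh$, which exists precisely because $\calh$ is separable, and for $R>0$ set $B_R=\{\calo\in\calb(\calh):\|\calo\|\leq R\}$. Since every bounded subset lies in some $B_R$ and a subspace of a second-countable space is second countable, it suffices to treat a single ball $B_R$.

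For the weak operator topology, I would consider the countable family of functionals $\calo\mapsto\braket{e_m|\calo|e_n}$, $m,n\geq 1$, each valued in the second-countable space $\mathbb{C}$. Each is WOT-continuous by \Cref{def:WOT}, so the initial topology $\tau_W$ they generate is coarser than the WOT on $B_R$. The crux is the reverse inclusion: I would show that on $B_R$ every WOT-subbasic functional $\calo\mapsto\braket{\eta|\calo|\xi}$ is $\tau_W$-continuous. Writing $\xi=\sum_n c_n\ket{e_n}$ and $\eta=\sum_m b_m\ket{e_m}$ gives $\braket{\eta|\calo|\xi}=\sum_{m,n}\overline{b_m}\,c_n\braket{e_m|\calo|e_n}$, whose partial sums are finite linear combinations of the generating functionals, hence $\tau_W$-continuous; the estimate $|\braket{\eta-\eta_M|\calo|\xi}+\braket{\eta_M|\calo|\xi-\xi_N}|\leq R(\|\eta-\eta_M\|\,\|\xi\|+\|\eta_M\|\,\|\xi-\xi_N\|)$ shows this convergence is uniform over $\calo\in B_R$. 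A uniform limit of continuous functions is continuous, so the WOT coincides with $\tau_W$ on $B_R$, and $\tau_W$ is second countable.

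For the strong operator topology, I would run the identical argument with the countable family $\calo\mapsto\calo\ket{e_n}$ valued in $(\calh,\|\cdot\|)$, which is second countable because $\calh$ is separable. These maps are SOT-continuous by \Cref{def:SOT}, and the reverse inclusion reduces a general $\calo\mapsto\|(\calo-\calp)\ket{\xi}\|$ to the basis vectors through $\|(\calo-\calp)\ket{\xi}\|\leq\|(\calo-\calp)\ket{\xi_N}\|+2R\,\|\xi-\xi_N\|$, again uniformly on $B_R$. Equivalently, both cases can be packaged as the statement that the embeddings $\calo\mapsto(\braket{e_m|\calo|e_n})_{m,n}$ and $\calo\mapsto(\calo\ket{e_n})_n$ are homeomorphisms of $B_R$ onto subspaces of the second-countable products $\prod_{m,n}\overline{D}(0,R)$ and $\prod_n\overline{B}_{\calh}(0,R)$ respectively.

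I expect the main obstacle to be exactly this reverse-inclusion step: verifying that the countable family of basis test vectors already recovers the full topology on $B_R$. This is where the norm bound $\|\calo\|\leq R$ is indispensable, since it is what makes the truncation of arbitrary $\xi,\eta$ uniform over the ball; without it the approximation fails, which is precisely why the theorem is stated for bounded parts rather than for all of $\calb(\calh)$.
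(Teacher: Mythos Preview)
Your argument is correct and is essentially the standard proof of this result. However, there is nothing to compare against: the paper does not supply its own proof of this theorem. It is stated as a citation from Takesaki's monograph and used later only as a technical tool to justify replacing nets by sequences when the underlying Hilbert spaces are separable. So your proposal stands on its own as a complete and valid proof where the paper offers none.
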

These topologies have some relations: norm convergence implies strong convergence, which in turn implies weak convergence. It is important to note that the weak and strong topologies are very different from the norm topology. In particular, they are not metrizable: no distance can be associated to them. As a result, their definitions involve explicitly the Hilbert space, unlike the one of the norm topology. As we will see, this fact explains the structural differences between the world of $C^*$-algebras and the one of von Neumann algebras.

\subsection{$C^*$-algebras}
As we want our framework to be well-suited to operator pushing, we need to introduce one of the two main objects of the theory of operator algebras: $C^*$-algebras. We follow notations from \cite{Takesaki} and \cite{BratteliRobinson} closely. A $C^*$-algebra can be defined as following.
\bigskip
\begin{defn}
A {\em $C^*$-algebra} is a complex algebra $\mathcal{A}$ equipped with a norm $\|.\|$ with respect to which it is complete, such that:
$$\forall A,B\in\mathcal{A},\;\|AB\|\leq\|A\|\|B\|,$$ 
and an involution $^*$ such that
\begin{align*}
\forall A,B\in\mathcal{A},\forall\lambda\in\mathbb{C},&\; (A+\lambda B)^*=A^*+\overline{\lambda}B^*,\\
\forall A\in\mathcal{A},&\; \|A^*A\|=\|A\|^2.
\end{align*}
\end{defn}
Note that in the definition of a $C^*$-algebra, no explicit mention is made of an action on a Hilbert space. The only structure needed is the norm, the algebra structure, and the involution. This makes $C^*$-algebras \textit{state-independent} objects. As such, they will be very well-adapted to study operator pushing in infinite tensor networks.

We also give a very important density theorem (this time for a $C^*$-algebra acting on a Hilbert space):
\bigskip
\begin{thm}[Kaplansky \cite{Kaplansky}]
Let $\mathcal{A}$ be a $C^*$-algebra acting on a Hilbert space, and $M$ be its strong operator closure. Then, the unit ball of $\mathcal{A}$ is dense in the unit ball of $M$ for the strong operator topology.
\end{thm}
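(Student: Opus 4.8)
The plan is to reduce everything to the self-adjoint case and then transport the approximation through the continuous functional calculus, which is the only non-formal ingredient. First I would show that the self-adjoint part $\mathcal{A}_{sa}$ is strongly dense in $M_{sa}$, momentarily ignoring norm bounds. Given $h=h^*\in M$, pick a net $a_\alpha\in\mathcal{A}$ with $a_\alpha\to h$ strongly; since the adjoint is weakly (but \emph{not} strongly) continuous, $\tfrac12(a_\alpha+a_\alpha^*)\in\mathcal{A}_{sa}$ converges \emph{weakly} to $\tfrac12(h+h^*)=h$. As $\mathcal{A}_{sa}$ is convex and the weakly and strongly continuous linear functionals on $\mathcal{B}(\mathcal{H})$ coincide (both are the finite sums $\calo\mapsto\sum_i\braket{\eta_i|\calo|\xi_i}$), a Hahn--Banach separation argument shows its weak and strong closures agree, so $\mathcal{A}_{sa}$ is in fact strongly dense in $M_{sa}$.

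Next I would restore the norm bound using the function $f(t)=\frac{2t}{1+t^2}$, which restricts to an increasing homeomorphism of $[-1,1]$ onto itself and satisfies $|f(t)|\le 1$ for all real $t$. The key point is that $x\mapsto f(x)$ is \emph{strongly} continuous on self-adjoint operators, with no uniform bound on the net required: writing $f(x)=(x-i)^{-1}+(x+i)^{-1}$ and using the resolvent identity $(x_\alpha-i)^{-1}-(x-i)^{-1}=(x_\alpha-i)^{-1}(x-x_\alpha)(x-i)^{-1}$ together with $\|(x_\alpha\mp i)^{-1}\|\le 1$, one sees that $x_\alpha\to x$ strongly forces $(x_\alpha\mp i)^{-1}\to(x\mp i)^{-1}$ strongly, hence $f(x_\alpha)\to f(x)$ strongly. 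Now take $h\in M_{sa}$ with $\|h\|\le 1$; since $\sigma(h)\subseteq[-1,1]$ the element $g(h)$ with $g=f^{-1}\colon[-1,1]\to[-1,1]$ lies in $M_{sa}$, and by the previous paragraph there is a net $b_\alpha\in\mathcal{A}_{sa}$ with $b_\alpha\to g(h)$ strongly. Because $f(0)=0$, the continuous functional calculus keeps $f(b_\alpha)$ inside the $C^*$-algebra $\mathcal{A}$, it is a contraction, and strong continuity gives $f(b_\alpha)\to f(g(h))=h$. This proves the self-adjoint version of the density statement.

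Finally I would bootstrap to an arbitrary $\calo\in M$ with $\|\calo\|\le 1$ by a $2\times 2$ matrix amplification: letting $M_2(\mathcal{A})$ act on $\mathcal{H}\oplus\mathcal{H}$, its strong closure is $M_2(M)$ (strong convergence of a matrix is entrywise), and $\widetilde{\calo}=\begin{pmatrix}0 & \calo\\ \calo^* & 0\end{pmatrix}$ is self-adjoint with $\|\widetilde{\calo}\|=\|\calo\|\le 1$. Applying the self-adjoint case to $M_2(\mathcal{A})\subseteq M_2(M)$ yields self-adjoint contractions in $M_2(\mathcal{A})$ converging strongly to $\widetilde{\calo}$; their $(1,2)$-entries are contractions in $\mathcal{A}$ converging strongly to $\calo$, which is the claim. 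The only genuine obstacle is the strong continuity of $x\mapsto f(x)$: it is essential that the approximating net need not be norm-bounded, which is precisely why one works with the bounded resolvent-type function $f$ rather than a naive truncation of the identity function; the convexity step and the matrix amplification are soft by comparison.
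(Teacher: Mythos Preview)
Your argument is correct and is in fact the classical proof of Kaplansky's density theorem: convexity plus the coincidence of weakly and strongly continuous linear functionals to get density of $\mathcal{A}_{sa}$ in $M_{sa}$, the strongly continuous function $f(t)=2t/(1+t^{2})=(t-i)^{-1}+(t+i)^{-1}$ to enforce the norm bound without assuming any a priori boundedness of the approximating net, and the $2\times 2$ matrix amplification to pass from the self-adjoint to the general case. Each step is sound; in particular your resolvent computation for the strong continuity of $x\mapsto f(x)$ and the observation that the $(1,2)$-entry of a self-adjoint contraction in $M_2(\mathcal{A})$ is itself a contraction are both correct.

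There is nothing to compare against in the paper: the theorem is stated there as a quoted result from \cite{Kaplansky} and is used as a black box in the proof of Theorem~\ref{thm:maintheoremCstar}, with no proof supplied. Your write-up therefore goes well beyond what the paper does; it reproduces the standard textbook argument (essentially as in Takesaki or Pedersen) and would serve perfectly well as a self-contained appendix proof if one were desired.
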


%

\subsection{Von Neumann algebras} \label{sec:vNalg}
Another important object to consider in our setting will be von Neumann algebras. As we shall see, it will arise naturally once our Hilbert space is constructed, and we will see it as a state-dependent object.
\bigskip
\begin{defn}
	A \emph{$\star$-algebra} is an algebra of operators that is closed under hermitian conjugation.
\end{defn}
\bigskip
\begin{thm}[\cite{Jones-vNalg}, page 12]
	Let $M$ be a $\star$-subalgebra of $\mathcal{B}(\mathcal{H})$ that contains the identity operator. Let $M^{\prime\prime}$ be the double commutant of M, i.e. the algebras of operators that commute with all operators that commute with all operators of $M$. Then $M^{\prime \prime} = \overline{M}$, where the closure is taken for the strong operator topology.
\end{thm}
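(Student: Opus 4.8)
The plan is to recognize this as the von Neumann bicommutant theorem and to prove it by establishing the two inclusions $\overline{M}\subseteq M''$ and $M''\subseteq\overline{M}$ separately (closures throughout in the strong operator topology). For the first inclusion I would argue that, for \emph{any} subset $S\subseteq\mathcal{B}(\mathcal{H})$, the commutant $S'$ is closed in the weak operator topology: if $T_\alpha\to T$ weakly with each $T_\alpha\in S'$, then for all $\xi,\eta\in\mathcal{H}$ and $B\in S$ one has $\braket{\eta|(TB-BT)|\xi}=\lim_\alpha\braket{\eta|(T_\alpha B - B T_\alpha)|\xi}=0$, so $T\in S'$. Since WOT-closed implies SOT-closed, $M''=(M')'$ is SOT-closed, and as $M\subseteq M''$ trivially, it follows that $\overline{M}\subseteq M''$.

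The content of the theorem is the reverse inclusion $M''\subseteq\overline{M}$. Fix $T\in M''$, vectors $\xi_1,\dots,\xi_n\in\mathcal{H}$, and $\epsilon>0$; the goal is to find $A\in M$ with $\|(T-A)\xi_i\|<\epsilon$ for every $i$. I would first dispose of the case $n=1$: given $\xi\in\mathcal{H}$, set $K=\overline{M\xi}$ and let $P$ be the orthogonal projection onto $K$. Since $M$ is a $*$-algebra, $K$ is $M$-invariant, and $K^\perp$ is $M$-invariant as well (this step uses $M^*=M$); hence $P$ commutes with $M$, i.e. $P\in M'$. Because $T\in M''$, $T$ commutes with $P$, so $TK\subseteq K$; and since $\mathbf 1\in M$ we have $\xi\in M\xi\subseteq K$, whence $T\xi\in K=\overline{M\xi}$ --- exactly the approximation claimed for $n=1$.

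For general $n$ I would use the amplification trick. Let $\widetilde{\mathcal{H}}=\mathcal{H}\otimes\mathbb{C}^n$ and let $\widetilde{M}=\{A\otimes\mathbf 1_n : A\in M\}$ be the diagonal amplification of $M$, again a unital $*$-algebra acting on $\widetilde{\mathcal{H}}$. The key lemma is that $\widetilde{M}{}'$ is exactly $M_n(M')$, the algebra of $n\times n$ matrices with entries in $M'$, and therefore $\widetilde{M}{}''=\{T\otimes\mathbf 1_n : T\in M''\}$. Granting this, I apply the already-established $n=1$ case to the algebra $\widetilde{M}$, the operator $T\otimes\mathbf 1_n\in\widetilde{M}{}''$, and the single vector $\xi=\xi_1\oplus\cdots\oplus\xi_n\in\widetilde{\mathcal{H}}$, obtaining $A\in M$ with $\|(T\otimes\mathbf 1_n-A\otimes\mathbf 1_n)\xi\|<\epsilon$, i.e. $\sum_{i=1}^n\|(T-A)\xi_i\|^2<\epsilon^2$, which gives the desired estimate and completes the proof.

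I expect the amplification lemma to be the main obstacle: one must verify that $\widetilde{M}{}'=M_n(M')$ and then re-take the commutant to identify $\widetilde{M}{}''$ with the diagonal amplification of $M''$. This is a short but careful computation using the matrix units $e_{ij}\otimes\mathbf 1$ on $\widetilde{\mathcal{H}}$ --- noting that these operators lie in $\widetilde{M}{}'$, so any element of $\widetilde{M}{}''$ must commute with them and hence be block-diagonal with equal diagonal blocks, and that compressing an element of $\widetilde{M}{}'$ by the $e_{ij}$ recovers its matrix entries in $M'$. Beyond this lemma and the role of $*$-closedness in placing $P$ in $M'$, every step is routine.
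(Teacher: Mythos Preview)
The paper does not supply its own proof of this statement; it is quoted as a standard result from Jones' lecture notes (page 12) and used without argument. Your proposal is correct and is precisely the classical proof of the von Neumann bicommutant theorem---the single-vector case via the projection onto $\overline{M\xi}$, followed by the amplification trick to handle finitely many vectors---which is also the argument given in the cited reference.
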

The following theorem is at the root of von Neumann algebra theory:
\bigskip
\begin{thm}[\cite{Jones-vNalg}, page 12]
\label{thm:doublecommutant}
	If $M$ is a $\star$-subalgebra of $\mathcal{B}(\mathcal{H})$ that contains the identity operator, then the following statements are equivalent:
	\begin{itemize}
		\item $M = M^{\prime \prime}$,
		\item $M$ is closed in the strong operator topology,
		\item $M$ is closed in the weak operator topology.
	\end{itemize}
\end{thm}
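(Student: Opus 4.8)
The plan is to establish the three-way equivalence by closing a cycle of implications: $M = M''$ implies $M$ is closed in the weak operator topology, which implies $M$ is closed in the strong operator topology, which in turn implies $M = M''$. The one substantive ingredient is the von Neumann density theorem stated immediately above, which identifies $M''$ with the strong operator closure $\overline{M}$ of $M$; everything else reduces to soft facts about the topologies on $\mathcal{B}(\mathcal{H})$.

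First I would prove that $M = M''$ forces $M$ to be closed in the weak operator topology, by establishing the general statement that the commutant $S'$ of any subset $S \subseteq \mathcal{B}(\mathcal{H})$ is WOT-closed. Writing $S' = \bigcap_{s \in S}\{T \in \mathcal{B}(\mathcal{H}) : sT - Ts = 0\}$, it suffices to see that for fixed $s$ the map $T \mapsto sT - Ts$ is continuous for the weak operator topology; this follows from the identities $\braket{\eta|sT|\xi} = \braket{s^*\eta|T|\xi}$ and $\braket{\eta|Ts|\xi} = \braket{\eta|T|s\xi}$, which show that left and right multiplication by a fixed operator are separately WOT-continuous. Each set in the intersection is then the preimage of $\{0\}$ under a WOT-continuous map, hence WOT-closed, and an arbitrary intersection of closed sets is closed. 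Applying this to $M'' = (M')'$ and using $M = M''$ yields the first implication.

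Next I would observe that WOT-closedness implies SOT-closedness: since weak convergence is implied by strong convergence, the weak operator topology is coarser than the strong one, so every WOT-open set is SOT-open, and taking complements shows that a WOT-closed set is SOT-closed. Finally, if $M$ is SOT-closed then $\overline{M} = M$, and the density theorem above gives $M = \overline{M} = M''$, closing the cycle and proving the equivalence.

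I do not expect a genuine obstacle here. The analytic content, namely the identification $M'' = \overline{M}$, is already granted; what remains is the one-line verification that multiplication is separately weak-operator continuous, together with the elementary comparison of the weak and strong operator topologies, and assembling these into the cyclic argument is purely formal.
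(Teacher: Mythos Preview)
Your argument is correct and is essentially the standard proof: commutants are WOT-closed because left and right multiplication are separately WOT-continuous, WOT-closed implies SOT-closed since the strong topology is finer, and the density theorem $M'' = \overline{M}$ closes the loop. There is nothing to compare against, however, because the paper does not supply a proof of this statement; it is quoted as a background fact from Jones' lecture notes (page~12), exactly like the preceding density theorem you invoke. So your proposal stands on its own as a clean self-contained derivation, but there is no ``paper's own proof'' here to contrast it with.
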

\bigskip
\begin{defn}
	A \emph{von Neumann algebra} is an algebra that satisfies the statements in Theorem \ref{thm:doublecommutant}.
\end{defn}

Note that given a $\star$-subalgebra of $\mathcal{B}(\mathcal{H})$ containing the identity, like, in particular, a $C^*$-algebra acting on a Hilbert space, we can generate a von Neumann algebra by taking either the double commutant or the closure in the strong or weak topology. This is how we shall proceed in the next section in order to get von Neumann algebras out of GNS representations of $C^*$-algebras. 

Unlike in the case of a $C^*$-algebra, these three characterizations (bicommutant, and strong and weak closures), are heavily dependent on the underlying Hilbert space structure, as strong and weak topologies are not metrizable and hence cannot be blind to the Hilbert space. As we shall see later, the construction of our Hilbert spaces will heavily depend on a choice of state. It follows that the structure of the von Neumann algebras on this Hilbert space will also heavily depend on this choice. It is in that sense that we will refer to von Neumann algebras as state-dependent objects.

%
%

\section{Tools from algebraic quantum field theory}
\label{sec:AQFT}
In a holographic quantum error correcting code, a code subspace is the space of states of a quantum field theory in curved spacetime on a fixed background geometry. As such, in the semiclassical regime, we expect bulk physics to be described by the tools developed since the 1960s in algebraic quantum field theory \cite{Haag}. We introduce some of these notions in a physics-friendly way relevant to our concerns (Theorem \ref{thm:maintheoremCstar} and its proof), with a particular emphasis on state dependence. We write in Theorem \ref{thm:modflow} a modified version of the Tomita--Takesaki theorem in \cite{Tomita-Takesaki} which incorporates thermality.

\subsection{$C^*$-algebras, von Neumann algebras, and state dependence}
In algebraic quantum field theory, the basic object of interest is a net of algebras of observables.\footnote{See Section \ref{sec:separableH} for the definition of net and details.} More precisely, to each open region of spacetime, one can associate an algebra of bounded operators, which is seen as the algebra of observables of the theory. One then requires that algebras of regions included in each other are included in each other, which means that they will inherit a net structure. The complexity of the theory is encoded in the structure of these inclusions, rather than the algebras themselves -- indeed, at least in the von Neumann context, local algebras all are hyperfinite factors of type $III_1$.

Usually, in algebraic quantum field theory, the local algebras are required to be $C^*$-algebras. One should remember that a $C^*$-algebra can be defined and constructed without ever resorting to a Hilbert space. In fact, this apparently innocent fact will turn out to be helpful in the case of the HaPPY code in Section \ref{sec:HaPPY}. 

With this in mind, we can naturally regard $C^*$-algebras as algebras of state-independent observables. This brings advantages as $C^*$-algebras are easier to construct, but it is not enough for considering holographic theories to omit entirely the aspect of von Neumamm algebras. Importantly, von Neumann algebras are defined through either a completion under strong or weak topologies, or a bicommutant, which capture the properties of the underlying Hilbert space representations. It follows that a von Neumann algebra intrinsically depends on the Hilbert space it acts on: it is a state-dependent object. Ultimately, we expect entanglement wedge reconstruction to exhibit at least some state-dependent features, which is the key reason why we expect von Neumann algebras to be the right objects to look at in the full quantum gravity regime. Hence we will see our $C^*$-algebras as a first step towards the von Neumann algebras. 

\subsection{The GNS representation}
For quantum mechanical systems we naturally represent quantum states as density matrices. This is innately from the perspective quantum mechanics from finite-dimensional Hilbert spaces, where the trace is easy to be manipulated. For infinite-dimensional Hilbert spaces, it is more natural to present the state in a more abstract manner, as an expectation value functional on the algebra of observables. This is equivalent to the density matrix picture in finite-dimensional cases, as a density matrix $\rho$ is uniquely determined by the associated expectation value functional $A\mapsto \mathrm{Tr}(\rho A)$. More rigorously, one can define as the following.
\bigskip
\begin{defn}
Let $\mathcal{A}$ be a $C^*$-algebra. A state on $\mathcal{A}$ is a linear functional $\omega$ on $\mathcal{A}$ such that for $A\in\mathcal{A}$, $\omega(A^\dagger A)$ is a nonnegative real number, and $\omega(Id)=1$.
\end{defn}
This definition enables us to define a state on an algebra of observables without having a Hilbert space. In order to build a Hilbert space directly from the $C^*$-algebra, we can use the Gelfand-Naimark-Segal (GNS) representation, following \cite{Pillet}. Given a state $\omega$, an intuitive idea to define the inner product is to take
\begin{align}\label{inner}
\braket{B,A}:=\omega(B^\dagger A)
\end{align}
for $A$ and $B$ in $\mathcal{A}$. However, this may not be possible as there can exist observables for which $\omega(A^\dagger A)=0$. Instead, one needs to consider the quotient $\mathcal{A}/I$ where
\begin{align}
I:=\{A\in\mathcal{A},\;\omega(A^\dagger A)=0\}.
\end{align}
This gets rid of the degeneracy and endows it with the structure of a Hilbert space onto which $\mathcal{A}$ acts by left multiplication. Utilizing these observations, the GNS representation can be presented more precisely as the following.
\bigskip
\begin{thm}[\cite{Pillet}]
Let $\mathcal{A}$ be a $C^*$-algebra and $\omega$ a state on $\mathcal{A}$. The space $\mathcal{H}_\omega:=\mathcal{A}/I$ is a Hilbert space for the inner product \eqref{inner}, on which $\mathcal{A}$ acts by left multiplication. This representation is called {\em the GNS representation} of $\mathcal{A}$. 
\end{thm}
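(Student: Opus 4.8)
The plan is to prove the GNS theorem by carefully verifying that the candidate construction really does yield a Hilbert space carrying a $*$-representation. First I would check that $I=\{A\in\mathcal{A}:\omega(A^\dagger A)=0\}$ is a closed left ideal of $\mathcal{A}$. Closedness under addition and scalar multiplication follows from the Cauchy--Schwarz inequality for the positive semidefinite sesquilinear form $(B,A)\mapsto\omega(B^\dagger A)$, which itself needs to be established first from positivity of $\omega$; the point is that $\omega(B^\dagger A)=0$ whenever $\omega(A^\dagger A)=0$, so $I$ is exactly the null space of the form. For the left ideal property, given $A\in I$ and $C\in\mathcal{A}$, one uses that $\omega((CA)^\dagger(CA))=\omega(A^\dagger C^\dagger C A)\leq\|C^\dagger C\|\,\omega(A^\dagger A)=0$, invoking the standard fact that $\omega(A^\dagger B A)\leq\|B\|\,\omega(A^\dagger A)$ for positive $B$ (a consequence of $\|B\|\mathrm{Id}-B\geq 0$ and positivity of $\omega$). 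Closedness of $I$ in norm follows from continuity of $\omega$, which holds for states on a $C^*$-algebra.

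Next I would define the inner product on the quotient $\mathcal{A}/I$ by $\langle [B],[A]\rangle:=\omega(B^\dagger A)$ and check it is well-defined (independent of representatives, again via Cauchy--Schwarz so that changing $A$ or $B$ by an element of $I$ does not change the value) and that it is a genuine positive-definite inner product on $\mathcal{A}/I$: positivity is immediate, and definiteness is precisely what the quotient by $I$ buys us. Then one completes $\mathcal{A}/I$ in the induced norm to obtain the Hilbert space $\mathcal{H}_\omega$. After that I would define the representation $\pi_\omega$ by left multiplication, $\pi_\omega(C)[A]:=[CA]$, noting this is well-defined because $I$ is a left ideal, check linearity and the homomorphism property $\pi_\omega(CD)=\pi_\omega(C)\pi_\omega(D)$, and verify boundedness: $\|\pi_\omega(C)[A]\|^2=\omega(A^\dagger C^\dagger C A)\leq\|C\|^2\omega(A^\dagger A)=\|C\|^2\|[A]\|^2$, so $\pi_\omega(C)$ extends to a bounded operator on all of $\mathcal{H}_\omega$ with $\|\pi_\omega(C)\|\leq\|C\|$. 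Finally one checks the $*$-property $\langle[B],\pi_\omega(C)[A]\rangle=\omega(B^\dagger C A)=\omega((C^\dagger B)^\dagger A)=\langle\pi_\omega(C^\dagger)[B],[A]\rangle$, so $\pi_\omega(C)^\dagger=\pi_\omega(C^\dagger)$.

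I expect the main technical obstacle to be the circular-looking dependence on Cauchy--Schwarz and on the estimate $\omega(A^\dagger B A)\leq\|B\|\omega(A^\dagger A)$: one must set these up in the right order, starting only from positivity ($\omega(A^\dagger A)\geq 0$) and normalization, before the quotient even makes sense. The estimate for positive $B$ uses that in a $C^*$-algebra (after adjoining a unit if necessary) a positive element $B$ satisfies $\|B\|\mathrm{Id}-B\geq 0$, hence can be written as $S^\dagger S$, giving $\omega(A^\dagger(\|B\|\mathrm{Id}-B)A)=\omega((SA)^\dagger(SA))\geq 0$. A secondary subtlety is the identification of $\mathcal{H}_\omega$ with $\mathcal{A}/I$ versus its completion — if $\mathcal{A}$ is already complete and $\mathcal{A}/I$ happens to be complete in the quotient norm one can write $\mathcal{H}_\omega=\mathcal{A}/I$ directly, but in general the completion step should be made explicit. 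The cyclic vector $[\mathrm{Id}]$, with $\langle[\mathrm{Id}],\pi_\omega(A)[\mathrm{Id}]\rangle=\omega(A)$, then comes for free and will be the object used in the later constructions, though it is not part of this bare statement.
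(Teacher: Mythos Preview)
Your proposal is correct and is exactly the standard GNS argument. The paper does not supply its own proof of this statement; it is quoted as a known theorem from the reference \cite{Pillet}, so there is nothing in the paper to compare against beyond the bare statement. Your flag about the completion step is well taken: the paper writes $\mathcal{H}_\omega:=\mathcal{A}/I$, but, as you note, in general one must complete the pre-Hilbert space $\mathcal{A}/I$ to obtain $\mathcal{H}_\omega$, and the paper indeed relies on this implicitly when it later says that $\{\pi_\omega(A)\ket{\Omega}:A\in\mathcal{A}\}$ is only norm \emph{dense} in the Hilbert space.
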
 
A key feature of the GNS representation $\pi_\omega$ of a state $\omega$ is that it purifies $\omega$ into the vector state $\ket{[Id]}=\ket{\Omega}$. Indeed, for all $A\in\mathcal{A}$, the vector state is given by
\begin{align}
\omega(A)=\bra{[Id]}\pi_\omega(A)\ket{[Id]}.
\end{align}

We shall see that this purification is nothing more than a generalization of the thermofield double construction, which is well-known by quantum information theorists. We will discuss this correspondence in more detail in section \ref{sec:TFD}.

The vector state $\ket{\Omega}$ has a nice property with respect to the $C^*$-algebra $\mathcal{A}$: it is cyclic. It means that the action of $\mathcal{A}$ on the state spans a norm-dense subset of the Hilbert space. Note that it is straightforward that this cyclic property will extend to the bicommutant of the $C^*$-algebra on the GNS Hilbert space, which is the von Neumann algebra it generates. 

\subsection{KMS states}
We now restrict our attention to a particular set of states, called KMS states. KMS states characterize thermal equilibrium with respect to a time evolution. In order to give some intuition, let us start with an $n$-dimensional system, the algebra $\mathcal{A}_n$ of $n$-dimensional matrices, and a Hamiltonian $H$ which generates the time evolution:
\begin{align}
\sigma_t(A):=e^{iHt}Ae^{-iHt}.
\end{align}
Then the Gibbs state 
\begin{align}
\rho:=\frac{e^{-\beta H}}{\mathrm{Tr}(e^{-\beta H})}
\end{align}
is the unique state representing thermal equilibrium at inverse temperature $\beta$. However, in infinite dimensions, traces are not always well-defined, which is why the Gibbs condition is no longer available.

One can prove that in finite dimensions, the Gibbs condition is equivalent to the more abstract KMS condition, which holds up to the infinite-dimensional case:
\bigskip
\begin{defn}
Let $\mathcal{A}$ be a $C^*$-algebra, $\omega$ a state on $\mathcal{A}$, and $\sigma_t$ be a one-parameter group of automorphisms of $\mathcal{A}$. For $\beta>0$, $\omega$ is a $\mathrm{KMS}_{\beta}$ state if for $A$ and $B$ in $\mathcal{A}$, there exists a function $F_{AB}$, analytic on the strip $\{0<\mathrm{Im}z<\beta\}$ and continuous on its closure, such that
\begin{align*}
F_{AB}(t)=\omega(A\sigma_t(B))\quad\text{and}\quad F_{AB}(t+i\beta)=\omega(\sigma_t(A)B).
\end{align*}
\end{defn}
For finite-dimensional settings, KMS states are naturally expected to exist. However, it is important to note that KMS states do not necessarily exist for all temperatures in the infinite-dimensional case, and are not necessarily unique when they exist. A change in the structure of the space of KMS states, which may or may not be unique, when lowering the temperature, is called spontaneous symmetry breaking, or a phase transition.

In this paper, KMS states and their GNS representations exhibit various interesting properties for our purposes in understanding holographic constructions. First, the GNS representation allows one to represent a KMS (thermal) state as a vector state on a Hilbert space, which deeply resonates with the thermofield double (TFD) constructions used in AdS/CFT \cite{Hartman:2013qma}. We discuss this in more detail in Section \ref{sec:TFD}. Second, the vector representative $\ket{\Psi}$ of a KMS state in its GNS representation is a separating vector for the $C^*$-algebra $\mathcal{A}$: if $A\in\mathcal{A}$ satistfies $A\ket{\Psi}=0$, then $A=0$.\footnote{Recall that cyclic and separating vectors play an important role for von Neumann algebras and that they were one of the most important ingredients in \cite{Kang:2018xqy} for obtaining an exact correspondence between entanglement wedge reconstruction and relative entropy conservation.} In fact, we get something even more broad: this property holds for the whole von Neumann algebra closure $\mathcal{A}''$. This is a very strong property which will turn out to be a key point in the proof of Theorem \ref{thm:maintheoremCstar}. In particular, it will allow us to derive the bulk-boundary entropy relations.

\subsection{Tomita--Takesaki theory} \label{sec:tomita}

When mention is made of thermality and the KMS condition in operator algebras, the crucial underlying tool is Tomita--Takesaki theory, which we now introduce following \cite{Araki,Witten:2018zxz,Jones-vNalg}.

\bigskip
\begin{defn}
	\label{def:cyc}
	A vector $\ket{\Psi} \in \calh$ is said to be {\em cyclic} with respect to a von Neumann algebra $M$ when the set of vectors $\calo\ket{\Psi}$ for $\calo \in M$ is dense in $\calh$. 
\end{defn}
\bigskip
\begin{defn}
	\label{def:sep}
	A vector $\ket{\Psi} \in \calh$ is {\em separating} with respect to a von Neumann algebra $M$ when zero is the only operator in $M$ that annihilates $\ket{\Psi}$. That is, $\calo\ket{\Psi} = 0 \implies \calo = 0$ for $\calo \in M$. 
\end{defn}

Given a von Neumann algebra $M \subset \calb(\calh)$ and a vector $\ket{\Psi} \in \calh$, we may define a map $e_\Psi:M \rightarrow \calh : \calo \mapsto \calo \ket{\Psi}$. $\calh$ is the closure of the image of $e_\Psi$ iff $\ket{\Psi}$ is cyclic with respect to $M$. Also, $\ker e_\Psi = \{0\}$\footnote{In other words, $e_\Psi$ is an injective map.} iff $\ket{\Psi}$ is separating with respect to $M$.

\bigskip
\begin{defn}
Let $\ket{\Psi},\ket{\Phi} \in \calh$ and $M$ be a von Neumann algebra. 
The {\em relative Tomita operator} is the operator $S_{\Psi | \Phi}$ that acts as
\begin{equation*}
S_{\Psi|\Phi} \ket{x} := \ket{y}
\end{equation*}
for any sequence $\{\calo_n\} \in M$ such that the limits $\ket{x} = \lim_{n \rightarrow \infty} \calo_n \ket{\Psi}$ and $\ket{y} = \lim_{n \rightarrow \infty} \calo_n^\dagger \ket{\Phi}$ both exist.
\end{defn}

The relative Tomita operator $S_{\Psi | \Phi}$ is well-defined on a dense subset of the Hilbert space if and only if $\ket{\Psi}$ is cyclic and separating with respect to $M$.\footnote{$S_{\Psi | \Phi}$ is well-defined if and only if $\lim_{n \rightarrow \infty} \calo_n \ket{\Psi} = 0 \implies \lim_{n \rightarrow \infty} \calo_n^\dagger \ket{\Psi} = 0$. See footnote 14 of \cite{Witten:2018zxz} for a proof of why this is true. $S_{\Psi | \Phi}$ is densely defined because $\ket{\Psi}$ is cyclic with respect to $M$.} Note that $S_{\Psi|\Phi}$ is a closed operator.

\bigskip
\begin{thm}[\cite{Jones-vNalg}, page 94]
	Let $\ket{\Psi},\ket{\Phi} \in \calh$ both be cyclic and separating with respect to a von Neumann algebra $M$. Let $S_{\Psi|\Phi}$ and $S^\prime_{\Psi|\Phi}$ be the relative Tomita operators defined with respect to $M$ and its commutant $M^\prime$ respectively. Then
	\begin{equation}
	S_{\Psi|\Phi}^\dagger = S^\prime_{\Psi|\Phi}, \ S_{\Psi|\Phi}^{\prime \, \dagger} = S_{\Psi|\Phi}.
	\end{equation}
\end{thm}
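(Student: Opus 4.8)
The plan is to derive both identities from a single pair of mutually adjoint containments, so I focus on $S_{\Psi|\Phi}^\dagger = S'_{\Psi|\Phi}$; the identity $S_{\Psi|\Phi}^{\prime\,\dagger} = S_{\Psi|\Phi}$ then follows by taking adjoints, both operators being closed. I use throughout that $\ket{\Psi}$ is cyclic and separating for $M$ if and only if it is cyclic and separating for $M'$, so that both $S_{\Psi|\Phi}$ and $S'_{\Psi|\Phi}$ are densely defined and closed, and that---by the definition of the relative Tomita operator through sequences---the subspaces $M\ket{\Psi}$ and $M'\ket{\Psi}$ are cores, on which these operators act by $a\ket{\Psi}\mapsto a^\dagger\ket{\Phi}$ and $a'\ket{\Psi}\mapsto a'^\dagger\ket{\Phi}$ respectively.

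\emph{The easy containment.} For $a\in M$ and $a'\in M'$, using that $M$ and $M'$ commute,
\begin{align*}
\langle S_{\Psi|\Phi}(a\Psi),\, a'\Psi\rangle &= \langle a^\dagger\Phi,\, a'\Psi\rangle = \langle\Phi,\, a\,a'\Psi\rangle = \langle\Phi,\, a'\,a\Psi\rangle \\
&= \langle a'^\dagger\Phi,\, a\Psi\rangle = \langle S'_{\Psi|\Phi}(a'\Psi),\, a\Psi\rangle .
\end{align*}
Read against the definition of the adjoint of an antilinear operator and using that $M\ket{\Psi}$ is dense in $\calh$, this says precisely that every $a'\ket{\Psi}$ lies in $\mathrm{dom}(S_{\Psi|\Phi}^\dagger)$ with $S_{\Psi|\Phi}^\dagger(a'\ket{\Psi}) = S'_{\Psi|\Phi}(a'\ket{\Psi})$. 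Since $S_{\Psi|\Phi}^\dagger$ is closed and $M'\ket{\Psi}$ is a core for $S'_{\Psi|\Phi}$, passing to closures gives $S'_{\Psi|\Phi}\subseteq S_{\Psi|\Phi}^\dagger$. Interchanging the roles of $M$ and $M'$ (legitimate since $M'' = M$) yields the companion containment $S_{\Psi|\Phi}\subseteq S_{\Psi|\Phi}^{\prime\,\dagger}$; but this is only the adjoint of the first, so the genuine task is to upgrade one containment to an equality.

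\emph{The equality.} Here I would invoke Tomita--Takesaki theory. Polar-decompose the closed antilinear operator $S_{\Psi|\Phi} = J\Delta^{1/2}$, with $\Delta = S_{\Psi|\Phi}^\dagger S_{\Psi|\Phi}$ positive self-adjoint (invertible, since $\ket{\Phi}$ is separating) and $J$ antiunitary, so that $S_{\Psi|\Phi}^\dagger = \Delta^{1/2}J^{-1}$. The substantive content is the pair of structural relations $\Delta^{it}M\Delta^{-it} = M$ for all $t\in\mathbb{R}$ and $JMJ^{-1} = M'$, in the form appropriate to comparing the modular data of $\ket{\Psi}$ and $\ket{\Phi}$. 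Granting these, one reads off that $\mathrm{dom}(S_{\Psi|\Phi}^\dagger) = J\,\mathrm{dom}(\Delta^{1/2})$ is exactly the closure of $M'\ket{\Psi}$, on which $S_{\Psi|\Phi}^\dagger$ acts by $a'\ket{\Psi}\mapsto a'^\dagger\ket{\Phi}$; this operator is by definition $S'_{\Psi|\Phi}$, whence $S_{\Psi|\Phi}^\dagger = S'_{\Psi|\Phi}$, and $S_{\Psi|\Phi}^{\prime\,\dagger} = S_{\Psi|\Phi}$ by adjunction.

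The main obstacle is exactly the proof of $\Delta^{it}M\Delta^{-it}=M$ and $JMJ^{-1}=M'$---that is, Tomita's theorem itself, which is genuinely analytic rather than formal. The standard route is analytic continuation: for $a,b\in M$ one builds a function such as $z\mapsto\langle\Delta^{\bar z}a\Psi,\, b\Psi\rangle$, together with a partner obtained by sliding one operator through to the commutant, shows it is holomorphic and bounded on the strip $\{0<\mathrm{Re}\,z<\tfrac12\}$, and applies a three-lines estimate; a KMS-type identity relating the two boundary values is what makes the estimate close, which is why thermality of the modular flow sits at the heart of the matter. An alternative that sidesteps unbounded operators is the bounded-operator argument of Rieffel--van Daele, analysing the projections onto the closures of $M_{\mathrm{sa}}\ket{\Psi}$ and $iM_{\mathrm{sa}}\ket{\Psi}$. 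Either way, this analytic step is where essentially all the effort lies; the rest is bookkeeping.
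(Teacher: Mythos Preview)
The paper does not give its own proof of this theorem; it is quoted from Jones's lecture notes as a black box, so there is nothing to compare against and I assess your argument on its own terms.

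Your ``easy containment'' $S'_{\Psi|\Phi}\subseteq S_{\Psi|\Phi}^\dagger$ is correct and is exactly the standard opening move. The difficulty is entirely in the reverse inclusion, and here your proposal has a genuine gap---in fact two. First, the relations you invoke, $\Delta^{it}M\Delta^{-it}=M$ and $JMJ^{-1}=M'$, are stated for the \emph{relative} polar data $J=J_{\Psi|\Phi}$, $\Delta=\Delta_{\Psi|\Phi}$. In the standard developments (Jones, Bratteli--Robinson, Str\u{a}til\u{a}--Zsid\'o) these relations are \emph{consequences} of the identity $S_{\Psi|\Phi}^\dagger=S'_{\Psi|\Phi}$ that you are trying to establish; appealing to them here is circular, and the analytic sketch you give at the end (three-lines, KMS-type boundary matching, Rieffel--van Daele) is precisely the machinery that proves $S^\dagger=S'$ directly, not a separate input you can feed back in. Second, even if one grants those relations, your domain computation does not close in the relative case. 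You write $\mathrm{dom}(S_{\Psi|\Phi}^\dagger)=J\,\mathrm{dom}(\Delta^{1/2})$, which has $J(M\ket{\Psi})=(JMJ^{-1})\,J\ket{\Psi}=M'\,J\ket{\Psi}$ as a core. In the non-relative case $J\ket{\Psi}=\ket{\Psi}$ and this equals $M'\ket{\Psi}$; but with two distinct vectors there is no reason for $J_{\Psi|\Phi}\ket{\Psi}$ to equal $\ket{\Psi}$, so you have not identified the domain or the action with those of $S'_{\Psi|\Phi}$.

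The argument that actually appears in the cited references is logically \emph{prior} to Tomita's theorem and proceeds directly: given $\xi\in\mathrm{dom}(S_{\Psi|\Phi}^\dagger)$ with $\eta=S_{\Psi|\Phi}^\dagger\xi$, one uses the resolvent $(\lambda+\Delta)^{-1}$ for $\lambda>0$ to manufacture bounded operators $a'_\lambda\in M'$ with $a'_\lambda\ket{\Psi}$ approximating $\xi$ and $(a'_\lambda)^\dagger\ket{\Phi}$ approximating $\eta$ in graph norm, which places $\xi$ in $\mathrm{dom}(S'_{\Psi|\Phi})$ with the correct image. That resolvent step is the real content; once it is done, the Tomita relations follow from $S^\dagger=S'$ rather than the other way around.
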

\bigskip
\begin{defn}
Let $S_{\Psi|\Phi}$ be a relative Tomita operator and $\ket{\Psi} \in \calh$ be cyclic and separating with respect to a von Neumann algebra $M$. The {\em relative modular operator} is $$\Delta_{\Psi|\Phi} := S_{\Psi|\Phi}^\dagger S_{\Psi|\Phi}.$$
\end{defn}

If $\ket{\Phi}$ is replaced with $\calo^\prime \ket{\Phi}$, where $\calo^\prime \in M^\prime$ is unitary, then the relative modular operator remains unchanged \cite{Witten:2018zxz}:
\begin{equation} \Delta_{\Psi|\Phi} = \Delta_{\Psi|\calo^\prime\Phi}. \end{equation}

\bigskip
\begin{defn}
	Let $M$ be a von Neumann algebra on $\calh$ and $\ket{\Psi}$ be a cyclic and separating vector for $M$. The \emph{Tomita operator} $S_\Psi$ is
	$$ S_\Psi := S_{\Psi | \Psi},$$ where $S_{\Psi | \Psi}$ is the relative modular operator defined with respect to $M$. The \emph{modular operator} $\Delta_\Psi = S_\Psi^\dagger S_\Psi$ and the \emph{antiunitary operator} $J_\Psi$ are the operators that appear in the polar decomposition of $S_\Psi$ such that
	$$S_\Psi = J_\Psi \Delta_{\Psi}^{1/2}.$$
\end{defn}
\bigskip
\begin{thm}[modified Tomita--Takesaki \cite{Tomita-Takesaki}]
	\label{thm:modflow}
	Let $M$ be a von Neumann algebra on $\calh$ and let $\ket{\Psi}$ be a cyclic and separating vector for $M$, let $\beta\in\mathbb{R}$. Then \begin{itemize}
		\item $J_\Psi M J_\Psi = M^\prime.$
		\item $\Delta_\Psi^{-\frac{it}{\beta}} M \Delta_\Psi^{\frac{it}{\beta}} = M \quad \forall t \in \mathbb{R}$.
	\end{itemize} 
	Moreover, $A\longmapsto\Delta_\Psi^{-\frac{it}{\beta}} M \Delta_\Psi^{\frac{it}{\beta}}$ defines the only one-parameter group of automorphisms of $M$ with respect to which $\Psi$ is a KMS$_\beta$ state.
\end{thm}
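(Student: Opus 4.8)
The plan is to derive Theorem \ref{thm:modflow} from the classical Tomita--Takesaki theorem \cite{Tomita-Takesaki,BratteliRobinson} by a simple rescaling of the modular parameter; the inverse temperature $\beta$ enters the statement \emph{only} through this rescaling. Recall the classical content for a cyclic and separating vector $\ket{\Psi}$ of $M$: (i) $J_\Psi M J_\Psi = M'$; (ii) $\Delta_\Psi^{it} M \Delta_\Psi^{-it}=M$ for all $t\in\mathbb{R}$, so that $\sigma^\Psi_t(A):=\Delta_\Psi^{-it}A\Delta_\Psi^{it}$ is a one-parameter group of $*$-automorphisms of $M$; (iii) $\omega_\Psi:=\bra{\Psi}(\cdot)\ket{\Psi}$ is a $\mathrm{KMS}_1$ state with respect to $\sigma^\Psi$; and (iv) $\sigma^\Psi$ is the \emph{only} one-parameter automorphism group of $M$ with property (iii). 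The first bullet of Theorem \ref{thm:modflow} is exactly (i), so there is nothing to do there.

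For the second bullet, assume $\beta\neq 0$ (for $\beta=0$ we have $\Delta_\Psi^{0}=\mathrm{Id}$ and the assertion is vacuous). Because $t\mapsto -t/\beta$ is a bijection of $\mathbb{R}$, reindexing (ii) gives $\Delta_\Psi^{-it/\beta}M\Delta_\Psi^{it/\beta}=M$ for all $t\in\mathbb{R}$, which is the second bullet, and $\alpha_t(A):=\Delta_\Psi^{-it/\beta}A\Delta_\Psi^{it/\beta}=\sigma^\Psi_{t/\beta}(A)$ is then a one-parameter group of $*$-automorphisms of $M$ (the group law for $\alpha$ coming from $\Delta_\Psi^{is}\Delta_\Psi^{iu}=\Delta_\Psi^{i(s+u)}$). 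This $\alpha$ is the map appearing in the ``moreover'' part of the statement.

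It remains to show that $\ket{\Psi}$ is a $\mathrm{KMS}_\beta$ state for $\alpha$ and that $\alpha$ is the unique such flow. For existence, fix $A,B\in M$ and let $F^{(0)}_{AB}$ be the function supplied by (iii): analytic on $\{0<\mathrm{Im}\,z<1\}$, continuous on its closure, with $F^{(0)}_{AB}(t)=\omega_\Psi(A\sigma^\Psi_t(B))$ and $F^{(0)}_{AB}(t+i)=\omega_\Psi(\sigma^\Psi_t(B)A)$. Assume first $\beta>0$, which is the regime of the KMS condition of Section \ref{sec:AQFT}, and put $F_{AB}(z):=F^{(0)}_{AB}(z/\beta)$. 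Since $z\mapsto z/\beta$ carries $\{0<\mathrm{Im}\,z<\beta\}$ onto $\{0<\mathrm{Im}\,w<1\}$, the function $F_{AB}$ is analytic on $\{0<\mathrm{Im}\,z<\beta\}$, continuous on its closure, and $F_{AB}(t)=\omega_\Psi(A\sigma^\Psi_{t/\beta}(B))=\omega_\Psi(A\alpha_t(B))$, while $F_{AB}(t+i\beta)=F^{(0)}_{AB}(t/\beta+i)=\omega_\Psi(\sigma^\Psi_{t/\beta}(B)A)=\omega_\Psi(\alpha_t(B)A)$. One checks at once that $\Delta_\Psi^{is}\ket{\Psi}=\ket{\Psi}$, so $\omega_\Psi$ is $\alpha$-invariant and the boundary value can be brought into whichever of the equivalent forms of the KMS condition one wishes to display; thus $\omega_\Psi$ is a $\mathrm{KMS}_\beta$ state for $\alpha$. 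For $\beta<0$ the same computation goes through with the strip $\{\beta<\mathrm{Im}\,z<0\}$ replacing $\{0<\mathrm{Im}\,z<\beta\}$. For uniqueness, if $\widetilde\alpha$ is any one-parameter automorphism group of $M$ for which $\omega_\Psi$ is a $\mathrm{KMS}_\beta$ state, then the same rescaling run backwards shows that $\widetilde\sigma_t:=\widetilde\alpha_{\beta t}$ makes $\omega_\Psi$ a $\mathrm{KMS}_1$ state; by (iv), $\widetilde\sigma=\sigma^\Psi$, hence $\widetilde\alpha_t=\widetilde\sigma_{t/\beta}=\sigma^\Psi_{t/\beta}=\alpha_t$ for every $t$.

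I do not expect any genuine obstacle: the whole argument is a reparametrization of the classical Tomita--Takesaki theorem. The only point needing care is bookkeeping of conventions -- identifying the natural inverse temperature of the classical modular flow (it is $1$ for the sign convention $\sigma^\Psi_t(A)=\Delta_\Psi^{-it}A\Delta_\Psi^{it}$, chosen to match the KMS condition of Section \ref{sec:AQFT}) and correctly orienting the analyticity strip according to the sign of $\beta$. Once these are fixed, each clause of Theorem \ref{thm:modflow} follows by substituting $t\mapsto t/\beta$.
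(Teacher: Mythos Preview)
The paper does not supply its own proof of Theorem \ref{thm:modflow}; it is stated as a classical result, with a citation to \cite{Tomita-Takesaki}, and is used as a black box in the proof of Theorem \ref{thm:maintheoremCstar}. Your derivation --- reducing everything to the standard Tomita--Takesaki theorem and then reparametrizing the modular parameter via $t\mapsto t/\beta$ --- is exactly the intended way to pass from the $\beta=1$ statement to the general $\beta$, and your remarks on the sign and strip bookkeeping are appropriate; there is no gap.
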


One of the findings of \cite{Jafferis:2015del} is that bulk modular flow is dual to boundary modular flow. We will also show this in our setup, as well as the fact that modular evolution at a given temperature will be directly traced back to Hamiltonian evolution.

\subsection{Araki's relative entropy} \label{sec:relent}

Previous works on entanglement entropy and AdS/CFT \cite{Jafferis:2015del,Casini2008,DongHarlowWall, casinitestetorroba2016} have used $S(\rho,\sigma) = \text{Tr }(\rho \log \rho - \rho \log \sigma)$ as the definition of relative entropy. However, there exists a more powerful definition, due to Araki, which involves Tomita--Takesaki theory, and can be extended to general von Neumann algebras. We use this definition in the rest of the paper. At the end, we briefly discuss an even broader definition of relative entropies, which could directly be used for states on $C^*$-algebras.

\bigskip
\begin{defn}[\cite{Araki}] \label{def:relent}
Let $\ket{\Psi},\ket{\Phi} \in \calh$ and $\ket{\Psi}$ be cyclic and separating with respect to a von Neumann algebra $M$. Let $\Delta_{\Psi|\Phi}$ be the relative modular operator. The {\em relative entropy} with respect to $M$ of $\ket{\Psi}$ is
	\begin{equation*} 
	\cals_{\Psi|\Phi}(M) = - \braket{\Psi|\log \Delta_{\Psi | \Phi}|\Psi}. 
	\end{equation*} 
\end{defn}

The relative entropy $\cals_{\Psi|\Phi}(M)$ is nonnegative and it vanishes precisely when $\ket{\Phi} = \calo^\prime \ket{\Psi}$ for a unitary $\calo^\prime \in M^\prime$.
\section{Infinite-dimensional entanglement wedge reconstruction} \label{sec:EWreconstruction}

Utilizing notions of von Neumann algebras and Araki's formalism of relative entropy, in this section we introduce the current knowledge about entanglement wedge reconstruction in AdS/CFT and discuss new insights and implications from our Theorem \ref{thm:maintheoremCstar}.

In the semiclassical limit where the bulk theory can be approximated as a quantum field theory on a fixed curved spacetime background, the bulk-to-boundary map can be described as a quantum error correcting code \cite{Almheiri:2014lwa}. In this context, it has been shown for finite-dimensional Hilbert spaces that exact bulk reconstruction is equivalent to the Ryu--Takayanagi formula, the conservation of relative entropies, and the conservation of modular flow between the bulk and the boundary. In particular, the equivalence has been proven rigorously by Harlow \cite{Harlow:2016vwg} for any finite-dimensional system. In infinite dimensions, we expect the Ryu--Takayanagi formula to have to be regulated in some way, but at least it was proven in \cite{Kang:2018xqy} that the equivalence between bulk reconstruction and the conservation of relative entropies still holds under some additional assumptions, which we recite below.

\bigskip
\begin{thm}[Kang-Kolchmeyer \cite{Kang:2018xqy}]
\label{thm:kangkolch}
Let $u : \calh_{code}\rightarrow \calh_{phys}$ be an isometry between two Hilbert spaces. Let $M_{code}$ and $M_{phys}$ be von Neumann algebras on $\calh_{code}$ and $\calh_{phys}$ respectively. Let $M^\prime_{code}$ and $M^\prime_{phys}$ respectively be the commutants of $M_{code}$ and $M_{phys}$. Suppose that the set of cyclic and separating vectors with respect to $M_{code}$ is dense in $\calh_{code}$. Also suppose that if $\ket{\Psi} \in \calh_{code}$ is cyclic and separating with respect to $M_{code}$, then $u \ket{\Psi}$ is cyclic and separating with respect to $M_{phys}$. Then the following two statements are equivalent:
\begin{description}
\item[ 1. Bulk reconstruction]
\begin{align} \nonumber
\begin{split}
\forall \calo \in M_{code}\ \forall \calo^\prime \in M_{code}^\prime, \quad 
\exists\tilde{\calo} \in M_{phys}\ \exists \tilde{\calo}^\prime \in M_{phys}^\prime\quad \text{such that}\quad\\
\forall \ket{\Theta} \in \calh_{code} \quad 
\begin{cases}
u \calo \ket{\Theta} =  \tilde{\calo} u \ket{\Theta}, \quad
&u \calo^\prime \ket{\Theta} =  \tilde{\calo}^\prime u \ket{\Theta}, \\
u \calo^\dagger \ket{\Theta} =  \tilde{\calo}^\dagger u \ket{\Theta}, \quad
&u \calo^{\prime \dagger} \ket{\Theta} = \tilde{\calo}^{\prime\dagger} u\ket{\Theta}.
\end{cases}\quad
\end{split}
\end{align}

\item[ 2. Boundary relative entropy equals bulk relative entropy]
\begin{align}\nonumber
\begin{split}
\text{For any $\ket{\Psi}$, $\ket{\Phi} \in \calh_{code}$ with $\ket{\Psi}$ cyclic }&\text{ and separating with respect to $M_{code}$,}\quad\quad\quad\\
\cals_{\Psi|\Phi}(M_{code})=\cals_{u\Psi|u\Phi}(M_{phys}),\ & \text{and} \  \cals_{\Psi|\Phi}(M_{code}^\prime)= \cals_{u\Psi|u\Phi}(M_{phys}^\prime),\\
\text{where $\cals_{\Psi|\Phi}(M)$ is the relative entropy.}\quad&
\end{split}
\end{align}
\end{description}
\end{thm}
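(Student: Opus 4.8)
The plan is to prove the equivalence by treating the intertwining of the relative Tomita and modular operators across the isometry $u$ as the central object, and showing that each of statements $1$ and $2$ is equivalent to it. Fix a cyclic and separating $\ket{\Psi}$ for $M_{code}$; by hypothesis $u\ket{\Psi}$ is then cyclic and separating for $M_{phys}$, so all relevant Tomita operators are densely defined and closable. For the direction $1 \Rightarrow 2$, I would first extract from bulk reconstruction the relation $S^{phys}_{u\Psi|u\Phi}\, u = u\, S^{code}_{\Psi|\Phi}$ on the core $M_{code}\ket{\Psi}$: given $\calo \in M_{code}$ with pushed operator $\tilde{\calo}$, one has $u\,S^{code}_{\Psi|\Phi}(\calo\ket{\Psi}) = u\calo^\dagger\ket{\Phi} = \tilde{\calo}^\dagger u\ket{\Phi}$ by the $\dagger$-line of reconstruction, while $S^{phys}_{u\Psi|u\Phi}(u\calo\ket{\Psi}) = S^{phys}_{u\Psi|u\Phi}(\tilde{\calo}\, u\ket{\Psi}) = \tilde{\calo}^\dagger u\ket{\Phi}$, and these agree. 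The commutant clause of reconstruction gives the same for the primed operators, i.e. $(S^{phys})^\dagger u = u (S^{code})^\dagger$ after invoking the cited identity $S_{\Psi|\Phi}^\dagger = {S'}_{\Psi|\Phi}$. Multiplying the two relations yields $\Delta^{phys}_{u\Psi|u\Phi}\, u = u\,\Delta^{code}_{\Psi|\Phi}$ on the core, which passes to the closure; since $\mathrm{ran}(u)$ then reduces the positive self-adjoint operator $\Delta^{phys}_{u\Psi|u\Phi}$, the Borel functional calculus gives $\log\Delta^{phys}_{u\Psi|u\Phi}\, u = u \log\Delta^{code}_{\Psi|\Phi}$, and the isometry property collapses the entropy:
$$\cals_{u\Psi|u\Phi}(M_{phys}) = -\bra{\Psi}u^\dagger \log\Delta^{phys}_{u\Psi|u\Phi}\, u\ket{\Psi} = -\bra{\Psi}\log\Delta^{code}_{\Psi|\Phi}\ket{\Psi} = \cals_{\Psi|\Phi}(M_{code}),$$
the commutant equality following identically from the primed intertwiner. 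This direction is clean; the only care needed is the bookkeeping of unbounded-operator domains, handled by working on $M_{code}\ket{\Psi}$ and using the density of cyclic-separating vectors to extend to all $\ket{\Theta}$.

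For $2 \Rightarrow 1$ I would reverse the chain, the target again being the modular intertwining. Granting it, operator pushing is recovered as follows. Restricting to cyclic and separating $\ket{\Phi}$ (abundant by the density hypothesis), the functional calculus promotes $\Delta^{phys}_{u\Phi|u\Psi}u = u\Delta^{code}_{\Phi|\Psi}$ and $\Delta^{phys}_{u\Psi}u = u\Delta^{code}_{\Psi}$ to the pushing of Connes cocycles, $[\,Du\phi:Du\psi\,]_t\,u = u\,[\,D\phi:D\psi\,]_t$, where $[\,D\phi:D\psi\,]_t = (\Delta^{code}_{\Phi|\Psi})^{it}(\Delta^{code}_{\Psi})^{-it}$ is a unitary of $M_{code}$ whose image is a unitary of $M_{phys}$. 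Since such cocycles generate the von Neumann algebra as $\Phi$ and $t$ vary, this produces a $*$-homomorphic pushing map on a generating set of $M_{code}$; the commutant relative-entropy equality supplies, by the identical argument applied to $M'_{code}$, the pushing of $M'_{code}$ into $M'_{phys}$. The $\dagger$-lines are automatic because the pushing map is a $*$-homomorphism and the generators are unitary, and the whole assembles into statement $1$.

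The main obstacle is the first step of this reverse implication: recovering the \emph{operator} relation $\Delta^{phys}_{u\Psi|u\Phi} u = u\Delta^{code}_{\Psi|\Phi}$ from the \emph{scalar} data $\cals_{\Psi|\Phi}(M) = -\bra{\Psi}\log\Delta_{\Psi|\Phi}\ket{\Psi}$. Each such equality records only the expectation of $\log\Delta_{\Psi|\Phi}$ in the single vector $\Psi$, so no individual equality can see the full operator; the content lies in using the equalities simultaneously over the whole family of $\Phi$ and, crucially, for both $M$ and $M'$, the commutant entropies supplying the complementary half of the modular data through the relation between $\Delta_{\Psi|\Phi}(M')$ and $\Delta_{\Phi|\Psi}(M)$. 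This is a rigidity statement of the Petz sufficiency type: saturation of the data-processing inequality forces the relative modular operators to be intertwined. I expect the essential use of the von Neumann-algebraic machinery to be concentrated here, Kaplansky density mediating between the $C^*$ and weak closures, Banach–Alaoglu extracting weak limits of approximating operators, and the uniqueness in Tomita–Takesaki pinning down the resulting flow.
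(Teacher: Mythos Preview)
This theorem is not proven in the present paper: it is quoted verbatim from \cite{Kang:2018xqy} as an input to the proof of Theorem~\ref{thm:maintheoremCstar}. The only related arguments appearing here are the intertwining identities $uS_{code}=S_{phys}u$ and $u\Delta_{code}=\Delta_{phys}u$ (and their functional-calculus consequences), which the paper imports from \cite{Kang:2018xqy} in the course of proving points 3--5 of Theorem~\ref{thm:maintheoremCstar}. Your $1\Rightarrow 2$ direction is exactly this machinery and is correct; it is also, as far as the paper indicates, the approach of \cite{Kang:2018xqy}.

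Your $2\Rightarrow 1$ direction, however, has a real gap that you yourself flag but do not close. You write ``Granting it'' for the modular intertwining $\Delta^{phys}_{u\Psi|u\Phi}u=u\Delta^{code}_{\Psi|\Phi}$ and then build the reconstruction on top of that, but the passage from the scalar equalities $\langle\Psi|\log\Delta_{\Psi|\Phi}|\Psi\rangle$ (for all $\Phi$, for both $M$ and $M'$) to the operator intertwining is the entire content of this direction, and invoking ``Petz sufficiency type rigidity'' is not an argument. A second issue: even granting the intertwining, your route via Connes cocycles requires that the family $\{[D\phi:D\psi]_t:\Phi\text{ cyclic separating},\ t\in\mathbb{R}\}$ generate $M_{code}$ as a von Neumann algebra, which is a nontrivial claim you do not justify. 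Finally, the tools you list at the end (Kaplansky, Banach--Alaoglu, Tomita--Takesaki uniqueness) are the tools this paper uses in the proof of Theorem~\ref{thm:maintheoremCstar}, not of Theorem~\ref{thm:kangkolch}; they do not obviously bear on the scalar-to-operator step you need.
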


In \cite{Kang:2018xqy}, it was further proven that the conservation of relative entropies follows from bulk reconstruction under milder assumptions:

\bigskip
\begin{thm}[Kang-Kolchmeyer \cite{Kang:2018xqy}]
\label{thm:kangkolchloose}
Let $u : \calh_{code}\rightarrow \calh_{phys}$ be an isometry between two Hilbert spaces. Let $M_{code}$ and $M_{phys}$ be von Neumann algebras on $\calh_{code}$ and $\calh_{phys}$ respectively. Let $M^\prime_{code}$ and $M^\prime_{phys}$ respectively be the commutants of $M_{code}$ and $M_{phys}$. 

\noindent Suppose that

\begin{itemize}
\item There exists some state $\ket{\Omega} \in \calh_{code}$ such that $u\ket{\Omega} \in \calh_{phys}$ is cyclic and separating with respect to $M_{phys}$. 
\item $\forall \calo \in M_{code}\ \forall \calo^\prime \in M_{code}^\prime, \quad 
\exists\tilde{\calo} \in M_{phys}\ \exists \tilde{\calo}^\prime \in M_{phys}^\prime$ such that
	\begin{align} \nonumber
	\begin{split}
	\forall \ket{\Theta} \in \calh_{code} \quad 
	\begin{cases}
	u \calo \ket{\Theta} =  \tilde{\calo} u \ket{\Theta}, \quad
	&u \calo^\prime \ket{\Theta} =  \tilde{\calo}^\prime u \ket{\Theta}, \\
	u \calo^\dagger \ket{\Theta} =  \tilde{\calo}^\dagger u \ket{\Theta}, \quad
	&u \calo^{\prime \dagger} \ket{\Theta} = \tilde{\calo}^{\prime\dagger} u \ket{\Theta}.
	\end{cases}
	\end{split}
	\end{align}
\end{itemize}

\noindent Then, for any $\ket{\Psi}$, $\ket{\Phi} \in \calh_{code}$ with $\ket{\Psi}$ cyclic and separating with respect to $M_{code}$, 
\begin{itemize}
\item $u \ket{\Psi}$ is cyclic and separating with respect to $M_{phys}$ and $M_{phys}^\prime$,
\item $\cals_{\Psi|\Phi}(M_{code})= \cals_{u\Psi|u\Phi}(M_{phys}), \quad \cals_{\Psi|\Phi}(M_{code}^\prime)= \cals_{u\Psi|u\Phi}(M_{phys}^\prime),$
\end{itemize}
where $\cals_{\Psi|\Phi}(M)$ is the relative entropy.
\end{thm}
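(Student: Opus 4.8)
The plan is to promote the operator identities supplied by bulk reconstruction to intertwining relations between the full modular data of $\ket{\Psi}$ on $\mathcal{H}_{code}$ and of $u\ket{\Psi}$ on $\mathcal{H}_{phys}$, and then read off the relative entropies from the resulting spectral measures. First I would record the consequences of the hypothesis: since the four displayed identities hold for every $\ket{\Theta}$, they are operator identities $u\calo=\tilde{\calo}u$ and $u\calo^\dagger=\tilde{\calo}^\dagger u$ for $\calo\in M_{code}$, together with the commutant versions $u\calo'=\tilde{\calo}'u$, $u\calo'^\dagger=\tilde{\calo}'^\dagger u$ for $\calo'\in M_{code}'$. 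Left-multiplying by $u^\dagger$ and using $u^\dagger u=\mathrm{Id}$ gives the reconstruction formula $\calo=u^\dagger\tilde{\calo}u$. The key structural fact I would extract is that the compression $\tilde A\mapsto u^\dagger\tilde A u$ maps $M_{phys}$ into $M_{code}$: for $\calo'\in M_{code}'$ one computes $u^\dagger\tilde A u\,\calo'=u^\dagger\tilde A\tilde{\calo}'u=u^\dagger\tilde{\calo}'\tilde A u=\calo'u^\dagger\tilde A u$, using $u\calo'=\tilde{\calo}'u$, the adjoint identity $\calo'u^\dagger=u^\dagger\tilde{\calo}'$, and $[\tilde A,\tilde{\calo}']=0$; hence $u^\dagger\tilde A u\in M_{code}''=M_{code}$, and symmetrically $u^\dagger M_{phys}'u\subseteq M_{code}'$. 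This two–sided pullback is what makes the isometry behave like a conjugation on the relevant subspace.

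Next I would establish that $u\ket{\Psi}$ is cyclic and separating for both $M_{phys}$ and $M_{phys}'$. For separating, suppose $\tilde A\in M_{phys}$ annihilates $u\ket{\Psi}$. For any $\calo'\in M_{code}'$ one has $\tilde A u\calo'\ket{\Psi}=\tilde A\tilde{\calo}'u\ket{\Psi}=\tilde{\calo}'\tilde A u\ket{\Psi}=0$; since $\ket{\Psi}$ is separating for $M_{code}$ it is cyclic for $M_{code}'$, so $\{u\calo'\ket{\Psi}\}$ is dense in $u\mathcal{H}_{code}$ and $\tilde A$ kills all of $u\mathcal{H}_{code}$, in particular $\tilde A u\ket{\Omega}=0$; as $u\ket{\Omega}$ is separating for $M_{phys}$ by hypothesis, $\tilde A=0$. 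The mirror argument with $M_{code}$-pushforwards shows $u\ket{\Psi}$ is separating for $M_{phys}'$, equivalently cyclic for $M_{phys}$; thus $u\ket{\Psi}$ is cyclic and separating for $M_{phys}$ (and for $M_{phys}'$). Only here is the distinguished vector $\ket{\Omega}$ used.

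With $u\ket{\Psi}$ cyclic and separating, the relative Tomita operators $S:=S_{\Psi|\Phi}$ and $\hat S:=S_{u\Psi|u\Phi}$ are densely defined and closed, and the reconstruction identities give $\hat S u=uS$ and $\hat S' u=uS'$ on the cores $M_{code}\ket{\Psi}$ and $M_{code}'\ket{\Psi}$, hence $uS\subseteq\hat S u$ and $uS^\dagger\subseteq\hat S^\dagger u$ after closing (recall $S^\dagger=S'$, $\hat S^\dagger=\hat S'$). I would then show that the projection $P:=uu^\dagger$ onto $u\mathcal{H}_{code}$ commutes with $\hat S$ and $\hat S^\dagger$: on the core $\{\tilde A u\ket{\Psi}\}$ one has $P\tilde A u\ket{\Psi}=u(u^\dagger\tilde A u)\ket{\Psi}$ with $B:=u^\dagger\tilde A u\in M_{code}$, so $P\tilde A u\ket{\Psi}=\tilde B u\ket{\Psi}\in\mathrm{dom}\,\hat S$ and a short computation gives $\hat S P=P\hat S$ (and likewise for $\hat S^\dagger$). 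Therefore $P$ commutes with $\hat\Delta:=\hat S^\dagger\hat S=\Delta_{u\Psi|u\Phi}$, hence with every $\hat\Delta^{it}$; the subspace $u\mathcal{H}_{code}$ reduces $\hat\Delta$, the corestriction $u:\mathcal{H}_{code}\to u\mathcal{H}_{code}$ is unitary, and the generator relation $u^\dagger\hat\Delta u=\Delta$ upgrades through the functional calculus to $u^\dagger\hat\Delta^{it}u=\Delta^{it}$, i.e. $\hat\Delta^{it}u=u\Delta^{it}$. Consequently $\bra{u\Psi}\hat\Delta^{it}\ket{u\Psi}=\bra{\Psi}\Delta^{it}\ket{\Psi}$ for all real $t$; these are the Fourier transforms of the spectral measures of $\log\hat\Delta$ in $u\ket{\Psi}$ and of $\log\Delta$ in $\ket{\Psi}$, so the two measures coincide, and taking first moments yields $\cals_{\Psi|\Phi}(M_{code})=\cals_{u\Psi|u\Phi}(M_{phys})$. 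The commutant equality $\cals_{\Psi|\Phi}(M_{code}')=\cals_{u\Psi|u\Phi}(M_{phys}')$ follows by repeating the argument with primes throughout.

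The main obstacle is the passage from the bare generator intertwining $uS\subseteq\hat S u$ to the intertwining of the bounded unitary groups $\hat\Delta^{it}u=u\Delta^{it}$: for a genuine (non-surjective) isometry $u$, an identity $u^\dagger\hat\Delta u=\Delta$ does not by itself transfer through the functional calculus, so the argument must genuinely produce the invariance of $u\mathcal{H}_{code}$ under $\hat\Delta$. This is exactly where two–sided reconstruction is indispensable: it is the commutant identities, via the pullback $u^\dagger M_{phys}u\subseteq M_{code}$, that force $P$ to commute with both $\hat S$ and $\hat S^\dagger$ and hence with the modular flow. Making this rigorous requires careful bookkeeping of the domains and cores of the unbounded closed operators $S$, $\hat S$ and $\hat\Delta$ — in particular justifying that the core-level commutation $\hat S P=P\hat S$ extends to the closures, and that $P$ commuting with $\hat S,\hat S^\dagger$ indeed yields commutation with the spectral projections of $\hat\Delta$ — which is the delicate technical heart of the proof.
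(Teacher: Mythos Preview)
Your proposal is correct and follows essentially the same route as the original proof in \cite{Kang:2018xqy}, which this paper cites rather than reproves: the paper explicitly invokes (see Section~5.3 and the footnote there) the identities $uS_{code}=S_{phys}u$, $u\Delta_{code}=\Delta_{phys}u$, and the fact that $uu^\dagger$ commutes with the spectral projections of $\Delta_{phys}$, which is exactly your mechanism for passing from the core-level intertwining to $\hat\Delta^{it}u=u\Delta^{it}$. Your identification of the two-sided reconstruction (the commutant identities) as the ingredient that forces $P=uu^\dagger$ to commute with $\hat S$ and $\hat S^\dagger$ is precisely the point the paper leans on when it quotes \cite[p.~18]{Kang:2018xqy}.
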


This result is a suitable generalization of exact relations between bulk reconstruction and relative entropy equivalence between bulk and boundary in the finite-dimensional case to any (finite or infinite) von Neumann algebras without strong assumptions, but it still has limitations: it heavily relies on the existence of the isometry map $u$ between Hilbert spaces. In holographic codes defined by operator pushing rather than state pushing, such as the HaPPY code, it is actually difficult to directly construct such a map $u$ between Hilbert spaces. In this paper, we will develop a machinery to construct Hilbert spaces and an isometry $u$ directly out of a thermal state and an operator pushing map, and show that this construction is enough to deduce both bulk reconstruction and the conservation of relative entropies, hence giving a more general realization of the statement of Theorem \ref{thm:kangkolchloose}. 

\section{Main theorem and proof} \label{sec:main}
In this section, we state our main theorem (see Theorem \ref{thm:maintheoremCstar}) and motivate its setup along with its proof.

\subsection{The main theorem and its physical implications}
Our main theorem, given by Theorem \ref{thm:maintheoremCstar}, starts from having the bulk with its operator algebra as a $C^*$-algebra, but not necessarily having a Hilbert space. The theorem then provides as a result a Hilbert space construction, its vector states on the bulk, and the relative entropy equivalence between the boundary and the bulk. In short, this relaxes the conditions required for the entanglement wedge reconstruction  from Theorems \ref{thm:kangkolch} and \ref{thm:kangkolchloose} from \cite{Kang:2018xqy} and emphasizes the role of state-dependence.

\bigskip
\begin{thm:maintheoremCstar}
Let $\mathcal{A}_{code}$ and $\mathcal{A}_{phys}$ be two $C^*$-algebras, and let $\iota:\mathcal{A}_{code}\longrightarrow \mathcal{A}_{phys}$ be an isometric $C^*$-homomorphism. Let $\sigma_t$ be a strongly continuous one-parameter group of isometries of $\mathcal{A}_{phys}$ such that $\sigma_t(\iota(\mathcal{A}_{code}))\subset\iota(\mathcal{A}_{code})$, and $\omega$ be a KMS state on $\mathcal{A}_{phys}$ with respect to $\sigma_t$ at inverse temperature $\beta$. Then there exist Hilbert space representations $(\pi_\omega^{phys}, \mathcal{H}_{phys})$ and $(\pi_\omega^{code}, \mathcal{H}_{code})$ of $\mathcal{A}_{phys}$ and $\mathcal{A}_{code}$ such that:
\begin{enumerate}
\item there exists a Hilbert space isometry $u:\mathcal{H}_{code}\longrightarrow\mathcal{H}_{phys}$ such that
$$\forall A\in\mathcal{A}_{code},\quad \pi_\omega^{phys}(\iota(A))u=u\pi_\omega^{code}(A).$$

\item there exists a vector $\ket{\Omega}_{code}\in\mathcal{H}_{code}$ and a vector $\ket{\Omega}_{phys}\in\mathcal{H}_{phys}$ such that 
\begin{align*}
\forall A\in\mathcal{A}_{phys},\quad\omega(A)&=\bra{\Omega_{phys}}\pi_\omega^{phys}(A)\ket{\Omega_{phys}},\\
\forall A\in\mathcal{A}_{code},\quad\omega(\iota(A))&=\bra{\Omega_{code}}\pi_\omega^{code}(A)\ket{\Omega_{code}}.
\end{align*}

\item if $M_{code}=\pi_{\omega}^{code}(\mathcal{A}_{code})''$ and $M_{phys}=\pi_{\omega}^{phys}(\mathcal{A}_{phys})''$, then $\ket{\Omega_{code}}$ is cyclic and separating with respect to $M_{code}$ and $\ket{\Omega_{phys}}$ is cyclic and separating with respect to $M_{phys}$. Moreover,\begin{align} \nonumber
\begin{split}
\forall \calo \in M_{code}\ \forall \calo^\prime \in M_{code}^\prime, \quad 
\exists\tilde{\calo} \in M_{phys}\ \exists \tilde{\calo}^\prime \in M_{phys}^\prime\quad \text{such that}\quad\\
\forall \ket{\Theta} \in \calh_{code} \quad 
\begin{cases}
u \calo \ket{\Theta} =  \tilde{\calo} u \ket{\Theta}, \quad
&u \calo^\prime \ket{\Theta} =  \tilde{\calo}^\prime u \ket{\Theta}, \\
u \calo^\dagger \ket{\Theta} =  \tilde{\calo}^\dagger u \ket{\Theta}, \quad
&u \calo^{\prime \dagger} \ket{\Theta} = \tilde{\calo}^{\prime\dagger} u\ket{\Theta}.
\end{cases}\quad
\end{split}
\end{align}

\item if $\ket{\Phi}$ and $\ket{\Psi}$ are two vectors in $\mathcal{H}_{code}$ with $\ket{\Psi}$ cyclic and separating with respect to $M_{code}$, then $u\ket{\Psi}$ is cyclic and separating with respect to $M_{phys}$ and the equality of the relative entropy holds:
$$\mathcal{S}_{\Psi|\Phi}(M_{code})=\mathcal{S}_{u\Psi|u\Phi}(M_{phys}).$$

\item if $\Delta_{\Omega_{phys}}$ is the modular operator of $\ket{\Omega}_{phys}$ with respect to $M_{phys}$ and $\Delta_{\Omega_{code}}$ is the modular operator of $\ket{\Omega}_{code}$ with respect to $M_{code}$, then 
\begin{align*}
\forall A\in\mathcal{A}_{phys},\quad\pi_{\omega}^{phys}(\sigma_t(A))&=\Delta_{\Omega_{phys}}^{-\frac{it}{\beta}}\pi_{\omega}^{phys}(A)\Delta_{\Omega_{phys}}^\frac{it}{\beta},\\
\forall A\in\mathcal{A}_{code},\quad\pi_{\omega}^{phys}(\sigma_t(\iota(A)))u&=u\Delta_{\Omega_{code}}^{-\frac{it}{\beta}}\pi_{\omega}^{code}(A)\Delta_{\Omega_{code}}^{\frac{it}{\beta}}.
\end{align*}
\end{enumerate}
\end{thm:maintheoremCstar}

This result may seem complicated and unintuitive but it has a straightforward implication in physics. The first point of Theorem \ref{thm:maintheoremCstar} means that the $C^*$-isometry $\iota$ is \textit{implementable} in the considered representations. It follows that there exists a unitary operator which implements it at the level of the Hilbert spaces. The second point of Theorem \ref{thm:maintheoremCstar} shows that the constructed representations transform our KMS state and its pullback into vector states. The third point of Theorem \ref{thm:maintheoremCstar} shows that within this framework, bulk reconstruction is verified while the fourth point shows the conservation of relative entropies. Finally, the last point of Theorem \ref{thm:maintheoremCstar} relates bulk and boundary modular flows, in the spirit of JLMS \cite{Jafferis:2015del}.

Unlike its previous analogues, our result does not assume any pre-existing map between Hilbert spaces, and allows us to work directly at the level of the operators. Its physical meaning is also enhanced, as it makes a natural use of the boundary dynamics to construct the Hilbert spaces and von Neumann algebras. The boundary Hilbert spaces of states should be thought of as excitations of a thermal bath represented by the KMS state. The bulk Hilbert space can also have a similar interpretation, as the last item shows that the bulk and boundary modular flows coincide, therefore allowing $\ket{\Omega}_{code}$, which is thermal with respect to its bulk modular flow, to be thought of as thermal with respect to the flow of the system as a whole.

Note that a crucial assumption is the existence of a KMS state, which is not shown to be guaranteed for general quantum dynamical systems. However, it seems natural in physics that the dynamics are constructed is such a way that thermal equilibrium is possible at any finite temperature. In the particular case of trapeze dynamics on an infinite-dimensional HaPPY code which we will study later, the existence of such a state will come about quite naturally.

A physically more interesting question is whether the boundary KMS state is unique. In a theory with broken symmetry, we expect to have more than one choice of KMS state, yielding inequivalent GNS representations. However, our construction arbitrarily picks one of these representations, making the reasoning state-dependent from the very beginning. As we will only restrict ourselves to the setting of exact entanglement wedge reconstruction, this point will not need to be mentioned. However, in the full quantum gravity regime, we expect different Hilbert spaces, maybe corresponding to different entanglement wedges, to come into play simultaneously in an ensemble superposition. Such a picture is likely to be crucial for the emergence of gravity and the resolution of the black hole information paradox. We will return to this fascinating problem at the end of this paper and, hopefully, in future work. 

\subsection{Construction of the Hilbert spaces and the Hilbert space isometry} \label{sec:constructH}
Before proving the last point of the theorem, we will construct the representations $\pi_{\omega}^{phys}$ and $\pi_{\omega}^{code}$, and the mapping $u$. The representation construction is extremely simple once the algebraic quantum field theory techniques have been introduced: one simply needs to take $\pi_{\omega}^{phys}$ and $\pi_{\omega}^{code}$ to be the GNS representations of $\mathcal{A}_{phys}$ and $\mathcal{A}_{code}$ with respect to the state $\omega$ for $\pi_{\omega}^{phys}$, and the pullback state $\iota^*\omega$ for $\pi_{\omega}^{code}$, defined for $A\in\mathcal{A}_{code}$ by 
\begin{align}
\iota^*\omega(A):=\omega(\iota(A)).
\end{align}
Now let us define an isometric mapping $u:\mathcal{H}_{code}\longrightarrow \mathcal{H}_{phys}$. We know that 
\begin{align*}
\{\pi_{\omega}^{code}(A)\ket{\Omega}_{code},\;A\in\mathcal{A}_{code}\}
\end{align*}
is norm dense in $\mathcal{H}_{code}$. Let us define $u$ on this subspace by 
\begin{align}
u(\pi_{\omega}^{code}(A)\ket{\Omega}_{code})=\pi_{\omega}^{phys}(\iota(A))\ket{\Omega}_{phys}.
\end{align}
By the definition of the norm on the GNS representation, it is clear that this mapping is isometric. As it is a norm isometry, it extends to the whole Hilbert space as it maps Cauchy sequences to Cauchy sequences. Moreover, for $A\in\mathcal{A}_{code}$,
\begin{align}
\label{homomorphism}
\pi_{\omega}^{phys}(\iota(A))u=u\pi_{\omega}^{code}(A).
\end{align}
Indeed, let $A,O\in\mathcal{A}_{code}$. 
\begin{align*}
    \pi_\omega^{phys}(\iota(A))u\pi_\omega^{code}(O)\ket{\Omega}_{code}&=\pi_\omega^{phys}(\iota(A))\pi_\omega^{phys}(\iota(O))u\ket{\Omega}_{code}\\&=\pi_\omega^{phys}(\iota(AO))u\ket{\Omega}_{code}\\&=u\pi_\omega^{code}(A)\pi_\omega^{code}(O)\ket{\Omega}_{code}.
\end{align*}
Since the set $\{\pi_{\omega}^{code}(O)\ket{\Omega_{code}}\}$ being dense in $\mathcal{H}_{code}$, we have proved equation \eqref{homomorphism}.

\subsection{Proof of the theorem}
Section \ref{sec:constructH} proves the first point of Theorem \ref{thm:maintheoremCstar} and the second point simply follows from the definition of the GNS representation. We are left with the third and fourth points of Theorem \ref{thm:maintheoremCstar} to complete the proof.

In order to prove our claim about bulk reconstruction and relative entropies, we need to carefully check that the hypotheses of Theorem \ref{thm:kangkolch} (from \cite{Kang:2018xqy}) are satisfied. First, we need to prove that $\ket{\Omega}_{phys}=u\ket{\Omega}_{code}$ is cyclic and separating with respect to $M_{phys}$. This is a consequence of the fact that we are working in the GNS representation of a KMS state: as stated in section 2.4, GNS vector representatives of KMS states are always cyclic and separating both for the $C^*$-algebra and the induced von Neumann algebra. $\ket{\Omega}_{code}$ is also cyclic and separating for $M_{code}$. Indeed, it is easy to check, as $\mathrm{Im}(\iota)$ is stabilized by $\sigma_t$, that $\iota^*\omega$ is KMS for the time evolution defined on $\mathcal{A}_{code}$ 
\begin{equation}
\sigma_t^c(A):=\iota^{-1}(\sigma_t(\iota(A))).
\end{equation}
This time evolution is still an isometry, as $\mathrm{Im}(\iota)$ is stabilized by $\sigma_t$. Therefore, $\ket{\Omega_{code}}$ is cyclic and separating for the von Neumann algebra $\mathcal{M}_{code}$, by the same argument from section 2.4.

Then, for any $\mathcal{O}\in \mathcal{A}_{code}$, $\pi_{\omega}^{phys}(\iota(\calo))$ satisfies the requirements of $\tilde{\calo}.$ Then, if $\mathcal{O}\in M_{code}$, let $\mathcal{O}_\alpha\in A_{code}$ such that $(\pi_\omega^{code}(\mathcal{O}_\alpha))$ is of uniformly bounded norm and converges strongly towards $\mathcal{O}$. The existence of such a net\footnote{By net, we here mean that $\mathcal{O}_\alpha$ is indexed with $\alpha$ in a directed set. We return to the fact we have to use of nets rather than sequences in the next subsection.} is guaranteed by Kaplansky's density theorem. Then, $\pi_{\omega}^{phys}(\iota(\calo_\alpha))$ is of uniformly bounded norm, therefore, by the Banach-Alaoglu theorem, one can restrict one's attention to a subnet without loss of generality and suppose that $\pi_{\omega}^{phys}(\iota(\calo_\alpha))$ is weakly convergent towards $\tilde{\mathcal{O}}\in M_{phys}$. Then, for all  $\ket{\Theta}\in\mathcal{H}_{code}$ and $\ket{\chi}\in\mathcal{H}_{phys}$, 
\begin{align}
\bra{\chi}u\pi_\omega^{code}(\mathcal{O}_\alpha)\ket{\Theta}=\bra{\chi}\pi_{\omega}^{phys}(\iota(\calo_\alpha))u\ket{\Theta}.
\end{align}
By going to the limit, we obtain 
\begin{align}
u\pi_\omega^{code}(O)=\tilde{O}u.
\end{align}
Moreover, Hermitian conjugation being weakly continuous, $(\pi_{\omega}^{phys}(\iota(\calo_\alpha^\dagger)))=\pi_{\omega}^{phys}(\iota(\calo_\alpha))^\dagger$ (as $\iota$ is a $C^*$-homomorphism) converges weakly towards $\tilde{\mathcal{O}^\dagger}\in M_{phys}$. Then, 
\begin{align}
u\pi_\omega^{code}(O^\dagger)=\tilde{O}^\dagger u.
\end{align}
This proves that bulk reconstruction is possible for $M_{code}$ and $M_{phys}$.

Let us now turn to the commutants. For this, we will need the following lemma.
\bigskip
\begin{lem}
Let $J_{code}:=J_{\Omega_{code}}$ and $J_{phys}:=J_{\Omega_{phys}}$. Then,
\begin{align}
J_{phys}u=uJ_{code}.
\end{align}
\end{lem}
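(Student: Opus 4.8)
The plan is to establish the intertwining relation $J_{phys}u = uJ_{code}$ by using the characterization of the modular conjugation through the Tomita operator and the density of GNS-cyclic vectors. First I would recall that, since $\ket{\Omega_{code}}$ and $\ket{\Omega_{phys}}$ are cyclic and separating for $M_{code}$ and $M_{phys}$ respectively (established in the preceding paragraph), the Tomita operators $S_{code} := S_{\Omega_{code}}$ and $S_{phys} := S_{\Omega_{phys}}$ are densely defined and closed, with polar decompositions $S_{code} = J_{code}\Delta_{\Omega_{code}}^{1/2}$ and $S_{phys} = J_{phys}\Delta_{\Omega_{phys}}^{1/2}$. The key computational step is to show that $u$ intertwines the two Tomita operators in the appropriate sense, namely that $S_{phys}u = uS_{code}$ on the dense domain $\{\pi_\omega^{code}(A)\ket{\Omega_{code}} : A \in \mathcal{A}_{code}\}$. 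Indeed, for $A \in \mathcal{A}_{code}$, using the intertwining relation \eqref{homomorphism} and the definition of $u$,
\begin{align*}
S_{phys}\,u\,\pi_\omega^{code}(A)\ket{\Omega_{code}} &= S_{phys}\,\pi_\omega^{phys}(\iota(A))\ket{\Omega_{phys}} = \pi_\omega^{phys}(\iota(A))^\dagger\ket{\Omega_{phys}}\\
&= \pi_\omega^{phys}(\iota(A^\dagger))\ket{\Omega_{phys}} = u\,\pi_\omega^{code}(A^\dagger)\ket{\Omega_{code}} = u\,S_{code}\,\pi_\omega^{code}(A)\ket{\Omega_{code}},
\end{align*}
where I used that $\iota$ is a $*$-homomorphism and that $S_{phys}$ acts as $\pi_\omega^{phys}(B)\ket{\Omega_{phys}} \mapsto \pi_\omega^{phys}(B^\dagger)\ket{\Omega_{phys}}$ on the $C^*$-algebra (which extends to the von Neumann algebra closure). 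A subtlety here is that $S_{phys}$ a priori is defined via limits of $M_{phys}$-sequences, so I would note that since $\pi_\omega^{phys}(\iota(A)) \in M_{phys}$, the vector $\pi_\omega^{phys}(\iota(A))\ket{\Omega_{phys}}$ indeed lies in the domain of $S_{phys}$ with the stated image; one can also invoke Kaplansky density to pass between the $C^*$ and von Neumann pictures if needed.

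Next I would promote this relation on the Tomita operators to the modular conjugations. Since $u$ is an isometry, $u^\dagger u = \mathrm{id}_{\mathcal{H}_{code}}$, and the relation $S_{phys}u = uS_{code}$ (as closed operators, with the equality holding at least on a core) allows one to compare the polar decompositions. The cleanest route is: $S_{phys}^\dagger S_{phys} u = \Delta_{\Omega_{phys}}u$, and from $S_{phys}u = uS_{code}$ together with the adjoint relation one deduces $u^\dagger \Delta_{\Omega_{phys}} u = \Delta_{\Omega_{code}}$, hence (by functional calculus) $u^\dagger \Delta_{\Omega_{phys}}^{1/2} u = \Delta_{\Omega_{code}}^{1/2}$, and more precisely $\Delta_{\Omega_{phys}}^{1/2} u = u \Delta_{\Omega_{code}}^{1/2}$ on the relevant domain since $u$ maps the GNS Hilbert space onto the closed subspace $u\mathcal{H}_{code} \subset \mathcal{H}_{phys}$ which is invariant under $\Delta_{\Omega_{phys}}^{it}$ (this invariance itself follows from bulk reconstruction applied to the modular unitaries, or can be checked directly). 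Combining $J_{phys}\Delta_{\Omega_{phys}}^{1/2}u = S_{phys}u = uS_{code} = uJ_{code}\Delta_{\Omega_{code}}^{1/2}$ with $\Delta_{\Omega_{phys}}^{1/2}u = u\Delta_{\Omega_{code}}^{1/2}$ gives $J_{phys}u\Delta_{\Omega_{code}}^{1/2} = uJ_{code}\Delta_{\Omega_{code}}^{1/2}$ on the range of $\Delta_{\Omega_{code}}^{1/2}$, which is dense; since $J_{phys}u$ and $uJ_{code}$ are bounded (antiunitary composed with isometry), they agree everywhere, yielding $J_{phys}u = uJ_{code}$.

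The main obstacle I anticipate is the careful handling of unbounded-operator domains: establishing that $\Delta_{\Omega_{phys}}^{1/2}$ restricts compatibly through $u$ requires knowing that $u\mathcal{H}_{code}$ is invariant under the modular flow of $\Delta_{\Omega_{phys}}$, which is not entirely automatic and is where the KMS hypothesis does real work — $\sigma_t$ stabilizes $\iota(\mathcal{A}_{code})$, so its GNS implementation (which by the modified Tomita--Takesaki theorem, Theorem \ref{thm:modflow}, coincides with $\Delta_{\Omega_{phys}}^{-it/\beta}$) preserves $u\mathcal{H}_{code}$. Thus I would first record that $\pi_\omega^{phys}(\sigma_t(\iota(A)))u = u\,\pi_\omega^{code}(\sigma_t^c(A))$ and that this implements the modular flow, then deduce the invariance of $u\mathcal{H}_{code}$, and only then perform the polar-decomposition comparison above. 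An alternative, perhaps cleaner, line avoiding unbounded subtleties is to argue directly: $J_{phys}u$ maps $\pi_\omega^{code}(A)\ket{\Omega_{code}}$ and one verifies on the dense domain that both $J_{phys}u$ and $uJ_{code}$ send $\pi_\omega^{code}(A)\ket{\Omega_{code}}$ to the same vector by inserting the modular flow and taking $t$ real, using that $J_{phys}$ and $J_{code}$ are the unique antiunitaries appearing in the respective polar decompositions; but this still ultimately routes through the same domain bookkeeping, so I would present the functional-calculus version as the primary argument.
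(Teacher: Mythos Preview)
Your overall strategy coincides with the paper's: write $J = S\Delta^{-1/2}$ (equivalently use $\Delta^{1/2}$), establish the intertwining relations $S_{phys}u=uS_{code}$ and $\Delta_{phys}u=u\Delta_{code}$ (hence the same for functional calculus of $\Delta$), and read off $J_{phys}u=uJ_{code}$. Your direct GNS computation of $S_{phys}u=uS_{code}$ is exactly what the paper cites from \cite{Kang:2018xqy}, so there is no substantive difference there.

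The one genuine divergence is how you obtain the $\Delta$-intertwining. The paper simply invokes \cite[p.~17--18]{Kang:2018xqy}, where it is shown---using only the bulk reconstruction already established for $M_{code}$ and $M_{phys}$---that $u\Delta_{code}=\Delta_{phys}u$ and that $uu^\dagger$ commutes with the projection-valued measures of $\Delta_{phys}$, whence $u\Delta_{code}^{-1/2}=\Delta_{phys}^{-1/2}u$. Your route instead appeals to the KMS hypothesis: $\sigma_t$ stabilizes $\iota(\mathcal{A}_{code})$, its GNS implementation coincides with $\Delta_{\Omega_{phys}}^{-it/\beta}$ by the uniqueness in Theorem~\ref{thm:modflow}, hence $u\mathcal{H}_{code}$ is invariant under the modular group and functional calculus goes through. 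This is correct in content, but in the paper's logical ordering the identification of the GNS-implemented $\sigma_t$ with the modular flow on the full von Neumann algebra is precisely point~5 of Theorem~\ref{thm:maintheoremCstar}, proved \emph{after} this lemma. There is no actual circularity (point~5's proof does not use the lemma), but your argument would force a reordering, whereas the paper's route via \cite{Kang:2018xqy} uses only what has already been established at that stage.
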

\begin{proof}
It is a direct consequence of the findings of \cite{Kang:2018xqy}. Note that 
\begin{align}
J_{phys}=S_{phys}\Delta_{phys}^{-\frac{1}{2}}\quad\text{and}\quad J_{code}=S_{code}\Delta_{code}^{-\frac{1}{2}}.
\end{align}
By \cite[p.17]{Kang:2018xqy}, we know that 
\begin{align}
u\Delta_{code}=\Delta_{phys}u\quad\text{and}\quad uS_{code}=S_{phys}u.
\end{align} 
We also have that 
\begin{align}
u\Delta_{code}^{-\frac{1}{2}}=\Delta_{phys}^{-\frac{1}{2}}u.
\end{align}
Indeed, $\Delta_{phys}$ and $\Delta_{code}$ have a functional calculus, and in \cite[p.18]{Kang:2018xqy} it is proved that $u$ commutes with all the projection valued measures. Then, the result is obvious.
\end{proof}
Now, let $\calo^\prime\in M_{code}^\prime$. Thanks to the Tomita--Takesaki theorem, there exists $\calo\in M_{code}$ such that \begin{align}
\calo^\prime=J_{code}\calo J_{code}.
\end{align} 
Then, let us define the operator
\begin{align}
\tilde{\calo}^\prime:=J_{phys}\tilde{\calo} J_{phys},\quad \tilde{\calo}^\prime\in M_{phys}^\prime.
\end{align}
Then these operators with the isometry satisfy the following relation:
\begin{align}
\tilde{\calo^\prime}u=J_{phys}\tilde{\calo}J_{phys}u=J_{phys}\tilde{\calo} u J_{code}=J_{phys}u \calo J_{code}=u J_{code}\calo J_{code}=u\calo^\prime.
\end{align}
Note that this construction also maps $\mathcal{O}^\dagger$ to $\tilde{\mathcal{O}}^\dagger$.

Therefore, the bulk reconstruction hypothesis of Theorem \ref{thm:kangkolchloose} from \cite{Kang:2018xqy} is satisfied. Hence, the third point is proven. 

Furthermore, the cyclic and separating conditions assure that we can conclude that relative entropies are conserved, which enables us to complete the proof of the fourth point.

Let us now prove the first equality of the fifth and last point. For all $t$, 
\begin{align*}
\sigma_t(\mathcal{A}_{phys})=\mathcal{A}_{phys}.
\end{align*}
By uniqueness of the GNS representation up to unitaries, and since $\omega$ is invariant by $\sigma_t$, it follows that for all $t$, there exists a unitary operator $U_t$ on $\mathcal{H}_{phys}$ such that 
\begin{align*}
\pi_\omega^{phys}(\sigma_t(A))\ket{\Omega_{phys}}=U_t^\dagger\pi_\omega(A)\ket{\Omega_{phys}}.
\end{align*}
Then the unitary $U_t$ therefore extends $\sigma_t$ to the whole von Neumann algebra $M_{phys}$ through $A\longmapsto U_t^\dagger A U_t$. We now need to show that $\ket{\Omega}_{phys}$ defines a KMS state with respect to the whole von Neumann algebra for this new time evolution. For this, we will need to use Kaplansky's density theorem and a standard argument, closely following the proof of Proposition 12 of \cite{Pillet}:
Let $A, B\in M_{phys}$ such that $\|A\|\leq 1$ and $\|B\|\leq 1$. By Kaplansky's density theorem, there exists nets $(A_\alpha)$, $(B_\alpha)$ of operators in $\pi_\omega(\mathcal{A}_{phys})$ which strongly converge to $A$ and $B$ respectively. Let \begin{align}
d_\alpha:=\mathrm{max}(\|(A_\alpha-A)\ket{\Omega_{phys}}\|,\|(A^\dagger_\alpha-A^\dagger)\ket{\Omega_{phys}}\|,\|(B_\alpha-B)\ket{\Omega_{phys}}\|,\|(B^\dagger_\alpha-B^\dagger)\ket{\Omega_{phys}}\|).
\end{align}
Then, by definition of the nets, $\underset{\alpha}{\lim}\;d_\alpha=0.$ As $\ket{\Omega_{phys}}$ is a KMS state on $\mathcal{A}_{phys}$, there exists a function $F_\alpha$ which is holomorphic on the strip $\{0<\mathrm{Im}z<\beta\}$ and bounded on its closure, and that satisfies
\begin{align}
\label{1}
F_\alpha(t)&=\bra{\Omega_{phys}}A_\alpha U_t^\dagger B_\alpha U_t\ket{\Omega_{phys}},\\
\label{2}
F_\alpha(t+i\beta)&=\bra{\Omega_{phys}}U_t^\dagger B_\alpha U_tA_\alpha\ket{\Omega_{phys}},
\end{align}
and by the Cauchy--Schwarz inequality,
\begin{align}
\|F_\alpha(t)-F_{\alpha^\prime}(t)\|&\leq 2(d_\alpha+d_\alpha^\prime),\\
\|F_\alpha(t+i\beta)-F_{\alpha^\prime}(t+i\beta)\|&\leq 2(d_\alpha+d_\alpha^\prime).
\end{align} 
Then, by Hadamard's three line theorem \cite{Hadamard},
\begin{equation}
\underset{0\leq\mathrm{Im}z\leq\beta}{\mathrm{sup}}\abs{F_\alpha(z)-F_{\alpha^\prime}(z)}\leq 2(d_\alpha+d_{\alpha^\prime}).
\end{equation}
This allows us to conclude that $(F_\alpha)$ is Cauchy for the uniform convergence, so it converges uniformly to some $F$ which is analytic on $\{0<\mathrm{Im}z<\beta\}$ and continuous on its closure. Moreover, by going to the limit in \eqref{1} and \eqref{2}, one finally gets
\begin{align}
F(t)&=\bra{\Omega_{phys}}A U_t^\dagger B U_t\ket{\Omega_{phys}},\\
F(t+i\beta)&=\bra{\Omega_{phys}}U_t^\dagger B U_tA\ket{\Omega_{phys}}.
\end{align} 
As for any two elements of $M_{phys}$, one can always divide by the maximum of their norm so that they both lie in the unit ball, we therefore conclude that this proof shows that $\ket{\Omega}_{phys}$ defines a KMS state on the whole von Neumann algebra $M_{phys}$.
Then, by the uniqueness part of the Tomita--Takesaki theorem, we conclude that the automorphism defined by $U_t$ coincides with the modular evolution at inverse temperature $\beta$, which concludes the proof of the first equality. The second equality of the fifth point follows from the first point together once again with the fact, proven in \cite{Kang:2018xqy}, that $\Delta_{phys}$ and $\Delta_{code}$ have a functional calculus\footnote{More precisely, it is proven in \cite[p.18]{Kang:2018xqy} that $uu^{\dagger}$ commutes with the physical projections.} and that $u$ commutes with all the projection valued measures.

\section{Special case: separable Hilbert spaces}
\label{sec:separableH}
For the proof of the Theorem \ref{thm:maintheoremCstar}, we used \textit{nets} instead of sequences when taking approximations of operators.\footnote{Nets are a mathematical notion generalizing the use of sequences.} Nets can rigorously be defined in the following way:
\bigskip
\begin{defn}
A directed set is an ordered set such that every pair of elements admits a common upper bound. A net of bounded operators is a family of bounded operators indexed by a directed set.
\end{defn}

The necessity for the use of nets comes from the fact that the strong and weak operator topologies may not have a countable basis of open sets on bounded parts of $\mathcal{B}(\mathcal{H})$. 

However, if we suppose our $C^*$-algebras to be \textit{separable} (i.e. there exists a countable dense subset for the norm topology), a result of \cite{Takesaki} can allow us to use sequences, as the Hilbert space representations will a fortiori be separable. The result of \cite{Kang:2018xqy} is proven with sequences, therefore it assumes separable bulk and boundary Hilbert spaces. However, a similar proof would still hold if nets are used instead of sequences.

For the cases of separable Hilbert spaces, Theorem \ref{thm:maintheoremCstar} can be proven using sequences. Whenever a net $(\mathcal{O}_\alpha)$ of approximating operators inside $\mathcal{A}_{code}$ or $\mathcal{A}_{phys}$ is used, replacing $\alpha$, which takes values in a directed set, with an integer index $n$, implies the same conclusions.

In the case of the HaPPY code, the $C^*$-algebras are separable in the bulk. It follows that the Hilbert spaces are always separable for the HaPPY code. This is expected, as the infinite-dimensional HaPPY code is a countably infinite pentagon tiling, which itself is a sequence of finite layers. Therefore, proofs with sequences will work in this case.

Nevertheless, we want to emphasize here that the use of nets instead of sequences might be of importance in some systems, namely those for which the Hilbert space is not separable. These systems do arise in the context of algebraic quantum field theory in presence of a continuous superselection rule \cite{Halvorson}. However, we expect each boundary superselection sector to have a separable Hilbert space, provided it does not violate the split property \cite{Haag}.

\section{Link with the thermofield double and the two-sided black hole} \label{sec:TFD}
In this section, we interpret our abstract construction in physical terms. In particular, we explicitly show that a GNS representation of a KMS state can be seen as a thermofield double construction, as well as how the commutants of the von Neumann algebras can be interpreted as the other side of a wormhole, in a similar spirit to the mirror operators of Papadodimas and Raju \cite{Papadodimas:2013wnh,Papadodimas:2013jku,Papadodimas:2012aq}.

\subsection{The thermofield double as a GNS representation}
We now introduce the link between our construction and the thermofield double in AdS/CFT. We first consider the thermofield double construction. In a finite-dimensional Hilbert space $\mathcal{H}$, a thermal density matrix is defined as
\begin{align}
\rho_{\beta}:=\frac{e^{-\beta H}}{\mathrm{Tr}{e^{-\beta H}}},
\end{align}
where $\rho_\beta$ denotes a mixed thermal state at temperature $\beta$. 
The thermofield double construction then doubles the size of the Hilbert space to construct a purification of the mixed $\rho_\beta$. In $\mathcal{H}\otimes\mathcal{H}$, which is the doubled Hilbert space, the vector state is constructed as
\begin{align}
\ket{TFD}_\beta:=\sum_ie^{-\beta\frac{E_i}{2}}\ket{e_i}\otimes\ket{e_i},
\end{align}
where the $\ket{e_i}$ denotes an eigenbasis of the Hamiltonian. This thermofield double vector state defines a purification of the thermal density matrix $\rho_\beta$. In spirit, what the thermofield double does in finite dimensions is that it transforms any mixed thermal state into a vector state. This is exactly what the GNS representation of a KMS state does.

To be more precise, let us consider the finite-dimensional context with an $n$-dimensional Hilbert space $\mathcal{H}_n$. For a Gibbs state, as long as the energy levels are nondegenerate, which we will suppose for simplicity, the thermal density matrix is invertible, which means that the ideal of the GNS representation is trivial. The Hilbert space is therefore directly constructed out of the algebra $\mathcal{M}_n(\mathbb{C})$, which is isomorphic to $\mathcal{H}_n\otimes\mathcal{H}_n$, and the density matrix $\rho_\beta$ is now represented as a vector state. As a conclusion, the GNS representation does exactly the same thing as the thermofield double in the nondegenerate finite-dimensional case.

The interpretation usually given to the thermofield double construction in AdS/CFT is that it constructs the second boundary of a double-sided AdS wormhole. It is therefore tempting to give the same meaning to the GNS representation. We shall explore these links in further detail in the next subsection.

Note that, throughout this discussion, we have carefully omitted to call vector states pure states. The reason is that purity becomes a trickier concept in infinite dimensions, and not all vector states are pure. Without going into any detail, we just stress the fact that in infinite dimensions, not all vector states are extremal points of the convex space of expectation value functionals on an operator algebra. Our construction therefore has no reason to give pure states in that sense.

\subsection{An explicit realization of the Papadodimas-Raju proposal}

In finite dimensions, we have seen that doing a GNS representation of a matrix algebra on a Hilbert space (i.e. a finite-dimensional factor) with respect to a nondegenerate thermal state amounts to doubling the number of degrees of freedom in the Hilbert space to realize the thermal state as a vector state. It is then usual to interpret it as the construction of the other side of a wormhole. 

However, in the infinite-dimensional case, things are not that simple as it is no longer generally possible to factorize the Hilbert space. Rather than seeing the two sides of the wormhole as encoded by two tensor factors of the Hilbert space, it is easier to consider them at the level of the algebras of observables: one side of the wormhole corresponds to the von Neumann algebras $M_{code}$ and $M_{phys}$, while the other one corresponds to their commutants $M^\prime_{code}$ and $M^\prime_{phys}$. As we saw in the proof of our main result, these commutants are assured to be isomorphic to the initial algebras, due to the Tomita--Takesaki theorem. Therefore, one can see these as encoding an identical copy of the physical system: here, the other side of the wormhole. 

This idea goes back to the work of Papadodimas and Raju, who argue that using modular conjugation in the exact same fashion as we did can make the black hole interior's reconstruction possible \cite{Papadodimas:2013wnh,Papadodimas:2013jku,Papadodimas:2012aq}. They emphasize that their construction is state-dependent, as they have to choose a reference state - which would here correspond to our initial choice of a KMS state. It is therefore interesting to see that the Papadodimas-Raju proposal appears in a very canonical way through the GNS representation.

Another issue that arises in the Papadodimas-Raju proposal is that the object they take the commutant of is not actually an algebra. We might have to face similar technical complications when we try to prove an approximate result on entanglement wedge reconstruction, instead of an exact one. In particular, we expect very excited states (i.e. operators of high energy acting on the cyclic and separating KMS state) to be hard to reconstruct with a good precision, due to the $\frac{1}{N}$ corrections. This could introduce a new level of state-dependence in our construction. We leave such investigations on approximate entanglement wedge reconstruction to future work.

\section{An example: infinite-dimensional HaPPY codes} \label{sec:HaPPY}
We now apply our result to study an infinite-dimensional analogue of the HaPPY code. We will use some of the results proven in our companion paper \cite{MonicaElliott}, where we conduct an extensive study of a model of dynamics. However, our setup will be a bit different. We start by briefly reviewing the model of \cite{MonicaElliott}, before introducing our new setup through reverse engineering. We then define our $C^*$-algebras and their dynamics, for which we prove the existence of KMS states at all temperatures. We conclude by constructing our Hilbert spaces and associated von Neumann algebras, and proving the conservation of relative entropies with our theorem. 

\subsection{The infinite-dimensional HaPPY code: a review}
Here, we explain how the HaPPY code levels are recursively constructed  directly at the {\it level} of the Hilbert spaces in \cite{MonicaElliott}. For a general review of the HaPPY code, see \cite{Pastawski:2015qua}.

We first define the notion of the {\it {level n}}. Given level 1 tile, level 2 tiles are defined to be all the (pentagon) tiles that share any vertices and/or edges with level 1 tile. Likewise, level 3 tiles are defined to be all tiles (except for tiles that are already in levels 1 or 2) that share any edges and/or vertices with level 2 tiles.

While the geodesic lines separate individual tiles in the Poincare disc into pentagons, for the purpose of the tensor network, we can start from the center as the first pentagon tile. Every tile in the pentagon tiling from the center towards the boundary corresponds to precisely one single level. Henceforth, level 1 consists simply of the center pentagon, as represented in blue in Figure \ref{fig:level2tensornetwork}.

\begin{figure}[H]
	\vspace{8mm}
	\centering
	\includegraphics[width=0.68\linewidth]{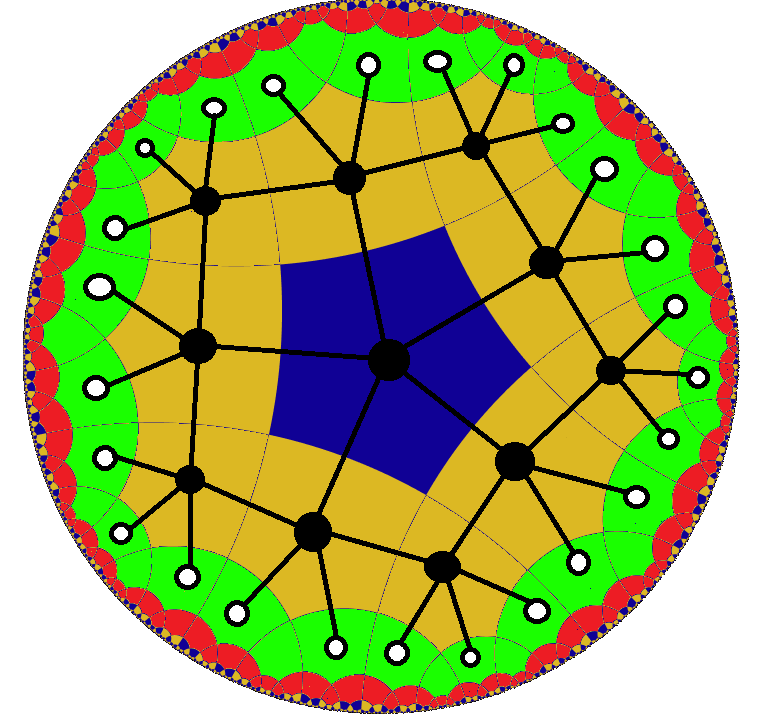}
	\vspace{5mm}
	\caption{This is the HaPPY code at level 2 with bulk and boundary nodes. The black qubits are the bulk nodes and the white dangling qubits are the boundary nodes. The blue region corresponds to level 1 bulk, which is represented with a single central node. The yellow regions are the new additional tiles for the bulk of level 2. (Level $n$ bulk always include level $n-1$ bulk. For example, the blue tile from level 1 is also a part of bulk for level $2$ and beyond.) The green regions are the addendums for the bulk of level 3. The red regions are the addendums of the bulk of level 4.}
	\label{fig:level2tensornetwork}
\end{figure}

Now that we know how levels are defined, we will describe how each level can be used to produce a state of finitely-many qubits. First, start at level 1. Put a bulk qubit in level 1 tile. Then, draw boundary legs through each edge of level 1 tile.

The {\it {level 1 tensor network}} is defined to be a map from one bulk qubit to five boundary qubits. We denote by $\widetilde{ N}_n$ the number of boundary qubits of level $n$ tensor network. In the case of level 1, $\widetilde{N} _1 = 5$. For example, we can put the bulk qubit in the state $\ket{\downarrow}$. Then, level 1 tensor network maps a given bulk state to a particular state of $\widetilde{N}_1 =5$ boundary qubits, which we call {\it {level 1 boundary state.}}

The {\it {level n tensor network}} is constructed as following. We first put a bulk qubit in every tile in level $n$. Next, we draw a connecting leg through any edge that borders at least one tile with a bulk qubit. Of course, each edge borders two tiles. If a leg connects two bulk qubits, then it defines a contraction of two tensor indices. However, if a leg connects a bulk qubit with a tile in level $n+1$, then it is a ``dangling leg'' and thus defines a boundary qubit. For example, the level 2 tensor network is represented in Figure \ref{fig:level2tensornetwork}.
 
Each choice of bulk state at a given level defines a boundary state at the same level.  For example, let us suppose that all bulk qubits are in the state $\ket{\downarrow}$. This defines a state of $\widetilde{N}_n$ boundary qubits at every level $n$.

Let $\calh_n$ for $n \in \mathbb{N}$ denote the Hilbert space of the bulk qubits at level $n$. For all $n \in \mathbb{N}$, $\calh_n$ is identified with (or, mapped into) the subspace of $\calh_{n+1}$ where all the qubits in level $n+1$ are in a reference state, which for now we may take to be the state $\ket{\downarrow}$. The relationship between the $\calh_n$ Hilbert spaces is expressed as
\begin{align*}
\calh_1 \rightarrow \calh_2 \rightarrow \calh_3 \rightarrow \cdots \quad ,
\end{align*}
where each arrow $\rightarrow$ denotes the map from a Hilbert space into a Hilbert space of the next level. We call this $\calh_n$ the {\it code pre-Hilbert space at level $n$}.

For each $\calh_n$, we can use the tensor network at level $n$ to map the bulk qubits at level $n$ into their corresponding white (boundary) qubits. The white qubits at level $n$ make up the {\it physical pre-Hilbert space at level $n$}, or $\widetilde{\calh}_n$. We represent the isometry produced by the tensor network with a down arrow, which can be expressed as
\begin{align*}
\begin{array}{cc}
\calh_n & \\ 
\downarrow & \\ 
\widetilde{\calh}_n &  .
\end{array}
\end{align*}
Putting these maps together, we establish the following:
\begin{align*}
\begin{array}{cccccccc}
\calh_1 & \rightarrow & \calh_2 & \rightarrow & \calh_3 & \rightarrow & \cdots & \\
\downarrow & & \downarrow & & \downarrow & & \downarrow & \\ 
\widetilde{\calh}_1 &  & \widetilde{\calh}_2 &  & \widetilde{\calh}_3 &  & \cdots & \quad .
\end{array}
\end{align*}

The missing piece is the boundary-to-boundary map, which is the isometric map which takes a state in $\widetilde{\calh}_n$ into a state in $\widetilde{\calh}_{n + 1}$. This can be constructed by putting tensors in all tiles in level $n+1$ and using the tensor network to map the boundary qubits in level $n$ to the boundary qubits in level $n+1$. The bulk indices of these tensors (which correspond to bulk qubits) must be taken care of in a systematic way.  For example, if they are all put in the reference state $\ket{\downarrow}$, the boundary map is constructed in such a way that the following diagram is commutative:
\begin{align*}
\begin{array}{cccccccc}
\calh_1 & \rightarrow & \calh_2 & \rightarrow & \calh_3 & \rightarrow & \cdots & \\
\downarrow & & \downarrow & & \downarrow & & \downarrow & \\ 
\widetilde{\calh}_1 & \rightarrow & \widetilde{\calh}_2 & \rightarrow & \widetilde{\calh}_3 &  & \cdots & \quad .
\end{array}
\end{align*}

For both the code and physical pre-Hilbert spaces, the right arrow (bulk-to-bulk or boundary-to-boundary map) means that a Hilbert space is isometrically mapped into, and identified with a subspace of, the Hilbert space of the next level. The problem of such an approach is that it will typically create very complicated maps between the boundary Hilbert spaces that will be impossible to be explicitly generated. In \cite{MonicaElliott}, we construct an explicit sequence of maps between boundaries using Bell pairs, in such a way that the entanglement in the bulk can be tracked.

The {\it code pre-Hilbert space} is the disjoint union of all the $\calh_n$ quotiented by the equivalence relation that relates two states if one is the image of the other by the inclusion map. Alternatively, the code pre-Hilbert space is the set of states where all but finitely many bulk qubits are in the reference state. The {\it physical pre-Hilbert space} is defined as the disjoint union of all the $\widetilde{\calh}_n$ quotiented by the equivalence relation that relates two boundary states if one is the image of the other by the bulk-to-boundary map, with the bulk in the reference state. Note that there is no way to define a physical reference state independently from a bulk reference state. Finally, the bulk and boundary Hilbert spaces are obtained by taking the norm completions of the pre-Hilbert spaces.\footnote{In mathematical terms, we have constructed direct limit Hilbert spaces for the bulk and the boundary, in a similar fashion as the semicontinuous limit described in \cite{Osborne:2017woa}.}

One of the advantages of this approach is that it reproduces the physics of the bulk all the way to the AdS radius.
This is relevant when considering deep bulk objects like black holes. On the other hand, it is not clear if the successive mappings of a bulk state will converge in some sense all the way up to the boundary. As the HaPPY code is a stronger approximation of the physics close to the boundary than deep in the bulk, since the dominant portion of the bulk nodes are adjacent to the boundary, the physical interpretation of this picture remains debatable.

Another problem with this picture is that it is constructed at the level of the Hilbert spaces. If we want to map Hamiltonian theories all the way up to the boundary, we will need to perform \textit{operator pushing} instead of \textit{state pushing}. In \cite{MonicaElliott}, we were successful in studying operator-pushing for a particular bulk Hamiltonian through the successive levels of the HaPPY code. We will revisit our results in this section, and formulate entanglement wedge reconstruction at the level of the $C^\ast$-algebras of bulk and boundary observables using Theorem \ref{thm:maintheoremCstar} in the slightly different setup of reverse engineering. The Hilbert space will then be constructed through a GNS representation, providing an alternative to the reference state approach.

\subsection{Reverse engineering for the HaPPY code}
The main drawback of the setting of  \cite{MonicaElliott} is that a bulk operator on a finite number of qubits at a finite distance from each other will be sent through operator pushing to increasingly distant qubits as the boundary grows. It will therefore be hard to keep track of explicit bulk-to-boundary mappings in that context.

Another way of defining an infinite-dimensional entanglement wedge in the HaPPY code, is to reason the other way around: instead of starting from the center the bulk and extending the boundary all the way out, one can consider an infinite string of qubits on the boundary as the Hilbert space of a given entanglement wedge, and reconstruct the bulk layer by layer, in a way that is consistent with the layer structure of the HaPPY code. We shall call such a process \textit{reverse engineering}.

More specifically, we shall restrict ourselves to an entanglement wedge of a trapeze shape in the bulk, as shown on Figure \ref{reverse}. Figure \ref{reverse} illustrates that at any finite cut of the HaPPY code, such a wedge can be reconstructed by starting from the boundary, superposing trapezes made of 2-clusters (in red), and constructing the code underneath layer by layer (in blue then in grey), by adding 2-clusters on the bottom of the legs, and 3-clusters in the middle of 2-clusters. Our method is to take this idea seriously and to expand this construction up to an infinite number of bulk layers.

In terms of physics, this just means that we take the viewpoint that the HaPPY code is more adapted to describe the physics near the boundary than deep in the bulk. A downside is that black holes and wormholes will not be possible to immediately see within that framework, as the construction will never reach the center of the bulk. 

\begin{figure}[H]
\centering
\begin{subfigure}{1\textwidth}
\begin{tikzpicture}
\node[draw,circle,thick,scale=1,black,label={[label distance=1mm]north:Center}] (C) at (0,2) {};
\draw (-4,0)--(C)--(2,3);
\draw (4,0)--(C)--(-2,3);
\draw (0,4)--(C);
\draw (-8,0)--(8,0);
\draw (-8,-2.5)--(8,-2.5);
\draw (-8,-4.5)--(8,-4.5);
\draw[draw,line width=1mm] (-4,0)--(4,0);
\draw[draw,line width=1mm] (-6,-2.5)--(-4,0);
\draw (-4,0)--(-2,-2.5);
\draw (2,-2.5)--(4,0);
\draw[draw,line width=1mm] (4,0)--(6,-2.5);
\draw[draw,line width=1mm] (-6,-2.5)--(-4,0);
\draw (-4,0)--(-2,-2.5);
\draw[draw,line width=1mm] (-6,-2.5)--(-6-0.9,-4.5);
\draw (-6,-2.5)--(-6+0.5,-4.5);
\draw (-2+0.9,-4.5)--(-2,-2.5);
\draw (-2-0.5,-4.5)--(-2,-2.5);
\draw (-4,-2.5)--(-4+0.9,-4.5);
\draw (-4,-2.5)--(-4,-4.5);
\draw (-4,-2.5)--(-4-0.9,-4.5);
\draw[draw,line width=1mm] (6,-2.5)--(6+0.9,-4.5);
\draw (6,-2.5)--(6-0.5,-4.5);
\draw (2+0.5,-4.5)--(2,-2.5);
\draw (2-0.9,-4.5)--(2,-2.5);
\draw (4,-2.5)--(4+0.9,-4.5);
\draw (4,-2.5)--(4,-4.5);
\draw (4,-2.5)--(4-0.9,-4.5);
\draw[draw,line width=1mm] (-6-0.9-0.4,-6)--(-6-0.9,-4.5);
\draw (-6-0.9+0.2,-6)--(-6-0.9,-4.5);
\draw (-6-0.2-0.3,-6)--(-6-0.2,-4.5);
\draw (-6-0.2,-6)--(-6-0.2,-4.5);
\draw (-6-0.2+0.3,-6)--(-6-0.2,-4.5);
\draw (-6+0.5-0.2,-6)--(-6+0.5,-4.5);
\draw (-6+0.5+0.2,-6)--(-6+0.5,-4.5);
\draw (-4-0.9-0.2,-6)--(-4-0.9,-4.5);
\draw (-4-0.9+0.1,-6)--(-4-0.9,-4.5);
\draw (-4-0.65,-6)--(-4-0.45,-4.5);
\draw (-4-0.45,-6)--(-4-0.45,-4.5);
\draw (-4-0.25,-6)--(-4-0.45,-4.5);
\draw (-4-0.1,-6)--(-4,-4.5);
\draw (-4+0.1,-6)--(-4,-4.5);
\draw (-4+0.65,-6)--(-4+0.45,-4.5);
\draw (-4+0.45,-6)--(-4+0.45,-4.5);
\draw (-4+0.25,-6)--(-4+0.45,-4.5);
\draw (-4+0.9-0.1,-6)--(-4+0.9,-4.5);
\draw (-4+0.9+0.2,-6)--(-4+0.9,-4.5);
\draw (-2-0.5-0.2,-6)--(-2-0.5,-4.5);
\draw (-2-0.5+0.2,-6)--(-2-0.5,-4.5);
\draw (-2+0.2-0.3,-6)--(-2+0.2,-4.5);
\draw (-2+0.2,-6)--(-2+0.2,-4.5);
\draw (-2+0.2+0.3,-6)--(-2+0.2,-4.5);
\draw (-2+0.9-0.2,-6)--(-2+0.9,-4.5);
\draw (-2+0.9+0.4,-6)--(-2+0.9,-4.5);
\draw (8-6-0.9-0.4,-6)--(8-6-0.9,-4.5);
\draw (8-6-0.9+0.2,-6)--(8-6-0.9,-4.5);
\draw (8-6-0.2-0.3,-6)--(8-6-0.2,-4.5);
\draw (8-6-0.2,-6)--(8-6-0.2,-4.5);
\draw (8-6-0.2+0.3,-6)--(8-6-0.2,-4.5);
\draw (8-6+0.5-0.2,-6)--(8-6+0.5,-4.5);
\draw (8-6+0.5+0.2,-6)--(8-6+0.5,-4.5);
\draw (8-4-0.9-0.2,-6)--(8-4-0.9,-4.5);
\draw (8-4-0.9+0.1,-6)--(8-4-0.9,-4.5);
\draw (8-4-0.65,-6)--(8-4-0.45,-4.5);
\draw (8-4-0.45,-6)--(8-4-0.45,-4.5);
\draw (8-4-0.25,-6)--(8-4-0.45,-4.5);
\draw (8-4-0.1,-6)--(8-4,-4.5);
\draw (8-4+0.1,-6)--(8-4,-4.5);
\draw (8-4+0.65,-6)--(8-4+0.45,-4.5);
\draw (8-4+0.45,-6)--(8-4+0.45,-4.5);
\draw (8-4+0.25,-6)--(8-4+0.45,-4.5);
\draw (8-4+0.9-0.1,-6)--(8-4+0.9,-4.5);
\draw (8-4+0.9+0.2,-6)--(8-4+0.9,-4.5);
\draw (8-2-0.5-0.2,-6)--(8-2-0.5,-4.5);
\draw (8-2-0.5+0.2,-6)--(8-2-0.5,-4.5);
\draw (8-2+0.2-0.3,-6)--(8-2+0.2,-4.5);
\draw (8-2+0.2,-6)--(8-2+0.2,-4.5);
\draw (8-2+0.2+0.3,-6)--(8-2+0.2,-4.5);
\draw (8-2+0.9-0.2,-6)--(8-2+0.9,-4.5);
\draw[draw,line width=1mm] (8-2+0.9+0.4,-6)--(8-2+0.9,-4.5);
\end{tikzpicture}
\vspace{3mm}

\subcaption{An entanglement wedge at a finite cutoff.}
\end{subfigure}
\vspace{8mm}

\begin{subfigure}{1\textwidth}
\begin{tikzpicture}
\draw[color=red] (4,0)--(5,1);
\draw[color=red] (-4,0)--(-5,1);
\draw[color=red] (-8,0)--(8,0);
\draw[color=blue] (-8,-2.5)--(-2,-2.5);
\draw[color=blue] (8,-2.5)--(2,-2.5);
\draw[color=red] (-2,-2.5)--(2,-2.5);
\draw (-8,-4.5)--(-2+0.9,-4.5);
\draw (8,-4.5)--(2-0.9,-4.5);
\draw[color=red] (-2+0.9,-4.5)--(2-0.9,-4.5);
\draw[color=red] (-4,0)--(-2,-2.5);
\draw[color=red] (2,-2.5)--(4,0);
\draw[color=red] (4,0)--(6,-2.5);
\draw[color=red] (-6,-2.5)--(-4,0);
\draw[color=blue] (-6,-2.5)--(-6-0.9,-4.5);
\draw[color=blue] (-6,-2.5)--(-6+0.5,-4.5);
\draw[color=red] (-2+0.9,-4.5)--(-2,-2.5);
\draw[color=red] (-2-0.5,-4.5)--(-2,-2.5);
\draw[color=blue] (-4,-2.5)--(-4+0.9,-4.5);
\draw[color=blue] (-4,-2.5)--(-4,-4.5);
\draw[color=blue] (-4,-2.5)--(-4-0.9,-4.5);
\draw[color=blue] (6,-2.5)--(6+0.9,-4.5);
\draw[color=blue] (6,-2.5)--(6-0.5,-4.5);
\draw[color=red] (2+0.5,-4.5)--(2,-2.5);
\draw[color=red] (2-0.9,-4.5)--(2,-2.5);
\draw[color=blue] (4,-2.5)--(4+0.9,-4.5);
\draw[color=blue] (4,-2.5)--(4,-4.5);
\draw[color=blue] (4,-2.5)--(4-0.9,-4.5);
\draw (-6-0.9-0.4,-6)--(-6-0.9,-4.5);
\draw (-6-0.9+0.2,-6)--(-6-0.9,-4.5);
\draw (-6-0.2-0.3,-6)--(-6-0.2,-4.5);
\draw (-6-0.2,-6)--(-6-0.2,-4.5);
\draw (-6-0.2+0.3,-6)--(-6-0.2,-4.5);
\draw (-6+0.5-0.2,-6)--(-6+0.5,-4.5);
\draw (-6+0.5+0.2,-6)--(-6+0.5,-4.5);
\draw (-4-0.9-0.2,-6)--(-4-0.9,-4.5);
\draw (-4-0.9+0.1,-6)--(-4-0.9,-4.5);
\draw (-4-0.65,-6)--(-4-0.45,-4.5);
\draw (-4-0.45,-6)--(-4-0.45,-4.5);
\draw (-4-0.25,-6)--(-4-0.45,-4.5);
\draw (-4-0.1,-6)--(-4,-4.5);
\draw (-4+0.1,-6)--(-4,-4.5);
\draw (-4+0.65,-6)--(-4+0.45,-4.5);
\draw (-4+0.45,-6)--(-4+0.45,-4.5);
\draw (-4+0.25,-6)--(-4+0.45,-4.5);
\draw (-4+0.9-0.1,-6)--(-4+0.9,-4.5);
\draw (-4+0.9+0.2,-6)--(-4+0.9,-4.5);
\draw (-2-0.5-0.2,-6)--(-2-0.5,-4.5);
\draw (-2-0.5+0.2,-6)--(-2-0.5,-4.5);
\draw (-2+0.2-0.3,-6)--(-2+0.2,-4.5);
\draw (-2+0.2,-6)--(-2+0.2,-4.5);
\draw (-2+0.2+0.3,-6)--(-2+0.2,-4.5);
\draw[color=red] (-2+0.9-0.2,-6)--(-2+0.9,-4.5);
\draw[color=red] (-2+0.9+0.4,-6)--(-2+0.9,-4.5);
\draw[color=red] (8-6-0.9-0.4,-6)--(8-6-0.9,-4.5);
\draw[color=red] (8-6-0.9+0.2,-6)--(8-6-0.9,-4.5);
\draw (8-6-0.2-0.3,-6)--(8-6-0.2,-4.5);
\draw (8-6-0.2,-6)--(8-6-0.2,-4.5);
\draw (8-6-0.2+0.3,-6)--(8-6-0.2,-4.5);
\draw (8-6+0.5-0.2,-6)--(8-6+0.5,-4.5);
\draw (8-6+0.5+0.2,-6)--(8-6+0.5,-4.5);
\draw (8-4-0.9-0.2,-6)--(8-4-0.9,-4.5);
\draw (8-4-0.9+0.1,-6)--(8-4-0.9,-4.5);
\draw (8-4-0.65,-6)--(8-4-0.45,-4.5);
\draw (8-4-0.45,-6)--(8-4-0.45,-4.5);
\draw (8-4-0.25,-6)--(8-4-0.45,-4.5);
\draw (8-4-0.1,-6)--(8-4,-4.5);
\draw (8-4+0.1,-6)--(8-4,-4.5);
\draw (8-4+0.65,-6)--(8-4+0.45,-4.5);
\draw (8-4+0.45,-6)--(8-4+0.45,-4.5);
\draw (8-4+0.25,-6)--(8-4+0.45,-4.5);
\draw (8-4+0.9-0.1,-6)--(8-4+0.9,-4.5);
\draw (8-4+0.9+0.2,-6)--(8-4+0.9,-4.5);
\draw (8-2-0.5-0.2,-6)--(8-2-0.5,-4.5);
\draw (8-2-0.5+0.2,-6)--(8-2-0.5,-4.5);
\draw (8-2+0.2-0.3,-6)--(8-2+0.2,-4.5);
\draw (8-2+0.2,-6)--(8-2+0.2,-4.5);
\draw (8-2+0.2+0.3,-6)--(8-2+0.2,-4.5);
\draw (8-2+0.9-0.2,-6)--(8-2+0.9,-4.5);
\draw (8-2+0.9+0.4,-6)--(8-2+0.9,-4.5);
\end{tikzpicture}
\vspace{3mm}

\subcaption{The reverse engineering of the HaPPY code.}
\end{subfigure}
\vspace{3mm}

\caption{The straight-forward HaPPY code from the center and its reverse engineering from the boundary for the infinite-dimensional analog of the HaPPY code.\label{reverse}}
\end{figure}
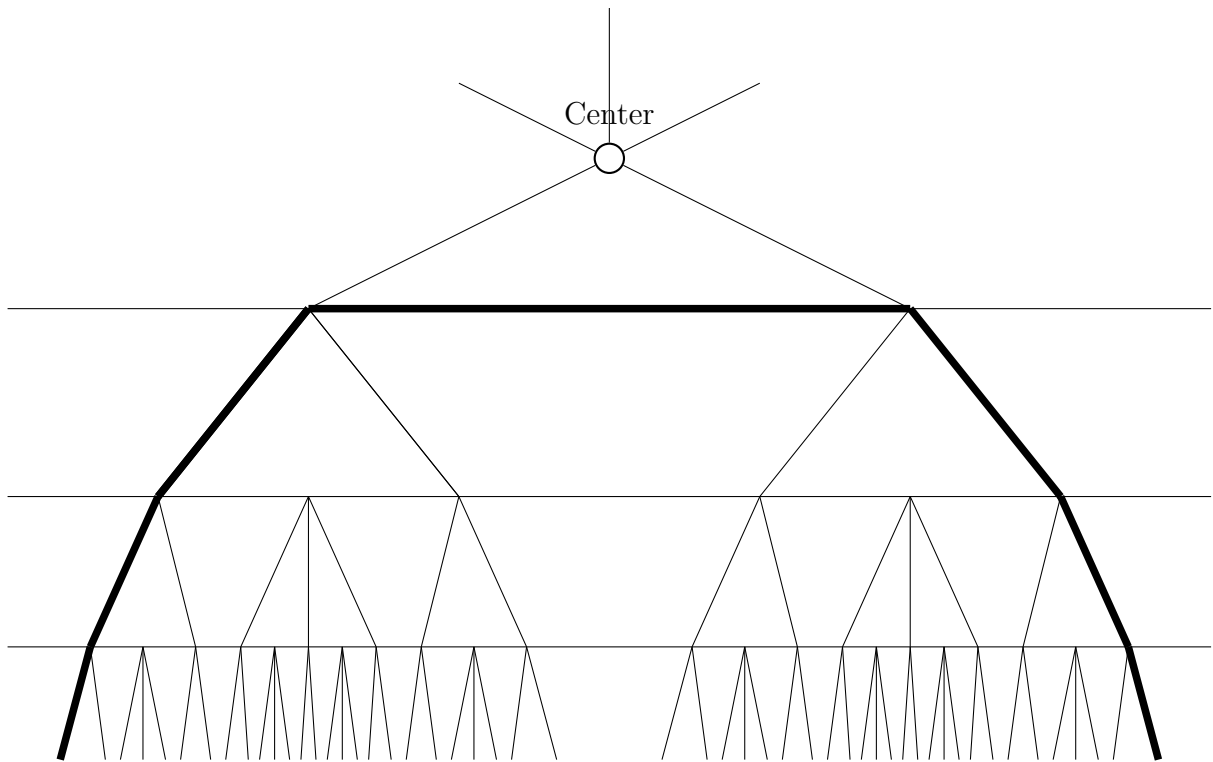
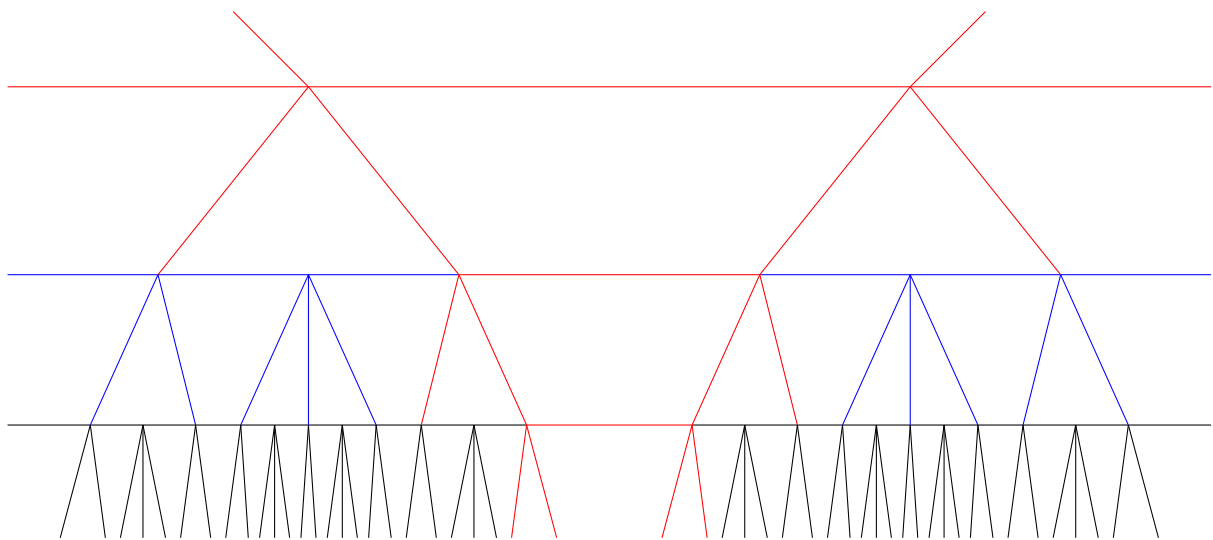

\subsection{Bulk and boundary $C^*$-algebras and isometric mapping}
In order to implement Theorem \ref{thm:maintheoremCstar} to the infinite-dimensional HaPPY code, we now define our setup more formally. Consider an infinite string of qubits indexed by the integers, and construct an infinite graph in the upper half-plane by piling up trapezes made of 2-clusters on top of each other, and constructing the graph underneath by adding 2-clusters on the bottom of the legs and 3-clusters in the middle of the 2-clusters. Associate to each bulk and boundary node the spin $1/2$ algebra $M_2(\mathbb{C})$. Define a $\star$-algebra $\mathcal{F}_{code}$ consisting of operators with finite support on the bulk graph, and a $\star$-algebra $\mathcal{F}_{phys}$ consisting of operators with finite support on the boundary string of qubits. Operator pushing defines an isometric map $\iota$ from $\mathcal{F}_{code}$ to $\mathcal{F}_{phys}$.

Taking the norm completion of $\mathcal{F}_{code}$ and $\mathcal{F}_{phys}$ gives the $C^*$-algebras $\mathcal{A}_{code}$ and $\mathcal{A}_{phys}$. As $\iota$ preserves the norm, it sends Cauchy sequences onto Cauchy sequences, which shows that it extends to an isometric $C^*$-homomorphism from $\mathcal{A}_{code}$ to $\mathcal{A}_{phys}$.\footnote{Note that this construction is the standard construction for uniformly hyperfinite (UHF) algebras.}

\subsection{Trapeze dynamics and KMS states}
We now need to define dynamics on our $C^*$-algebra, and prove that it admits KMS states. There is of course many possible choices, and we expect a lot of them to work. For the sake of convenience, we inspire our choice here from the dynamical model extensively studied in \cite{MonicaElliott}.

As we did similarly in \cite{MonicaElliott}, we construct the dynamics on the HaPPY code out of the trapeze Hamiltonian. For each trapeze shape in the bulk, we define the interaction term
\begin{align}
\prod_{i\in\mathrm{trapeze}}Z_i,
\end{align}
where $Z$ denotes the spin Pauli matrix. As shown on Figure \ref{fig:Trepeze1stBD}, a trapeze interaction maps nicely to the first boundary layer. It is then pushed to the boundary through a Cantor-like fractal pattern. In our reverse engineering setting, all bulk trapezes are pushed to the boundary after a finite number of operations. The result of the Appendix of \cite{MonicaElliott} shows that it is possible to map a carefully engineered trapeze Hamiltonian in the bulk to a boundary operator which only correlates finite families of boundary qubits together. The physical implications of this mapping are discussed in \cite{MonicaElliott}. We shall now prove that the obtained boundary Hamiltonian allows us to rigorously define a strongly continuous one-parameter group of isometries on $\mathcal{A}_{phys}$.

\begin{figure}[H]
\begin{center}
\begin{tikzpicture}
\draw (-4,0)--(-4,1);
\draw (4,0)--(4,1);
\draw (-8,0)--(8,0);
\draw (-8,-2.5)--(8,-2.5);
\draw (-4,0)--(4,0);
\draw (-6,-2.5)--(-4,0);
\draw (-4,0)--(-2,-2.5);
\draw (2,-2.5)--(4,0);
\draw (4,0)--(6,-2.5);
\draw (-6,-2.5)--(-4,0);
\draw (-4,0)--(-2,-2.5);
\draw (-6,-2.5)--(-6-0.9,-4.5);
\draw (-6,-2.5)--(-6+0.9,-4.5);
\draw (-2+0.9,-4.5)--(-2,-2.5);
\draw (-2-0.9,-4.5)--(-2,-2.5);
\draw (-4,-2.5)--(-4+0.9,-4.5);
\draw (-4,-2.5)--(-4,-4.5);
\draw (-4,-2.5)--(-4-0.9,-4.5);
\draw (6,-2.5)--(6+0.9,-4.5);
\draw (6,-2.5)--(6-0.9,-4.5);
\draw (2+0.9,-4.5)--(2,-2.5);
\draw (2-0.9,-4.5)--(2,-2.5);
\draw (4,-2.5)--(4+0.9,-4.5);
\draw (4,-2.5)--(4,-4.5);
\draw (4,-2.5)--(4-0.9,-4.5);
\node[draw,circle,thick,scale=1,black,label={[label distance=1mm]south:Z}] (C) at (-4,0) {};
\node[draw,circle,thick,scale=1,black,label={[label distance=1mm]south:Z}] (C) at (4,0) {};
\node[draw,circle,thick,scale=1,black,label={[label distance=1mm]north:Z}] (C) at (-6,-2.5) {};
\node[draw,circle,thick,scale=1,black,label={[label distance=1mm]north:Z}] (C) at (-2,-2.5) {};
\node[draw,circle,thick,scale=1,black,label={[label distance=1mm]north:Z}] (C) at (2,-2.5) {};
\node[draw,circle,thick,scale=1,black,label={[label distance=1mm]north:Z}] (C) at (6,-2.5) {};
\node[draw,circle,thick,scale=1,black,label={[label distance=1mm]north:Z}] (C) at (-4,-2.5) {};
\node[draw,circle,thick,scale=1,black,label={[label distance=1mm]north:Z}] (C) at (4,-2.5) {};
\node[draw=none,label={[label distance=0.5mm]south:YY}] (C) at (-5,-4.5) {};
\node[draw=none,label={[label distance=0.5mm]south:YY}] (C) at (5,-4.5) {};
\end{tikzpicture}
\caption{Pushing a trapeze operator to the first boundary layer.}
\label{fig:Trepeze1stBD}
\end{center}
\end{figure}
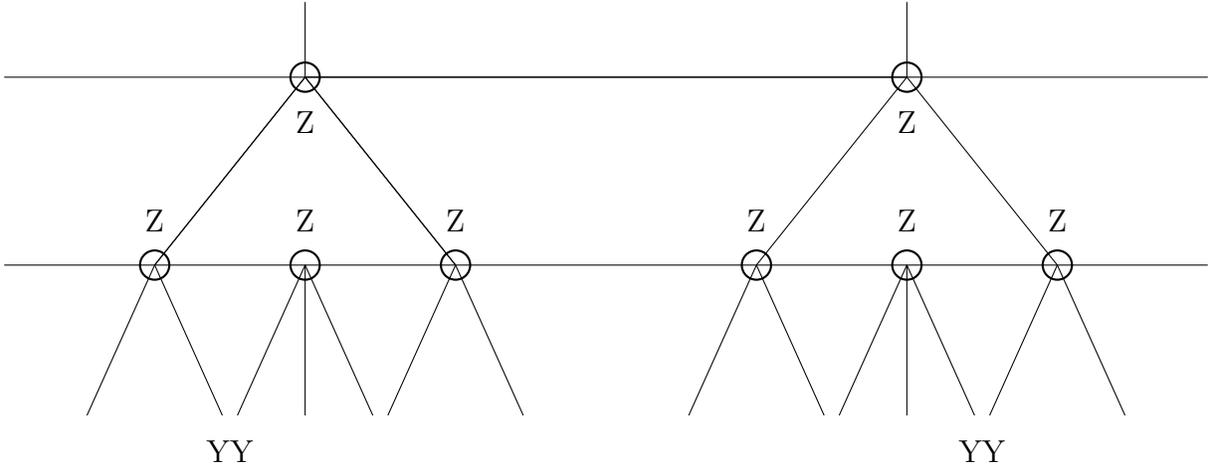

For $n\in\mathbb{N}$, we first define the truncated time evolution $\sigma_t^n$ on $\mathcal{F}_{phys}$ by 
\begin{align}
\sigma_t^n(A):=e^{iH_nt}Ae^{-iH_nt},
\end{align}
where $H_n$ is the operator made of all trapeze pushes with support in $\{-n,...,n\}$. Since it is proven in \cite[Appendix A]{MonicaElliott} that qubits in $\{-n,...,n\}$ are coupled to only a finite number of other qubits on the boundary by all the trapeze pushes, we deduce that for $A\in\mathcal{F}_{phys}$, the sequence $(\sigma_t^n(A))$ is stationary. It therefore makes sense to define on $\mathcal{F}_{phys}$ the time evolution
\begin{align}
\sigma_t(A):=\underset{n\rightarrow\infty}{\mathrm{lim}}\sigma_t^n(A).
\end{align}
This time evolution easily extends to $\mathcal{A}_{phys}$ as it maps Cauchy sequences to Cauchy sequences. Note that, as the time evolutions $\sigma_t^n$ are constructed out of Hamiltonians of the form $\iota(K_n)$, where  $K_n\in\mathcal{A}_{bulk}$, and the $\iota$ is the operator-pushing $C^*$-homomorphism, each $\sigma_t^n$ stabilizes $\iota(\mathcal{A}_{code})$, which implies by construction of $\sigma_t$ that $\sigma_t$ stabilizes $\iota(\mathcal{F}_{code})$, and therefore $\mathcal{A}_{code}$ by going to the limit, since $\iota$ is an isometry. Moreover, $\sigma_t$ is strongly (and even uniformly) continuous. Let $\varepsilon>0$ and for $A\in\mathcal{A}_{phys}$, let ${A_f}\in\mathcal{F}_{phys}$ such that 
\begin{align}
\|A-A_f\|\leq\frac{\varepsilon}{3}.
\end{align}
Then, there clearly exists $t_0>0$ such that for $0\leq t\leq t_0$,
\begin{align}
\|\sigma_t(A_f)-A_f\|\leq\frac{\varepsilon}{3}.
\end{align}
Then, we find that 
\begin{align}
\|\sigma_t(A)-A\|\leq\|\sigma_t(A)-\sigma_t(A_f)\|+\|\sigma_t(A_f)-A_f\|+\|A_f-A\|=\frac{3\varepsilon}{3}=\varepsilon.
\end{align}
We have defined a strongly continuous time evolution $\sigma_t$ on $\mathcal{A}_{phys}$. It now remains to show that for $\beta>0$, there exists a KMS state for $\sigma_t$ at inverse temperature $\beta$. Here, we give a simplified version of a standard argument from the theory of infinite-dimensional quantum spin systems, which can for instance be found in \cite{Pillet}.

For fixed inverse temperature $\beta$, one can consider the state on $\mathcal{F}_{phys}$ to be
\begin{align}
\omega_{n}(A):=\frac{\mathrm{Tr}_{n,A}(Ae^{-\beta H_n})}{\mathrm{Tr}_{n,A}(e^{-\beta H_n})},
\end{align}
where $H_n$ is defined as before and the trace $\mathrm{Tr}_{n,A}$ is taken over the union of the support of $H_n$ and the support of $A$. For $A\in\mathcal{F}_{phys}$, the sequence $(\omega_n(A))$ is stationary as the pushes of all trapeze operators only couple finite numbers of boundary qubits together. It therefore makes sense to define the state 
\begin{align}
\omega(A):=\underset{n\rightarrow\infty}{\mathrm{lim}}\omega_n(A)
\end{align}
for $A\in\mathcal{F}_{phys}$. As this sequence is stationary, this state can be identified with a Gibbs state when applied to any given $A,B$ in $\mathcal{F}_{phys}$, and will therefore satisfy the KMS condition on $\mathcal{F}_{phys}$. We are left with proving that this can be extended to a KMS state on the whole $\mathcal{A_{phys}}$, which requires complex analysis.

Let $A, B\in\mathcal{A}_{phys}$ and let $(A_n)$ and $(B_n)$ be sequences of $\mathcal{F}_{phys}$ which converge in norm towards $A$ and $B$ respectively. By the KMS condition, for all $n\in\mathbb{N}$, there exists a function $F_{A_n,B_n}$, analytic on the strip $\{0<\mathrm{Im}z<\beta\}$ and continuous on its closure, such that
\begin{align}
F_{A_n,B_n}(t)=\omega(A_n\sigma_t(B_n))\quad\text{and}\quad F_{A_n,B_n}(t+i\beta)=\omega(\sigma_t(A_n)B_n).
\end{align}
By using Hadamard's three-line theorem \cite{Hadamard}, we find the bound to be
\begin{align}
\underset{\mathrm{Im}z\in [0,\beta]}{\mathrm{sup}}|F_{A_n,B_n}(z)|\leq\|A_n\|\|B_n\|.
\end{align}
Then we notice that 
\begin{align}
F_{A_n,B_n}(z)-F_{A_m,B_m}(z)=F_{A_n-A_m,B_n}(z)+F_{A_n,B_n-B_m}(z)
\end{align}
and that the sequences $(A_n)$ and $(B_n)$ are Cauchy, the former bound allows us to conclude that the sequence $(F_{A_n,B_n})$ is uniformly convergent in the strip $\{\mathrm{Im}z\in [0,\beta]\}$. Its uniform limit is holomorphic on the open strip, continuous on the closed strip, and satisfies the KMS conditions at inverse temperature $\beta$ for the state $\omega$, which ends the proof that $\omega$ is a KMS state on $\mathcal{A}_{phys}$ for the time evolution $\sigma_t$.

\subsection{The holographic HaPPY wormhole}
We have constructed two $C^*$-algebras $\mathcal{A}_{code}$ and $\mathcal{A}_{phys}$, and an isometric map $\iota$ between them. Moreover, for the strongly continuous one-parameter group of isometries $\sigma_t$ that we constructed on $\mathcal{A}_{phys}$, $\iota(\mathcal{A}_{code})$ is stabilized and there exists a KMS state at every inverse temperature. We can therefore use our main result to conclude that there exist Hilbert space representations of $\mathcal{A}_{phys}$ and $\mathcal{A}_{code}$, and a Hilbert space isometry between them, which satisfies bulk reconstruction, conserves relative entropy and preserves modular flow.

This proof shows the power of our result for holographic error correcting codes which, like the HaPPY code, are better-suited to operator pushing than state pushing. Through only the construction of the $C^*$-algebras of observables and reasonable dynamics, we were able to construct thermofield double Hilbert spaces, show bulk reconstruction at the level of the states, and prove the conservation of relative entropies and modular flow between the bulk and the boundary. Moreover, Tomita-Takesaki theory provides us with a copy of the involved von Neumann algebras, which are interpreted as local algebras on the other side of the spacelike slice represented by our system. We have constructed an infinite-dimensional HaPPY wormhole! 

Figure \ref{fig:HaPPYwormhole} shows the two-sided holographic HaPPY wormhole that we constructed. One of the sides is acted on by the von Neumann algebras $M_{code}$ and $M_{phys}$, which respectively correspond to bulk and boundary quasi-local observables. The other side is acted on by the commutant algebras $M^\prime_{code}$ and $M^\prime_{phys}$, defined with respect to the Hilbert spaces $\mathcal{H}_{code}$ and $\mathcal{H}_{phys}$ through Tomita-Takesaki theory. Our wormhole satisfies all expected properties of entanglement wedge reconstruction.

As we discussed before, this construction never really reaches the black hole interior, from either side of the wormhole. However, it would be interesting to see if the Papadodimas-Raju proposal can be applied to approximately construct a black hole interior, by acting on both boundaries at the same time. \\

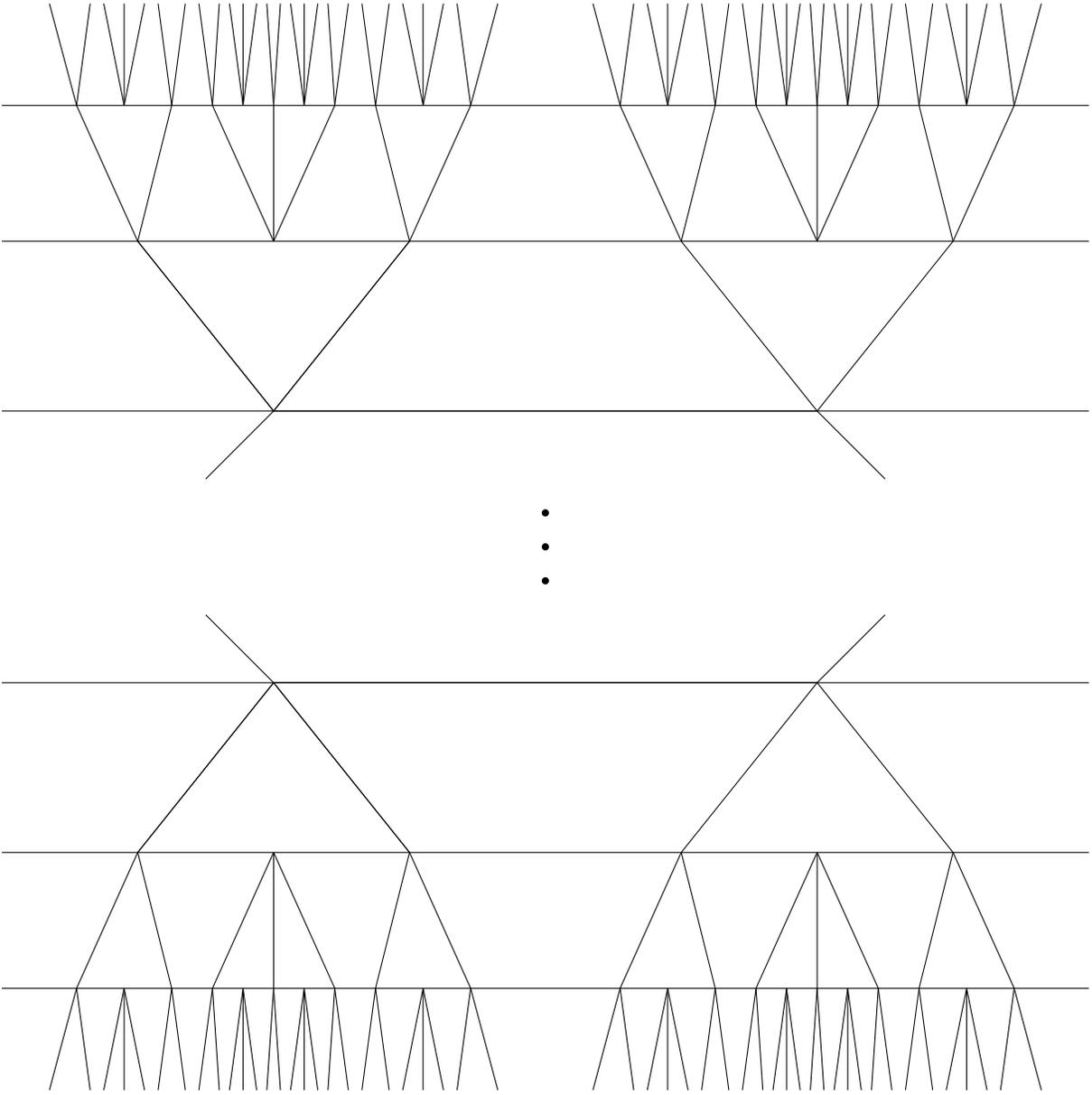
\begin{figure}[H]
\begin{center}
\begin{tikzpicture}
\draw (-8,0)--(8,0);
\draw (-8,-2.5)--(8,-2.5);
\draw (-8,-4.5)--(8,-4.5);
\draw (-4,0)--(4,0);
\draw (-6,-2.5)--(-4,0);
\draw (-4,0)--(-2,-2.5);
\draw (2,-2.5)--(4,0);
\draw (4,0)--(6,-2.5);
\draw (-6,-2.5)--(-4,0);
\draw (-4,0)--(-2,-2.5);
\draw (-6,-2.5)--(-6-0.9,-4.5);
\draw (-6,-2.5)--(-6+0.5,-4.5);
\draw (-2+0.9,-4.5)--(-2,-2.5);
\draw (-2-0.5,-4.5)--(-2,-2.5);
\draw (-4,-2.5)--(-4+0.9,-4.5);
\draw (-4,-2.5)--(-4,-4.5);
\draw (-4,-2.5)--(-4-0.9,-4.5);
\draw (6,-2.5)--(6+0.9,-4.5);
\draw (6,-2.5)--(6-0.5,-4.5);
\draw (2+0.5,-4.5)--(2,-2.5);
\draw (2-0.9,-4.5)--(2,-2.5);
\draw (4,-2.5)--(4+0.9,-4.5);
\draw (4,-2.5)--(4,-4.5);
\draw (4,-2.5)--(4-0.9,-4.5);
\draw (-6-0.9-0.4,-6)--(-6-0.9,-4.5);
\draw (-6-0.9+0.2,-6)--(-6-0.9,-4.5);
\draw (-6-0.2-0.3,-6)--(-6-0.2,-4.5);
\draw (-6-0.2,-6)--(-6-0.2,-4.5);
\draw (-6-0.2+0.3,-6)--(-6-0.2,-4.5);
\draw (-6+0.5-0.2,-6)--(-6+0.5,-4.5);
\draw (-6+0.5+0.2,-6)--(-6+0.5,-4.5);
\draw (-4-0.9-0.2,-6)--(-4-0.9,-4.5);
\draw (-4-0.9+0.1,-6)--(-4-0.9,-4.5);
\draw (-4-0.65,-6)--(-4-0.45,-4.5);
\draw (-4-0.45,-6)--(-4-0.45,-4.5);
\draw (-4-0.25,-6)--(-4-0.45,-4.5);
\draw (-4-0.1,-6)--(-4,-4.5);
\draw (-4+0.1,-6)--(-4,-4.5);
\draw (-4+0.65,-6)--(-4+0.45,-4.5);
\draw (-4+0.45,-6)--(-4+0.45,-4.5);
\draw (-4+0.25,-6)--(-4+0.45,-4.5);
\draw (-4+0.9-0.1,-6)--(-4+0.9,-4.5);
\draw (-4+0.9+0.2,-6)--(-4+0.9,-4.5);
\draw (-2-0.5-0.2,-6)--(-2-0.5,-4.5);
\draw (-2-0.5+0.2,-6)--(-2-0.5,-4.5);
\draw (-2+0.2-0.3,-6)--(-2+0.2,-4.5);
\draw (-2+0.2,-6)--(-2+0.2,-4.5);
\draw (-2+0.2+0.3,-6)--(-2+0.2,-4.5);
\draw (-2+0.9-0.2,-6)--(-2+0.9,-4.5);
\draw (-2+0.9+0.4,-6)--(-2+0.9,-4.5);
\draw (8-6-0.9-0.4,-6)--(8-6-0.9,-4.5);
\draw (8-6-0.9+0.2,-6)--(8-6-0.9,-4.5);
\draw (8-6-0.2-0.3,-6)--(8-6-0.2,-4.5);
\draw (8-6-0.2,-6)--(8-6-0.2,-4.5);
\draw (8-6-0.2+0.3,-6)--(8-6-0.2,-4.5);
\draw (8-6+0.5-0.2,-6)--(8-6+0.5,-4.5);
\draw (8-6+0.5+0.2,-6)--(8-6+0.5,-4.5);
\draw (8-4-0.9-0.2,-6)--(8-4-0.9,-4.5);
\draw (8-4-0.9+0.1,-6)--(8-4-0.9,-4.5);
\draw (8-4-0.65,-6)--(8-4-0.45,-4.5);
\draw (8-4-0.45,-6)--(8-4-0.45,-4.5);
\draw (8-4-0.25,-6)--(8-4-0.45,-4.5);
\draw (8-4-0.1,-6)--(8-4,-4.5);
\draw (8-4+0.1,-6)--(8-4,-4.5);
\draw (8-4+0.65,-6)--(8-4+0.45,-4.5);
\draw (8-4+0.45,-6)--(8-4+0.45,-4.5);
\draw (8-4+0.25,-6)--(8-4+0.45,-4.5);
\draw (8-4+0.9-0.1,-6)--(8-4+0.9,-4.5);
\draw (8-4+0.9+0.2,-6)--(8-4+0.9,-4.5);
\draw (8-2-0.5-0.2,-6)--(8-2-0.5,-4.5);
\draw (8-2-0.5+0.2,-6)--(8-2-0.5,-4.5);
\draw (8-2+0.2-0.3,-6)--(8-2+0.2,-4.5);
\draw (8-2+0.2,-6)--(8-2+0.2,-4.5);
\draw (8-2+0.2+0.3,-6)--(8-2+0.2,-4.5);
\draw (8-2+0.9-0.2,-6)--(8-2+0.9,-4.5);
\draw (8-2+0.9+0.4,-6)--(8-2+0.9,-4.5);
\draw (-4,4)--(-5,3);
\draw (4,4)--(5,3);
\draw (-4,0)--(-5,1);
\draw (4,0)--(5,1);
\draw (-8,4)--(8,4);
\draw (-8,6.5)--(8,6.5);
\draw (-8,8.5)--(8,8.5);
\draw (-4,4)--(4,4);
\draw (-6,6.5)--(-4,4);
\draw (-4,4)--(-2,6.5);
\draw (2,6.5)--(4,4);
\draw (4,4)--(6,6.5);
\draw (-6,6.5)--(-4,4);
\draw (-4,4)--(-2,6.5);
\draw (-6,6.5)--(-6-0.9,8.5);
\draw (-6,6.5)--(-6+0.5,8.5);
\draw (-2+0.9,8.5)--(-2,6.5);
\draw (-2-0.5,8.5)--(-2,6.5);
\draw (-4,6.5)--(-4+0.9,8.5);
\draw (-4,6.5)--(-4,8.5);
\draw (-4,6.5)--(-4-0.9,8.5);
\draw (6,6.5)--(6+0.9,8.5);
\draw (6,6.5)--(6-0.5,8.5);
\draw (2+0.5,8.5)--(2,6.5);
\draw (2-0.9,8.5)--(2,6.5);
\draw (4,6.5)--(4+0.9,8.5);
\draw (4,6.5)--(4,8.5);
\draw (4,6.5)--(4-0.9,8.5);
\draw (-6-0.9-0.4,10)--(-6-0.9,8.5);
\draw (-6-0.9+0.2,10)--(-6-0.9,8.5);
\draw (-6-0.2-0.3,10)--(-6-0.2,8.5);
\draw (-6-0.2,10)--(-6-0.2,8.5);
\draw (-6-0.2+0.3,10)--(-6-0.2,8.5);
\draw (-6+0.5-0.2,10)--(-6+0.5,8.5);
\draw (-6+0.5+0.2,10)--(-6+0.5,8.5);
\draw (-4-0.9-0.2,10)--(-4-0.9,8.5);
\draw (-4-0.9+0.1,10)--(-4-0.9,8.5);
\draw (-4-0.65,10)--(-4-0.45,8.5);
\draw (-4-0.45,10)--(-4-0.45,8.5);
\draw (-4-0.25,10)--(-4-0.45,8.5);
\draw (-4-0.1,10)--(-4,8.5);
\draw (-4+0.1,10)--(-4,8.5);
\draw (-4+0.65,10)--(-4+0.45,8.5);
\draw (-4+0.45,10)--(-4+0.45,8.5);
\draw (-4+0.25,10)--(-4+0.45,8.5);
\draw (-4+0.9-0.1,10)--(-4+0.9,8.5);
\draw (-4+0.9+0.2,10)--(-4+0.9,8.5);
\draw (-2-0.5-0.2,10)--(-2-0.5,8.5);
\draw (-2-0.5+0.2,10)--(-2-0.5,8.5);
\draw (-2+0.2-0.3,10)--(-2+0.2,8.5);
\draw (-2+0.2,10)--(-2+0.2,8.5);
\draw (-2+0.2+0.3,10)--(-2+0.2,8.5);
\draw (-2+0.9-0.2,10)--(-2+0.9,8.5);
\draw (-2+0.9+0.4,10)--(-2+0.9,8.5);
\draw (8-6-0.9-0.4,10)--(8-6-0.9,8.5);
\draw (8-6-0.9+0.2,10)--(8-6-0.9,8.5);
\draw (8-6-0.2-0.3,10)--(8-6-0.2,8.5);
\draw (8-6-0.2,10)--(8-6-0.2,8.5);
\draw (8-6-0.2+0.3,10)--(8-6-0.2,8.5);
\draw (8-6+0.5-0.2,10)--(8-6+0.5,8.5);
\draw (8-6+0.5+0.2,10)--(8-6+0.5,8.5);
\draw (8-4-0.9-0.2,10)--(8-4-0.9,8.5);
\draw (8-4-0.9+0.1,10)--(8-4-0.9,8.5);
\draw (8-4-0.65,10)--(8-4-0.45,8.5);
\draw (8-4-0.45,10)--(8-4-0.45,8.5);
\draw (8-4-0.25,10)--(8-4-0.45,8.5);
\draw (8-4-0.1,10)--(8-4,8.5);
\draw (8-4+0.1,10)--(8-4,8.5);
\draw (8-4+0.65,10)--(8-4+0.45,8.5);
\draw (8-4+0.45,10)--(8-4+0.45,8.5);
\draw (8-4+0.25,10)--(8-4+0.45,8.5);
\draw (8-4+0.9-0.1,10)--(8-4+0.9,8.5);
\draw (8-4+0.9+0.2,10)--(8-4+0.9,8.5);
\draw (8-2-0.5-0.2,10)--(8-2-0.5,8.5);
\draw (8-2-0.5+0.2,10)--(8-2-0.5,8.5);
\draw (8-2+0.2-0.3,10)--(8-2+0.2,8.5);
\draw (8-2+0.2,10)--(8-2+0.2,8.5);
\draw (8-2+0.2+0.3,10)--(8-2+0.2,8.5);
\draw (8-2+0.9-0.2,10)--(8-2+0.9,8.5);
\draw (8-2+0.9+0.4,10)--(8-2+0.9,8.5);
\node[draw,circle,thick,scale=0.2,black,fill=black] (L1) at (0,1.5) {};
\node[draw,circle,thick,scale=0.2,black,fill=black] (L1) at (0,2) {};
\node[draw,circle,thick,scale=0.2,black,fill=black] (L1) at (0,2.5) {};
\end{tikzpicture}
\vspace{5mm}
\caption{The holographic HaPPY wormhole with its two infinite boundaries.}
\label{fig:HaPPYwormhole}
\end{center}
\end{figure}

\section{Discussion} \label{sec:discussion}
In this paper, we introduced a new result on holographic quantum error correction in infinite dimensions. With the only given data of a bulk-to-boundary isometric $C^*$-homomorphism, and dynamics on the boundary for which a KMS state can be defined, we constructed Hilbert space representations of the bulk and boundary algebras for which a bulk-to-boundary map can be constructed at the level of the states, and relative entropies and modular time are conserved between the bulk and the boundary. Our construction is very close in spirit to the thermofield double construction in AdS/CFT, and should be seen as the construction of the other boundary of a wormhole. In particular, it clarifies the claims of Papadodimas and Raju in \cite{Papadodimas:2013wnh,Papadodimas:2013jku,Papadodimas:2012aq}. We showed that our theorem can be applied to construct an infinite-dimensional HaPPY wormhole. Its main strength is that it is adapted to systems defined in terms of operator pushing rather than state pushing. In particular, it is very well-suited to the HaPPY code and other quantum codes based on stabilizers.

In the case that we considered of an eternal black hole, a thermofield double state can be interpreted as the other side of a wormhole. Such a picture led to some new thoughts on the black hole information paradox, in particular to the ER=EPR conjecture \cite{Maldacena:2013xja}. In such a picture, the other side of the wormhole is assimilated to the early radiation of an evaporating black hole. It would be nice to see if our GNS technology can lead to a more rigorous understanding of such statements, and to what extent it can help solve problems related to entanglement monogamy like the firewall paradox.

Another important point is that our construction is state-dependent, in the sense that the Hilbert spaces we will get can be very different depending on the KMS state we choose. This is an intrinsically infinite-dimensional subtlety, as KMS states are always unique for finite-dimensional quantum systems. In the presence of more than one KMS state, i.e. of a broken symmetry, one can construct inequivalent Hilbert space representations which will satisfy bulk reconstruction. It would then be interesting to study thermal ensembles of such representations of the boundary algebra, and to study to what extent they can be assimilated to a superposition of different geometries. We expect this question to give insight into some ensemble interpretations of the black hole information paradox, such as the Engelhardt-Wall construction \cite{Engelhardt:2018kcs,Engelhardt:2017aux} or $\alpha$-bits \cite{Hayden:2018khn}. An interesting tool from algebraic quantum field theory to use in such a context is the theory of superselection sectors. We hope to return to this problem in future work.

Finally, even if we expect Hilbert spaces and modular operators to play a key role in a full theory of quantum gravity due to the state-dependence we have discussed, it would be nice to know if in the semiclassical regime, there is a way to equate bulk and boundary relative entropies directly at the level of the $C^*$-algebras of state-independent observables. Indeed, there exists a more general definition of relative entropy for states on $C^*$-algebras which does not require von Neumann algebras and modular theory, and therefore is completely state-independent. If $\varphi$ and $\psi$ are positive linear functionals on a $C^*$-algebra, one can define their relative entropy by:
\begin{align}
\mathcal{S}(\varphi,\psi)=\underset{n\in\mathbb{N}}{\sup{}}\;\underset{x}{\sup{}}\left\{\varphi(1)\log{n}-\int_{\frac{1}{n}}^\infty(\varphi(y(t)^\dagger y(t))+t^{-1}\psi(x(t)x(t)^\dagger))\frac{dt}{t}\right\},
\end{align}
where the supremum is taken over stepfunctions $x$ with values in the $C^*$-algebra with finite range, and where $y(t)=1-x(t)$. It would be interesting to know if the conservation of relative entropy with this definition between the bulk and the boundary can be proven directly from bulk reconstruction at the level of $C^*$-algebras, and even whether a converse statement could be true.

\section*{Acknowledgments}
The authors are grateful to Ivan Burbano and Matilde Marcolli for discussions and Temple He and Craig Lawrie for helpful comments on this paper. M.J.K. is supported by a Sherman Fairchild Postdoctoral Fellowship. This material is based upon work supported by the U.S. Department of Energy, Office of Science, Office of High Energy Physics, under Award Number DE-SC0011632. 
E.G. is funded by ENS Paris and would like to thank Matilde Marcolli for her guidance and constant support.
Research at Perimeter Institute is supported in part by the Government of Canada through the Department of Innovation, Science and Economic Development Canada and by the Province of Ontario through the Ministry of Colleges and Universities.


\begin{thebibliography}{10}
\bibitem{Alaoglu}
L.~Alaoglu, ``Weak topologies of normed linear spaces," 
Annals of Mathematics (1940): 252-267.

\bibitem{Almheiri:2014lwa}
  A.~Almheiri, X.~Dong and D.~Harlow,
  ``Bulk Locality and Quantum Error Correction in AdS/CFT,''
{\em  JHEP} {\bf 1504}, 163 (2015). 

\bibitem{Araki}
H.~Araki, ``Relative Entropy of States of von Neumann Algebras,'' \emph{Publ. RIMS, Kyoto Univ.} 11.
(1976) 809.

\bibitem{Arveson}
W.~Arveson, {\em An invitation to $C^*$-algebras}. Vol. 39. Springer Science \& Business Media, 2012.

\bibitem{Banach}
S.~Banach, \emph{Théorie des opérations linéaires}, 1932.

\bibitem{bratteli}
O.~Bratteli. ``Inductive limits of finite dimensional $C^*$-algebras," Transactions of the American Mathematical Society 171 (1972): 195-234.

\bibitem{BratteliRobinson}
O.~Bratteli and D.~W.~Robinson, \emph{Operator Algebras and Quantum Statistical Mechanics I.} Springer-Verlag Berlin Heidelberg, 1987.

\bibitem{Casini2008}
H.~Casini, ``Relative entropy and the Bekenstein bound,'' \emph{Class. Quant. Grav.} {25 \bf} 205021 (2008). 

\bibitem{CasiniHuerta2015}
H.~Casini, M.~Huerta, R.~C.~Myers, A.~Yale, ``Mutual information and the F-theorem,'' \emph{JHEP} {\bf 10 } (2015) 003. 

 \bibitem{casinitestetorroba2016}
 H~.Casini, E.~Teste, G.~Torroba, ``Relative entropy and the RG flow,'' \emph{JHEP} {\bf 2017}, 89 (2017). 

\bibitem{Casini2006}
H.~Casini and M.~Huerta, ``A c-theorem for the entanglement entropy,'' \emph{J. Phys. A: Math. Theor.} {\bf 40} 7031 (2007). 

\bibitem{Connes}
A.~Connes. ``Classification of injective factors Cases II$_1$, II$_\infty$, III$_\lambda$, $\lambda\neq 1$." Annals of Mathematics (1976): 73-115.

\bibitem{Halvorson}
H.~Halvorson and M.~Muger, 
``Algebraic quantum field theory,"
arXiv:0602036 [math-ph].

\bibitem{DongHarlowWall}
X.~Dong, D.~Harlow, and A.~C.~Wall, ``Reconstruction of Bulk Operators within the Entanglement Wedge in Gauge-Gravity Duality,'' \emph{ Phys. Rev. Lett.} {\bf 117} 021601 (2016). 

\bibitem{elliott}
G.~A.~Elliott, ``On the classification of inductive limits of sequences of semisimple finite-dimensional algebras." Journal of Algebra 38.1 (1976): 29-44.

\bibitem{Engelhardt:2018kcs} 
  N.~Engelhardt and A.~C.~Wall,
  ``Coarse Graining Holographic Black Holes,''
  JHEP {\bf 1905}, 160 (2019)

\bibitem{Engelhardt:2017aux} 
  N.~Engelhardt and A.~C.~Wall,
  ``Decoding the Apparent Horizon: Coarse-Grained Holographic Entropy,''
  Phys.\ Rev.\ Lett.\  {\bf 121}, no. 21, 211301 (2018)

\bibitem{MonicaElliott}
E.~Gesteau and M.~J.~Kang,
``The infinite-dimensional HaPPY code: entanglement wedge reconstruction and dynamics,''
arXiv:2005.05971 [hep-th].

\bibitem{Haag}
R.~Haag. \emph{Local Quantum Physics.} Springer-Verlag, 1992.

\bibitem{Hadamard}
J.~Hadamard ``Sur les fonctions enti\`eres," Bull. Soc. Math. France 24 (1896): 186-187.
  
\bibitem{Hamilton2005}
A.~Hamilton, D.~Kabat, G.~Lifschytz, D.~A.~Lowe, ``Local bulk operators in AdS/CFT: a boundary view of horizons and locality,'' {\em Phys. Rev. D} {\bf 73}, 086003 (2006) 

\bibitem{Hamilton:2006az} 
A.~Hamilton, D.~N.~Kabat, G.~Lifschytz and D.~A.~Lowe,
``Holographic representation of local bulk operators,''
{\em Phys. Rev. D} {\bf 74}, 066009 (2006). 

\bibitem{Harlow:2018fse} 
  D.~Harlow, ``TASI Lectures on the Emergence of the Bulk in AdS/CFT,''
  arXiv:1802.01040 [hep-th].
  
\bibitem{Harlow:2016vwg}
  D.~Harlow, ``The Ryu--Takayanagi Formula from Quantum Error Correction,'' \emph{Commun. Math. Phys.} {\bf 354} (2017) 865. 
  
\bibitem{Hartman:2013qma} 
  T.~Hartman and J.~Maldacena,
  ``Time Evolution of Entanglement Entropy from Black Hole Interiors,''
  JHEP {\bf 1305}, 014 (2013)
  
\bibitem{Hayden:2018khn} 
  P.~Hayden and G.~Penington,
  ``Learning the Alpha-bits of Black Holes,''
  JHEP {\bf 1912}, 007 (2019)

\bibitem{albion}
M.~Headrick, V.~E.~Hubeny, A.~Lawrence, M.~Rangamani, ``Causality \& holographic entanglement entropy,'' {\em JHEP} {\bf 12} (2014), 162. 

\bibitem{Jafferis:2015del} 
  D.~L.~Jafferis, A.~Lewkowycz, J.~Maldacena and S.~J.~Suh,
  ``Relative entropy equals bulk relative entropy,''
  \emph{JHEP} {\bf 1606}, 004 (2016). 
  
\bibitem{Ji:2019}
Z.~Ji, A.~Natarajan, T.~Vidick, J.~Wright, H.~Yuen,
 ``MIP*= RE," 
 arXiv:2001.04383 [quant-ph].
 
\bibitem{Jones-vNalg}
  V.~Jones,
  ``Von Neumann Algebras,''
  https://math.vanderbilt.edu/jonesvf/ .

\bibitem{Kang:2018xqy} 
  M.~J.~Kang and D.~K.~Kolchmeyer,
  ``Holographic Relative Entropy in Infinite-dimensional Hilbert Spaces,''
  arXiv:1811.05482 [hep-th].

\bibitem{Kang:2019dfi} 
  M.~J.~Kang and D.~K.~Kolchmeyer,
  ``Entanglement Wedge Reconstruction of Infinite-dimensional von Neumann Algebras using Tensor Networks,''
  arXiv:1910.06328 [hep-th].
  
\bibitem{Kaplansky}
 I.~Kaplansky, ``A theorem on rings of operators," 
 Pacific Journal of Mathematics 1.2 (1951): 227-232.

\bibitem{KitaevPreskill}
A.~Kitaev and J.~Preskill, ``Topological entanglement entropy,'' \emph{Phys. Rev. Lett.} {\bf 96} 110404 (2006). 

\bibitem{Klebanov2007}
I.~R.~Klebanov, D.~Kutasov, A.~Murugan, ``Entanglement as a Probe of Confinement,'' \emph{Nucl. Phys. B} {\bf 796} (2008) 274. 

\bibitem{Wen2006}
M.~Levin and X.-G.~Wen, ``Detecting Topological Order in a Ground State Wave Function,''
\emph{Phys. Rev. Lett.} { \bf 96} 110405 (2006). 

\bibitem{Maldacena:2013xja} 
  J.~Maldacena and L.~Susskind,
  ``Cool horizons for entangled black holes,''
  Fortsch.\ Phys.\  {\bf 61}, 781 (2013).

\bibitem{Maldacena:1997re} 
  J.~M.~Maldacena,
  ``The Large N limit of superconformal field theories and supergravity,''
  Int.\ J.\ Theor.\ Phys.\  {\bf 38}, 1113 (1999)  [Adv.\ Theor.\ Math.\ Phys.\  {\bf 2}, 231 (1998)].

\bibitem{Osborne:2017woa} 
  T.~J.~Osborne and D.~E.~Stiegemann,
  ``Dynamics for holographic codes,''
  arXiv:1706.08823 [quant-ph].

\bibitem{Papadodimas:2013wnh} 
  K.~Papadodimas and S.~Raju,
  ``Black Hole Interior in the Holographic Correspondence and the Information Paradox,''
  Phys.\ Rev.\ Lett.\  {\bf 112}, no. 5, 051301 (2014).
  
\bibitem{Papadodimas:2013jku} 
  K.~Papadodimas and S.~Raju,
  ``State-Dependent Bulk-Boundary Maps and Black Hole Complementarity,''
  Phys.\ Rev.\ D {\bf 89}, no. 8, 086010 (2014).

\bibitem{Papadodimas:2012aq} 
  K.~Papadodimas and S.~Raju,
  ``An Infalling Observer in AdS/CFT,''
  JHEP {\bf 1310}, 212 (2013).
  
\bibitem{Pastawski:2015qua} 
  F.~Pastawski, B.~Yoshida, D.~Harlow and J.~Preskill,
  ``Holographic quantum error-correcting codes: Toy models for the bulk/boundary correspondence,''
  \emph{JHEP} {\bf 1506}, 149 (2015).
  
\bibitem{Pillet}
 C.~Pillet, ``Quantum dynamical systems," 
 {\em Open Quantum Systems I.} 
 Springer, Berlin, Heidelberg, 2006. 107-182.
  
\bibitem{RT2006}
S.~Ryu and T.~Takayanagi, ``Holographic Derivation of Entanglement Entropy from AdS/CFT,'' \emph{Phys. Rev. Lett.} {\bf 96} 181602 (2006). 

\bibitem{Takesaki}
M.~Takesaki. \emph{Theory of Operator Algebras I.} New York: Springer-Verlag, 1979.

\bibitem{Tomita-Takesaki}
M.~Takesaki. \emph{Tomita's theory of modular Hilbert algebras and its application.} Berlin: Springer 1970.

\bibitem{Witten:2018zxz} 
  E.~Witten,
  ``Notes on Some Entanglement Properties of Quantum Field Theory,''
   Rev.\ Mod.\ Phys.\  {\bf 90}, no. 4, 045003 (2018).

\end{thebibliography}
\end{document}